\documentclass[11pt]{article}
\usepackage[utf8]{inputenc}
\usepackage[hidelinks]{hyperref}
\usepackage{amsmath}
\usepackage{amsthm}
\usepackage{amssymb}
\usepackage{authblk}
\usepackage{cite}
\usepackage{lipsum}
\usepackage{geometry}
\usepackage{enumerate}
\usepackage{array}
\usepackage{multicol}
\usepackage{multirow}
\usepackage{xcolor}
\usepackage{kotex}
\usepackage{makecell}
\usepackage{booktabs}
\newtheorem{theorem}{Theorem}[section]
\numberwithin{theorem}{section}
\newtheorem{lemma}[theorem]{Lemma}
\newtheorem{defi}[theorem]{Definition}
\newtheorem{corollary}[theorem]{Corollary}
\newtheorem{prop}[theorem]{Proposition}
\newtheorem{remark}[theorem]{Remark}

\newcommand{\Max}{\displaystyle \max}
\newcommand{\Sum}{\displaystyle \sum}
\newcommand{\Frac}{\displaystyle \frac}
\newcommand{\F}{\mathbb F}

\newcommand{\Fcn}{\mathbb{F}_{3^n}}

\newcommand{\Fp}{\mathbb{F}_{p}}

\newcommand{\Fpn}{\mathbb{F}_{p^n}}

\newcommand{\Fcnmul}{\mathbb{F}_{3^n}^*}
\newcommand{\Fpnmul}{\mathbb{F}_{p^n}^*}

\begin{document}

\title{On APN Exponents and the Differential and Boomerang Properties of Binomials in Characteristic 3}
\author{ Namhun Koo$^1$, Soonhak Kwon$^{2,3}$, Minwoo Ko$^2$, Byunguk Kim$^2$\\
	\small{\texttt{ Email: komaton@skku.edu, shkwon@skku.edu, minwoo1403@skku.edu, kbu0923@g.skku.edu}}\\
	\small{$^1$Institute of Basic Science, Sungkyunkwan University, Suwon, Korea}\\
	\small{$^2$Department of Mathematics, Sungkyunkwan University, Suwon, Korea}\\
	\small{$^3$Applied Algebra and Optimization Research Center, Sungkyunkwan University, Suwon, Korea}
}

\maketitle

\begin{abstract}
		Recent studies on binomials of the form $F_r(x) = x^r(1 + \chi(x))$ over $\Fpn$ have shown that these functions can exhibit very low boomerang uniformity. In this paper, we focus on the specific behavior of such binomials in characteristic $3$, where instances of extremely low boomerang uniformity—namely $0$ or $1$—seem to arise more frequently than in other characteristics.
		First, we provide a systematic analysis of Almost Perfect Nonlinear (APN) power functions in characteristic $3$. We present an explicit parametrization of APN exponents arising from the construction of Zha and Wang and demonstrate through numerical results for $n \le 13$ that this generalized framework accounts for several previously known and sporadic APN instances. Building on this classification, we identify and rigorously prove two classes of binomials $F_r$ that are locally-PN and possess the minimum possible boomerang uniformity of $0$. These classes involve exponents derived from the aforementioned APN construction and the differentially 4-uniform exponent $r = 2 \cdot 3^{\frac{n-1}{2}} + 1$. Furthermore, we analyze the binomial $F_r$ with $r = 3^n - 3$, proving that it is locally-APN with boomerang uniformity $1$ when $n\ge 5$ is odd, and completely determine its boomerang spectrum through the evaluation of character sums. Our results clarify and extend existing studies on the cryptographic properties of binomials, providing a systematic characterization of several classes of binomials with very low boomerang uniformity in characteristic $3$.
	
	\bigskip
	\noindent \textbf{Keywords.} APN Power Functions, Differential Uniformity, Locally-PN Functions, Boomerang Uniformity,  Boomerang Spectrum
	
	\bigskip
	\noindent \textbf{Mathematics Subject Classification (2020)}: 94A60, 06E30
\end{abstract}

\section{Introduction}

Let $p$ be an odd prime and $n$ a positive integer. We denote by $\Fpn$ the finite field with $p^n$ elements and by $\Fpnmul=\Fpn\setminus\{0\}$ its multiplicative group.
Differential and boomerang properties of functions over finite fields play a central role in the design of cryptographic primitives, as they measure resistance to differential and boomerang attacks. In particular, differential uniformity and boomerang uniformity have been extensively studied as fundamental criteria for evaluating the security of S-boxes.
Differential uniformity was introduced by Nyberg~\cite{Nyb94} and is defined as follows.

\begin{defi}
	Let $F:\Fpn \to \Fpn$ be a function. For $a\in \Fpnmul$ and $b\in \Fpn$, define $\delta_F(a,b)$ by the number of solutions of $F(x+a)-F(x)=b$. Then, the \textbf{differential uniformity} of $F$ is defined by:
	\begin{equation*}
		\delta_F = \Max_{a\in \Fpnmul, b\in \Fpn}\delta_F(a,b).
	\end{equation*} 
\end{defi}

A function $F$ is called \textbf{perfect nonlinear (PN)} if $\delta_F=1$, and \textbf{almost perfect nonlinear (APN)} if $\delta_F=2$.

While differential uniformity captures resistance against differential attacks, it does not fully reflect resistance to more advanced attacks such as boomerang attacks. To address this limitation, the notion of boomerang uniformity was originally introduced for permutations by Cid et al.~\cite{CHP+18}. This notion was later extended to general (not necessarily bijective) functions by Li et al.~\cite{LQSL19}, leading to the following definition.

\begin{defi}
	Let $F:\Fpn \to \Fpn$ be a function. For $a,b \in \Fpnmul$, define $\beta_F(a,b)$ as the number of pairs $(x,y)\in \Fpn^2$ satisfying:
	\begin{equation*}
		\begin{cases}
			F(x)-F(y)=b,\\
			F(x+a)-F(y+a)=b.
		\end{cases}
	\end{equation*}
	Then the \textbf{boomerang uniformity} of $F$ is given by:
	\begin{equation*}
		\beta_F = \Max_{a,b\in \Fpnmul}\beta_F(a,b).
	\end{equation*} 
\end{defi}

While differential uniformity has been extensively studied for general functions, much of the work on differential spectrum and boomerang uniformity has focused on power functions. 
Functions of the form $F_{r,u}(x)=x^r(1+u\chi(x))$, where $\chi(x)=x^{\frac{p^n-1}{2}}$ denotes the quadratic character on $\Fpn$, were first studied in \cite{NH07,ZHYJ07}, where the case $r=p^n-2$ was shown to yield APN functions in certain characteristics. These functions have also been investigated in more recent works~\cite{LWZ24,RXY26,XBC+24,XLB+25}, which study their differential properties in the non-APN cases.
In~\cite{LWZ24,MW25,KK26}, particular choices of the exponent $r$, such as $r=p^n-2,\,2$, and $\frac{p^n+1}{4}$, were further investigated. These works also considered the case $u=1$, leading to functions of the form 
$$F_r(x)=x^r(1+\chi(x)),$$ 
which were not treated in the same way in other works \cite{RXY26,XBC+24,XLB+25}. The above results show that the functions $F_r$ can exhibit very low values of $\Max_{b\in\Fpnmul}\delta_{F_r}(1,b)$ and low boomerang uniformity.

Motivated by this phenomenon, in our previous work~\cite{KKKK25} we initiated a systematic study of such functions. It was shown that if $r$ satisfies the condition that the equation
$
(x+1)^r - x^r = b
$
has at most one solution with $\chi(x)=\chi(x+1)=1$ for each $b\in \Fpnmul$, and if $\gcd(r,p^n-1)\le 2$, then the function $F_r$ satisfies
\begin{itemize}
	\item $\delta_{F_r}=\delta_{F_r}(1,0)=\frac{p^n+1}{4}$,
	\item $\Max_{b\in\Fpnmul}\delta_{F_r}(1,b)\le 2$,
	\item $\beta_{F_r}\le 2$.
\end{itemize}
Moreover, several classes of such exponents $r$ were identified. We refer the reader to~\cite{KKKK25} or Section~\ref{subsec_KKKK25} for further details. Tables~\ref{table diff_Fr} and~\ref{table_boomerang_Fr} summarize known results on the differential and boomerang properties of the functions $F_r$, together with the new classes studied in this paper.

\renewcommand{\arraystretch}{1.1}
\begin{table}[htbp]
	\centering
	\begin{tabular}{ccccc}
		\toprule
		$r$ 
		& Conditions 
		& $\max\limits_{b\in \Fpnmul}\delta_{F_r}(1,b)$
		& \makecell{Spectrum\\ provided} 
		& Ref. \\
		\midrule
		
		$p^n-2$ 
		& $p=3$, $n$ odd
		& $1$ 
		& O 
		& \cite{LWZ24} \\
		
		$p^n-2$ 
		& $p>3$, $p^n \equiv 3 \pmod{4}$ 
		& $1$ 
		& O 
		& \cite{LWZ24} \\
		
		$2$ 
		& $p^n \equiv 3 \pmod{4}$ 
		& $\le 2$ 
		& X 
		& \cite{MW25} \\

		$\frac{p^n+1}{4}$ 
		& $p^n \equiv 3 \pmod{8}$ 
		& $1$ 
		& O 
		& \cite{KK26} \\
		
		$\frac{p^n+1}{4}$ 
		& $p^n \equiv 7 \pmod{8}$ 
		& $2$ 
		& O 
		& \cite{KK26} \\
		
		$p^k+1$ 
		& $p^n \equiv 3 \pmod{4}$, $0 < k < n$   
		& $\le 2$ 
		& X 
		& \cite{KKKK25} \\
		
		$\frac{3^k+1}{2}$ 
		& $p=3$, $n$ odd, $0 < k < n$ 
		& $\le 2$ 
		& X 
		& \cite{KKKK25} \\
		
		$3$ 
		& $p^n \equiv 11 \pmod{12}$ 
		& $2$ 
		& O 
		& \cite{KKKK25} \\
		
		$\frac{2 p^n-1}{3}$ 
		& $p^n \equiv 11 \pmod{12}$ 
		& $2$ 
		& O 
		& \cite{KKKK25} \\
		
		$\frac{3^{\frac{n+1}{2}} - 1}{2}$ 
		& $p=3,\; n$ odd 
		& $\le 2$ 
		& X 
		& \cite{KKKK25} \\
		
		$\frac{3^{n+1}-1}{8}$ 
		& $p=3,\; n$ odd 
		& $\le 2$ 
		& X 
		& \cite{KKKK25} \\

		$\frac{3^{um}-1}{3^m+1}$ 
		& \makecell{$p=3$, $n$ odd, $\gcd(m,n)=1$\\
			$um\equiv 1 \pmod n$, $u$ even}
		& $1$ 
		& O
		& Section \ref{subsec_r=ZW10} \\
		
		$\frac{1-3^{(n-u)m}}{1+3^m}$ 
		& \makecell{$p=3$, $n$ odd, $\gcd(m,n)=1$\\
			$um\equiv 1 \pmod n$, $u$ odd}
		& $1$ 
		& O 
		& Section \ref{subsec_r=ZW10} \\
		
		$2 \cdot 3^{\frac{n-1}{2}} + 1$ 
		& $p=3,\; n$ odd 
		& $1$ 
		& O 
		& Section \ref{subsec_r=2*3^l+1} \\
		
		$3^n-3$ 
		& $p=3,\; n\ge 5$ odd 
		& $\le 2$ 
		& X 
		& Section \ref{sec_r=-2} \\
		
		\bottomrule
	\end{tabular}
	\caption{Differential uniformity and spectrum of $F_r$}\label{table diff_Fr}
\end{table}

\begin{table}[htbp]
	\centering
	\begin{tabular}{ccccc}
		\toprule
		$r$ 
		& Conditions 
		& $\beta_{F_r}$ 
		& \makecell{Spectrum\\ provided}  
		& Ref. \\
		\midrule
		
		$p^n-2$ 
		& $p=3$, $n$ odd
		& $0$ 
		& - 
		& \cite{LWZ24} \\
		
		$p^n-2$ 
		& $p^n \equiv 3 \pmod{4}$, $p>3$ 
		& $1$  
		& O 
		& \cite{LWZ24} \\
		
		$2$ 
		& $p^n \equiv 3 \pmod{4}$, large $p^n$ 
		& $2$ 
		& X 
		& \cite{MW25} \\
		
		$\frac{p^n+1}{4}$ 
		& $p^n \equiv 3 \pmod 8$ 
		& $0$  
		& - 
		& \cite{KK26} \\
		
		$\frac{p^n+1}{4}$ 
		& $p^n \equiv 7 \pmod 8$ 
		& $2$  
		& O 
		& \cite{KK26} \\
		
		$p^k+1$ 
		& $p^n \equiv 3 \pmod{4}$, $0 < k < n$
		& $\le 2$ 
		& X 
		& \cite{KKKK25} \\
		
		$\frac{3^k + 1}{2}$ 
		& $p=3,\; n$ odd, $0 < k < n$ 
		& $\le 2$
		& X 
		& \cite{KKKK25} \\
		
		$3$ 
		& $p^n \equiv 11 \pmod{12}$ 
		& $\le 2$ 
		& X 
		& \cite{KKKK25} \\
		
		$\frac{2\cdot p^n-1}{3}$ 
		& $p^n \equiv 11 \pmod{12}$ 
		& $\le 2$ 
		& X 
		& \cite{KKKK25} \\
		
		$\frac{3^{\frac{n+1}{2}} - 1}{2}$  
		& $p=3,\; n$ odd 
		& $\le 2$
		& X 
		& \cite{KKKK25} \\
		
		$\frac{3^{n+1}-1}{8}$ 
		& $p=3,\; n$ odd 
		& $\le 2$ 
		& X 
		& \cite{KKKK25} \\
		
		$2$ 
		& $p=3, \; n\geq 3$ odd 
		& $1$ 
		& O 
		& \cite{KKKK25} \\
		
		$\frac{3^{um}-1}{3^m+1}$ 
		& \makecell{$p=3$, $n$ odd, $\gcd(m,n)=1$\\
			$um\equiv 1 \pmod n$, $u$ even} 
		& $0$ 
		& - 
		& Section \ref{subsec_r=ZW10} \\
		
		$\frac{1-3^{(n-u)m}}{1+3^m}$ 
		& \makecell{$p=3$, $n$ odd, $\gcd(m,n)=1$\\
			$um\equiv 1 \pmod n$, $u$ odd}
		& $0$ 
		& - 
		& Section \ref{subsec_r=ZW10} \\
		
		$2 \cdot 3^{\frac{n-1}{2}} + 1$ 
		& $p=3,\; n$ odd 
		& $0$ 
		& - 
		& Section \ref{subsec_r=2*3^l+1} \\
		
		$3^n-3$ 
		& $p=3,\; n\ge 5$ odd 
		& $1$ 
		& O 
		& Section \ref{sec_r=-2} \\

		\bottomrule
	\end{tabular}
	\caption{Boomerang uniformity and spectrum of $F_r$.}\label{table_boomerang_Fr}
\end{table}
\renewcommand{\arraystretch}{1}

In particular, in characteristic $3$, instances in which the boomerang uniformity is strictly smaller than the bound $2$ from~\cite{KKKK25} appear quite frequently, which motivates a closer investigation.
To address this, we investigate exponents $r$ for which $F_r$ attains boomerang uniformity $0$ or $1$. We perform an exhaustive search for small values of $n$, and report the results in Section~\ref{sec_table}.
Moreover, we identify two classes of exponents for which $F_r$ has boomerang uniformity $0$, namely those arising from APN exponents in~\cite{ZW10} and the class $r=2\cdot 3^{\frac{n-1}{2}}+1$, for which $x^r$ is known to be differentially $4$-uniform (see~\cite{DHKM01}). We also show that $F_r$ has boomerang uniformity $1$ when $r=3^n-3$, where $x^r$ is APN~\cite{HRS99} and its differential spectrum has been studied in~\cite{XZLH20}, and determine its boomerang spectrum.

Furthermore, we provide a more detailed analysis of APN exponents from~\cite{ZW10}. In particular, we give an explicit parametrization of such exponents (Proposition~\ref{ZW10_general_prop}) and show that, in characteristic $3$, this construction accounts for all APN exponents arising from~\cite{ZW10}. Our computational results further indicate that, for $n\le 13$, all APN power functions not listed in Table~1 of~\cite{BP25} are explained by this construction.

The rest of this paper is organized as follows. Section~\ref{sec_pre} contains some preliminaries, including a summary of the main results from our previous work~\cite{KKKK25} that will be used in later sections, as well as a brief discussion of the differential spectrum of $F_r$ in the locally-PN case, where $F_r$ also has boomerang uniformity $0$. In Section~\ref{sec_APN}, we study APN exponents arising from~\cite{ZW10}. In Section~\ref{sec_locally-PN}, we present two classes of locally-PN binomials $F_r$ with boomerang uniformity $0$. We also show that $F_{3^n-3}$ has boomerang uniformity $1$ for $n\ge 5$ odd, and explicitly evaluate its boomerang spectrum in Section~\ref{sec_r=-2}.
Section~\ref{sec_table} presents our numerical results.
Finally, Section~\ref{sec_con} concludes the paper.

\section{Preliminaries}\label{sec_pre}

\subsection{Reduction Properties of $F_r$ and Locally-APN Functions}

For power functions, the differential equation can be reduced to the case $a=1$ by a simple scaling argument. Indeed, if $F(x)=x^r$, multiplying $\frac{1}{a^r}$ on both sides of $b=F(x+a)-F(x)=(x+a)^r-x^r$, we have
\begin{equation*}
	\Frac{b}{a^r}=\left(\Frac x a +1\right)^r -\left(\Frac x a\right)^r = (y+1)^r-y^r,
\end{equation*}
where $y=\Frac x a$. Hence, we have 
\begin{equation}\label{dupower_eqn}
	\delta_F(a,b)=\delta_F\left(1, \Frac{b}{a^r} \right), \text{ so }	\delta_F=\Max_{b\in \Fpn}\delta_F(1,b),
\end{equation}
in this case. This reduction property motivates the notion of locally-APN functions, originally introduced for power functions in even characteristic~\cite{BCC11} and later extended to odd characteristic in~\cite{HLX+23}, where the condition
\begin{equation}\label{locallyapn_def}
	\delta_F(1,b)\le 2\text{ for all }b\in \Fpn\setminus\Fp,
\end{equation}
was introduced.
A further generalization was considered in~\cite{KK26} for functions satisfying
\begin{equation}\label{locallyapn_cordef_eqn}
	\delta_F(a,b)= \delta_F(1,g_a(b))\text{ for all }b\in \Fpn,
\end{equation}
where $g_a$ permutes $\Fpn$ for each $a\in\Fpnmul$. 

\begin{defi}\label{locallyapn_cordef}
	Let $F$ be a function on $\Fpn$ satisfying \eqref{locallyapn_cordef_eqn} for every $a\in \Fpnmul$. Then, \begin{itemize}
		\item $F$ is called \textbf{locally-PN}, if
		$\delta_F(1,b)\le 1$ for all $b\in \Fpn\setminus \Fp$.
		\item $F$ is called \textbf{locally-APN}, if
		$\delta_F(1,b)\le 2$ for all $b\in \Fpn\setminus \Fp$.
	\end{itemize}
\end{defi}

The following lemma shows that $F_{r,u}$ satisfies \eqref{locallyapn_cordef_eqn}, and hence the notion of locally-APN functions applies to $F_{r,u}$ as well.

\begin{lemma}\label{Fruproperty2_lemma}\cite[Lemma 11]{MW25}
	Let $F_{r,u}(x) = x^r (1+u\chi(x))$ be defined on $\Fpn$, where $r>1$ and $u\in \Fpnmul$. Let $a\in \Fpnmul$ and $b\in \Fpn$. Then,
	\begin{align*}
		\delta_{F_{r,u}}(a,b)&=
		\begin{cases}
			\delta_{F_{r,u}}\left( 1, \frac{b}{a^r}\right) &\text{ if }\chi(a)=1,\\
			\delta_{F_{r,u}}\left( 1, \frac{b}{(-1)^{r+1}a^r}\right) &\text{ if }\chi(a)=-1,
		\end{cases}
		\\
		\beta_{F_{r,u}}(a,b)&=
		\begin{cases}
			\beta_{F_{r,u}}\left( 1, \frac{b}{a^r}\right) &\text{ if }\chi(a)=1,\\
			\beta_{F_{r,u}}\left( 1, \frac{b}{(-1)^{r}a^r}\right) &\text{ if }\chi(a)=-1.
		\end{cases}
	\end{align*}
\end{lemma}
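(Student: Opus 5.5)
The plan is a direct substitution argument based on how $F_{r,u}$ behaves under scaling $x\mapsto ax$. For any $x\in\Fpn$ and $a\in\Fpnmul$ we have $\chi(ax)=\chi(a)\chi(x)$, so
\[
F_{r,u}(ax)=a^r x^r\bigl(1+u\chi(a)\chi(x)\bigr).
\]
If $\chi(a)=1$ this is simply $a^rF_{r,u}(x)$. If $\chi(a)=-1$ it equals $a^r x^r(1-u\chi(x))$. This is not obviously a scalar multiple of $F_{r,u}$, so I also use $x\mapsto -x$. Since $r>1$ we have $F_{r,u}(0)=0$, so $x=0$ causes no trouble.

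The key step is to compare $F_{r,u}(-x)$ with $F_{r,u}(x)$. We have $\chi(-x)=\chi(-1)\chi(x)$. The lemma is stated for general $p^n$, but the formulas with $(-1)^{r+1}$ and $(-1)^r$ suggest that the intended setting has $\chi(-1)=-1$, i.e.\ $p^n\equiv 3\pmod 4$, which covers this paper's case $p=3$, $n$ odd. I therefore assume $\chi(-1)=-1$; if that hypothesis is genuinely needed, I would add it explicitly. Under this assumption,
\[
F_{r,u}(-x)=(-1)^r x^r\bigl(1-u\chi(x)\bigr),
\]
and hence, for $\chi(a)=-1$,
\[
F_{r,u}(ax)=(-1)^r a^r F_{r,u}(-x).
\]
Equivalently, with $c:=-a$ (a square), $F_{r,u}(ax)=(-1)^r a^r F_{r,u}(-x)$ expresses the scaling through $c$.

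For the differential count, substitute $x=ay$ in $F(x+a)-F(x)=b$. When $\chi(a)=1$ this gives $a^r\bigl(F(y+1)-F(y)\bigr)=b$, so $\delta(a,b)=\delta(1,b/a^r)$. When $\chi(a)=-1$ it gives $(-1)^r a^r\bigl(F(-y-1)-F(-y)\bigr)=b$. Setting $z=-y-1$ turns this into $(-1)^{r}a^r\bigl(F(z)-F(z+1)\bigr)=b$, that is, $F(z+1)-F(z)=b/((-1)^{r+1}a^r)$. Since $y\mapsto z$ is a bijection, the claimed formula follows.

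For the boomerang count, substitute $x=ax'$ and $y=ay'$ in both equations. When $\chi(a)=1$, both equations scale by $a^r$. When $\chi(a)=-1$, set $X=-x'$ and $Y=-y'$. The system becomes
\[
(-1)^r a^r\bigl(F(X)-F(Y)\bigr)=b,\qquad (-1)^r a^r\bigl(F(X-1)-F(Y-1)\bigr)=b.
\]
Now shift by $X=X'+1$ and $Y=Y'+1$. The two equations become the standard system with $a=1$ and right-hand side $b/((-1)^ra^r)$, with the roles of the two equations swapped, which does not change the solution set. All maps are bijections of $\Fpn^2$, so the counts agree.

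The only real obstacle is the sign bookkeeping, together with the role of $\chi(-1)$. I expect the main care point to be confirming that the hypothesis $\chi(-1)=-1$ is implicit, or otherwise handling the case $\chi(-1)=1$ separately; in that case a non-square $a$ cannot be reduced to $a=1$ this way.
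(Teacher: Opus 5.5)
Your proof is correct and is the natural scaling-plus-reflection substitution argument. The paper gives no proof of its own and simply cites \cite[Lemma 11]{MW25}, whose setting is $p^n\equiv 3\pmod 4$.

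You were right to flag the hypothesis: the case $\chi(a)=-1$ genuinely requires $\chi(-1)=-1$. For example, over $\mathbb{F}_5$ with $r=2$, $u=1$, $a=2$, one finds $\delta_{F_{2,1}}(2,0)=3\neq 1=\delta_{F_{2,1}}(1,0)$. The hypothesis does hold in every application in this paper, where $p=3$ and $n$ is odd.
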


\subsection{Known Results on Quadratic Character}\label{subsec_chi}

We denote 
\begin{equation*}
	S_{ij}=\{x\in \Fpn : \chi(x)=(-1)^i, \chi(x+1)=(-1)^j\},
\end{equation*}
where $i,j \in \{0,1\}$. The following lemma is well-known and useful for our results.

\begin{lemma}\label{Sset num of elts}~\cite{Dic35} If $p^n\equiv 1\pmod{4}$, then $\#S_{00}=\frac{p^n-5}{4}$ and $\#S_{01}=\#S_{10}=\#S_{11}=\frac{p^n-1}{4}$. If $p^n\equiv 3\pmod{4}$, then $\#S_{00}=\#S_{10}=\#S_{11}=\frac{p^n-3}{4}$ and $\#S_{01}=\frac{p^n+1}{4}$.
\end{lemma}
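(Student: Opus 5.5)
We prove Lemma~\ref{Sset num of elts} with the standard Jacobsthal-type character sum computation. Write $q=p^n$ and extend $\chi$ by $\chi(0)=0$. For $x\notin\{0,-1\}$, the indicator that $x\in S_{ij}$ is
\[
\tfrac14\bigl(1+(-1)^i\chi(x)\bigr)\bigl(1+(-1)^j\chi(x+1)\bigr).
\]
Summing this over all $x\in\Fpn\setminus\{0,-1\}$ gives
\[
\#S_{ij}=\tfrac14\Bigl(\Sum_{x\ne 0,-1}1+(-1)^i\Sum_{x\ne0,-1}\chi(x)+(-1)^j\Sum_{x\ne0,-1}\chi(x+1)+(-1)^{i+j}\Sum_{x\ne0,-1}\chi(x)\chi(x+1)\Bigr),
\]
and we must separately decide whether $x=0$ or $x=-1$ lies in $S_{ij}$. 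Since $\chi(0)=0\neq\pm1$, neither does: $x=0$ has $\chi(x)=0$, and $x=-1$ has $\chi(x+1)=0$. So no boundary correction is needed, and the formula above is exact.

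Next we evaluate the four sums. The first is $q-2$. Since $\Sum_{x}\chi(x)=0$, we get
\[
\Sum_{x\ne0,-1}\chi(x)=-\chi(-1),\qquad \Sum_{x\ne0,-1}\chi(x+1)=-\chi(1)=-1.
\]
For the last sum, the terms at $x=0,-1$ vanish anyway, so it equals $\Sum_{x\ne0}\chi(x^2(1+x^{-1}))=\Sum_{x\ne0}\chi(1+x^{-1})$. As $x$ ranges over $\Fpnmul$, $1+x^{-1}$ ranges over $\Fpn\setminus\{1\}$, so this sum equals $0-\chi(1)=-1$. This step, the standard evaluation $\sum\chi(x)\chi(x+1)=-1$, is the only nontrivial one. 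Hence
\[
\#S_{ij}=\tfrac14\bigl(q-2-(-1)^i\chi(-1)-(-1)^j-(-1)^{i+j}\bigr).
\]

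Finally we split by $q\bmod 4$, using $\chi(-1)=(-1)^{(q-1)/2}$.

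If $q\equiv1\pmod4$, then $\chi(-1)=1$ and $\#S_{ij}=\frac14\bigl(q-2-(-1)^i-(-1)^j-(-1)^{i+j}\bigr)$:
\begin{itemize}
\item $(i,j)=(0,0)$ gives $\frac{q-5}{4}$;
\item each of $(0,1)$, $(1,0)$, $(1,1)$ gives $\frac{q-1}{4}$.
\end{itemize}

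If $q\equiv3\pmod4$, then $\chi(-1)=-1$ and $\#S_{ij}=\frac14\bigl(q-2+(-1)^i-(-1)^j-(-1)^{i+j}\bigr)$:
\begin{itemize}
\item $(0,0)$ gives $\frac{q-3}{4}$;
\item $(0,1)$ gives $\frac{q+1}{4}$;
\item $(1,0)$ gives $\frac{q-3}{4}$;
\item $(1,1)$ gives $\frac{q-3}{4}$.
\end{itemize}

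This matches the statement. As a consistency check, the four counts sum to $q-2$ in both cases.
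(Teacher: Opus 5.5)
Your proof is correct. The indicator expansion is exact because $x=0$ and $x=-1$ lie in no $S_{ij}$, and all four sums are evaluated correctly. The key sum $\sum_{x}\chi(x)\chi(x+1)=-1$ is also the special case $f(x)=x^2+x$, $d=1\neq 0$, of Lemma~\ref{chi_lemma quad}. Splitting by the value of $\chi(-1)$ then gives all eight counts.

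The paper gives no proof of this lemma; it cites Dickson's classical computation of the cyclotomic numbers of order $2$. Your Jacobsthal-type character-sum argument is the standard self-contained derivation of those same numbers. It replaces the citation with a short check that needs only $\sum_x\chi(x)=0$ and that one quadratic character sum.
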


The following three lemmas on the quadratic character are useful for our results.

\begin{lemma}\label{chi_lemma quad}\cite[Theorem 5.48]{LN97}
	Let $f(x) = a_2x^2 +a_1 x + a_0 \in \Fpn [x]$ with $p$ odd and $a_2 \ne 0$. Put $d=a_1^2-4a_0a_2$. Then,
	\begin{equation*}
		\sum_{x\in \Fpn} \chi(f(x)) = 
		\begin{cases}
			-\chi(a_2) &\text{ if }d\ne 0,\\
			(p^n-1)\chi(a_2) &\text{ if }d = 0.
		\end{cases}
	\end{equation*}
\end{lemma}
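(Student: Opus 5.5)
The plan is to compute the sum directly by completing the square and then reducing to a sum over squares.

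First, dispose of the degenerate case. Write $q=p^n$. Since $p$ is odd, $4a_2$ is invertible, and
\[
4a_2 f(x) = (2a_2x+a_1)^2 - d .
\]
Because $\chi$ is multiplicative, $\chi(f(x))=\chi(4a_2)^{-1}\chi\bigl((2a_2x+a_1)^2-d\bigr)$. Here $\chi(4a_2)=\chi(a_2)$ and $\chi(a_2)^{-1}=\chi(a_2)$. The map $x\mapsto y=2a_2x+a_1$ is a bijection of $\Fpn$, so
\[
\sum_{x\in\Fpn}\chi(f(x))=\chi(a_2)\sum_{y\in\Fpn}\chi(y^2-d).
\]
If $d=0$, then $\chi(y^2)=1$ for $y\neq 0$ and $\chi(0)=0$. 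The inner sum is therefore $q-1$, which gives $(q-1)\chi(a_2)$.

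Next, assume $d\ne0$; it remains to show $T:=\sum_y\chi(y^2-d)=-1$. The number of $y$ with $y^2=t$ is $1+\chi(t)$ for every $t\in\Fpn$, with the convention $\chi(0)=0$. Hence
\[
T=\sum_{t}(1+\chi(t))\chi(t-d)=\sum_t\chi(t-d)+\sum_t\chi(t)\chi(t-d).
\]
The first sum is $0$, because $\chi$ is a nontrivial character and $t-d$ runs over all of $\Fpn$.

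For the second sum, the term $t=0$ contributes nothing. For $t\neq0$, write $\chi(t)\chi(t-d)=\chi(t^2)\chi(1-d/t)=\chi(1-dt^{-1})$. As $t$ runs over $\Fpnmul$, the element $u=1-dt^{-1}$ runs over $\Fpn\setminus\{1\}$, using $d\ne0$. So the second sum equals
\[
\sum_{u}\chi(u)-\chi(1)=0-1=-1 .
\]
Thus $T=-1$, and the total is $-\chi(a_2)$.

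The only delicate step is the counting identity $\#\{y:y^2=t\}=1+\chi(t)$ together with the substitution $u=1-d/t$; both are routine. The argument is otherwise standard, as in \cite[Theorem 5.48]{LN97}.
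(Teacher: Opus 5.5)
Your proof is correct and complete. The paper gives no proof of its own here and simply quotes the result from \cite[Theorem 5.48]{LN97}. Your argument is the standard textbook proof: completing the square reduces the sum to $\chi(a_2)\sum_{y}\chi(y^2-d)$, and for $d\neq 0$ the inner sum equals $-1$ via $\#\{y : y^2=t\}=1+\chi(t)$ and the substitution $u=1-d/t$. Your convention $\chi(0)=0$ matches the paper's definition $\chi(x)=x^{\frac{p^n-1}{2}}$.
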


\begin{lemma}\label{chi_lemma inequal}\cite[Theorem 5.41]{LN97}
	Let $\eta(\cdot)$ be a multiplicative character of $\Fpn$ of order $m>1$ and let $f\in\Fpn[x]$ be a monic polynomial of positive degree that is not $m$-th power of a polynomial. Let $d$ be the number of distinct roots of $f$ in its splitting field over $\Fpn$. Then for every $a\in\Fpn$ we have
	\begin{equation*}
		\left| \sum_{x\in \Fpn} \eta (af(x))\right| \le (d-1)\sqrt{p^n}.
	\end{equation*}
\end{lemma}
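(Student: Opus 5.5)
This is the classical Weil bound for multiplicative character sums (Lidl–Niederreiter, Theorem 5.41). The paper cites it rather than proving it, so I only sketch the standard argument.

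\emph{Reduction.} Since $\eta$ is multiplicative, $\eta(af(x))=\eta(a)\eta(f(x))$. For $a\neq 0$ we have $|\eta(a)|=1$. For $a=0$ the sum is $0$, using the convention $\eta(0)=0$. So it suffices to bound $\left|\sum_{x}\eta(f(x))\right|$.

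Write $f=\prod_{i=1}^{d} g_i^{e_i}$, with the $g_i$ distinct monic irreducibles over the algebraic closure. Let $\eta$ have order $m$. We may replace each $e_i$ by its residue modulo $m$, because $\eta(g_i(x)^{m})=1$ whenever $g_i(x)\neq 0$. Roots of $f$ contribute $0$ to the sum both before and after this change, so the sum is unchanged. The hypothesis that $f$ is not an $m$-th power guarantees that some reduced $e_i$ is nonzero.

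\emph{L-function.} Define
\begin{equation*}
L(T)=\sum_{h \text{ monic}} \lambda(h)\,T^{\deg h},
\end{equation*}
where $\lambda(h)=\eta\bigl(\mathrm{Res}(h,f)\bigr)$, taken up to the appropriate sign convention. Equivalently, $\lambda$ is the Dirichlet character modulo the radical-related modulus $\prod g_i$ induced by $\eta\circ f$. This $\lambda$ is a nontrivial multiplicative character on $\mathbb{F}_{p^n}[x]$ modulo $\prod_i g_i$.

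Reorganising the coefficients of $L(T)$ using orthogonality shows that $L(T)$ is a polynomial of degree at most $d-1$. Its logarithmic derivative satisfies
\begin{equation*}
T\,\frac{L'(T)}{L(T)}=\sum_{k\ge 1} S_k\,T^k,
\end{equation*}
where $S_1=\sum_{x\in\Fpn}\eta(f(x))$ and $S_k$ is the corresponding sum over $\mathbb{F}_{p^{nk}}$ with $\eta\circ N$.

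Factor $L(T)=\prod_{j=1}^{d-1}(1-\omega_jT)$. Then $S_k=-\sum_j\omega_j^k$. In particular $|S_1|\le (d-1)\max_j|\omega_j|$.

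\emph{Main obstacle: the Riemann hypothesis $|\omega_j|\le\sqrt{p^n}$.} This comes from identifying $L(T)$ with a factor of the zeta function of the superelliptic curve $y^m=f(x)$. One then applies Weil's theorem, or proves it via the Stepanov–Schmidt polynomial method. That method yields $|S_k|=O(p^{nk/2})$ for all $k$, and the power-sum argument upgrades this to $|\omega_j|\le p^{n/2}$.

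Given this step, $\left|\sum_x\eta(af(x))\right|\le (d-1)\sqrt{p^n}$, as claimed.
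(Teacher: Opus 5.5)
The paper gives no proof of this lemma; it simply quotes Lidl--Niederreiter, Theorem 5.41. The architecture of your sketch is the standard proof of that theorem: the character $\lambda(h)=\eta(\mathrm{Res}(h,f))$ modulo $\mathrm{rad}(f)$, $L(T)$ a polynomial of degree at most $d-1$, $S_1=-\sum_j\omega_j$, and the Riemann hypothesis via Weil or Stepanov--Schmidt plus the power-sum argument. One step, however, is false as written. In the reduction you assert that replacing each $e_i$ by its residue modulo $m$ leaves the sum unchanged, because roots of $f$ ``contribute $0$ both before and after.'' A root $\alpha\in\Fpn$ of $f$ whose multiplicity is divisible by $m$ is no longer a root of the reduced polynomial $\tilde f$. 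It therefore contributes $0$ to $\sum_x\eta(f(x))$ but the nonzero value $\eta(\tilde f(\alpha))$ to $\sum_x\eta(\tilde f(x))$. For example, with $\eta=\chi$ and $f(x)=x^2(x-1)$ one gets $\sum_x\chi(f(x))=-\chi(-1)$, whereas $\sum_x\chi(x-1)=0$. Relatedly, for the linear factors $g_i$ over the algebraic closure, $g_i(x)$ need not lie in $\Fpn$, so $\eta(g_i(x)^m)$ is undefined. The correct statement is $f=\tilde f\,G^m$ with $G\in\Fpn[x]$, using that conjugate roots have equal multiplicities.

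The bound survives, and either of two easy repairs works.

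\emph{Drop the reduction.} Define $\lambda$ using the original $f$. It is still a character modulo $\mathrm{rad}(f)$, up to the degree twist $\eta(-1)^{\deg h\,\deg f}$. It is nontrivial precisely because $f$ is not an $m$-th power. It is merely imprimitive at the irreducible factors $h_j$ of $f$ with $m\mid e_j$. Writing $\lambda^*$ for the associated primitive character, the Euler factors $1-\lambda^*(h_j)T^{\deg h_j}$ at these $h_j$ contribute inverse roots of modulus $1$. The orthogonality count still gives $\deg L\le d-1$. The coefficient of $T$ in $TL'/L$ is exactly $\sum_x\eta(f(x))$, because $\mathrm{Res}(X-c,f)=f(c)$ for $c\in\Fpn$. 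The Riemann hypothesis for the primitive part then yields $|\omega_j|\le\sqrt{p^n}$ for every $j$.

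\emph{Keep the reduction and pay for the discrepancy.} Suppose $r_0$ roots in $\Fpn$ have multiplicity divisible by $m$, and $\tilde f$ has $d'$ distinct roots. Then $d'\le d-r_0$, and
\[
\Bigl|\sum_{x\in\Fpn}\eta(f(x))\Bigr|\le (d'-1)\sqrt{p^n}+r_0\le (d-r_0-1)\sqrt{p^n}+r_0\sqrt{p^n}=(d-1)\sqrt{p^n}.
\]
With either fix, the rest of your outline is fine at the level of a sketch.
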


The following lemma is well-known. A simple proof is given in Lemma 2.5 of \cite{KKKK25}.

\begin{lemma}\label{chi_lemma cubic}\cite[Exercise 5.59]{LN97}
	Let $p^n\equiv 3\pmod{4}$ and $f$ be a function on $\Fpn$ with $f(-x)=-f(x)$ for all $x\in \Fpn$. Then, $\Sum_{x\in \Fpn}\chi(f(x))=0$.
\end{lemma}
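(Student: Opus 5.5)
The plan is to use the substitution $x\mapsto -x$ to pair each term of the sum with its negative, so that the contributions cancel in pairs.

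First, since $p^n\equiv 3\pmod 4$, the exponent $\frac{p^n-1}{2}$ is odd. Hence
\[
\chi(-1)=(-1)^{\frac{p^n-1}{2}}=-1 .
\]
As $\chi$ is multiplicative on $\Fpn$, with the convention $\chi(0)=0$, this gives $\chi(-y)=\chi(-1)\chi(y)=-\chi(y)$ for every $y\in\Fpn$.

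Next, write $T=\Sum_{x\in \Fpn}\chi(f(x))$. The map $x\mapsto -x$ is a bijection of $\Fpn$, so reindexing the sum gives
\[
T=\Sum_{x\in\Fpn}\chi(f(-x)).
\]
Applying the hypothesis $f(-x)=-f(x)$ turns this into
\[
T=\Sum_{x\in\Fpn}\chi(-f(x))=-\Sum_{x\in\Fpn}\chi(f(x))=-T .
\]
So $2T=0$ in $\mathbb{Z}$, which forces $T=0$.

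Alternatively, one can pair $x$ with $-x$ for $x\neq0$. Each pair contributes $\chi(f(x))+\chi(-f(x))=0$. The point $x=0$ satisfies $f(0)=-f(0)$, so $f(0)=0$ because $p$ is odd, and it contributes $\chi(0)=0$.

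I expect no real obstacle here. The only point needing care is that the sum is an integer sum rather than a sum in $\Fpn$, which is what justifies cancelling the factor $2$. The key input is $\chi(-1)=-1$, and this is exactly where the condition $p^n\equiv 3\pmod 4$ is used.
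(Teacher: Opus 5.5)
Your proof is correct: reindexing by $x\mapsto -x$ and using $\chi(-1)=-1$ (exactly where $p^n\equiv 3\pmod 4$ enters) gives $T=-T$ as an integer sum, hence $T=0$. The paper proves nothing here itself and only cites the exercise in Lidl--Niederreiter and a short proof in the authors' earlier work. Your argument is the standard symmetry argument for this fact, and your pairing variant, which treats $x=0$ via $f(0)=0$, is equally valid.
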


\subsection{Differential and Boomerang Properties of $F_r$}\label{subsec_KKKK25}

To compute the differential uniformity of $F_r$, we count the number of solutions of 
\begin{equation}\label{DU_eqn}
	b=F_r(x+1)-F_r(x) = (x+1)^r(1+\chi(x+1))-x^r(1+\chi(x)).
\end{equation}
Let $D_{ij} (b)$ denote the number of solutions of \eqref{DU_eqn} in $S_{ij}$ where $i,j\in\{0,1\}$. We summarize the results of \cite{KKKK25} on differential properties of $F_r$ in the following theorem. 

\begin{theorem}\label{KKKK25 DU_thm}\cite[Section 3.1]{KKKK25}
	Let $r>1$ be an integer and $q=p^n \equiv 3 \pmod{4}$. If $(x+1)^r-x^r =b$ has at most $1$ solution in $S_{00}$ for all $b\in \Fpnmul$ and $\gcd(r,q-1) \mid 2$, then
	\begin{itemize}
		\item \cite[Lemma 3.2]{KKKK25} $D_{11}(0)=\frac{p^n-3}{4}$ and $D_{11}(b)=0$ for all $b\in \Fpnmul$.
		\item \cite[Lemma 3.3]{KKKK25} $D_{00}(0)=0$, and $D_{00}(b)\le 1$ for all $b\in \Fpnmul$.
		\item \cite[Lemma 3.4, 3.5]{KKKK25} $D_{01}(b) + D_{10}(b)=0$ for $b\in \{0,2\}$, and $D_{01}(b) + D_{10}(b) \le 1$ for all $b\in \Fpn\setminus \{0,2\}$.
		\item \cite[Theorem 3.1]{KKKK25} Combining the above results, $\delta_{F_r}(1,b) \le 2$ for all $b\in \Fpnmul$ and
		\begin{equation*}
			\delta_{F_r}=\delta_{F_r}(1,0)=\frac{p^n+1}{4},
		\end{equation*}
		and hence $F_r$ is locally-APN.
	\end{itemize} 
\end{theorem}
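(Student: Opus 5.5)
We analyse \eqref{DU_eqn} separately on the four sets $S_{ij}$ and then add up the counts. On each $S_{ij}$ the function $F_r$ is explicit: $F_r(x)=2x^r$ if $\chi(x)=1$ and $F_r(x)=0$ if $\chi(x)=-1$. We use $q\equiv 3\pmod 4$ throughout, so $\chi(-1)=-1$.

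\emph{The set $S_{11}$.} Here both terms of \eqref{DU_eqn} vanish, so every $x\in S_{11}$ solves the equation with $b=0$ and none solves it with $b\neq 0$. Lemma~\ref{Sset num of elts} then gives $D_{11}(0)=\frac{p^n-3}{4}$ and $D_{11}(b)=0$ for $b\ne0$.

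\emph{The set $S_{00}$.} The equation becomes $(x+1)^r-x^r=b/2$, and by hypothesis it has at most one solution in $S_{00}$ whenever $b\neq 0$, so $D_{00}(b)\le 1$. For $b=0$ a solution would give $((x+1)/x)^r=1$. Since $\chi(x)=\chi(x+1)=1$, the element $z=(x+1)/x\neq 1$ is a square. Write $d=\gcd(r,q-1)\in\{1,2\}$. Then $z^r=1$ forces $z^d=1$, so $z=\pm1$. The case $z=-1$ is excluded because $-1$ is a nonsquare, and $z=1$ is impossible. Hence $D_{00}(0)=0$.

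\emph{The mixed sets, which are the main step.} On $S_{01}$ the equation reads $-2x^r=b$, and on $S_{10}$ it reads $2(x+1)^r=b$.
\begin{itemize}
\item For $b=0$ both equations force $x=0$ or $x=-1$, which lie in no $S_{ij}$ with the required characters. So $D_{01}(0)+D_{10}(0)=0$.
\item For $b\ne0$, consider $S_{01}$. The solutions of $x^r=-b/2$ with $\chi(x)=1$ correspond to $r$-th roots lying among the squares. On the squares (a cyclic group of odd order $\frac{q-1}{2}$) the map $x\mapsto x^r$ is a bijection because $\gcd(r,\frac{q-1}{2})=1$. Hence there is at most one square $x_0$ with $x_0^r=-b/2$, and it exists exactly when $-b/2$ is a square. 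Then $D_{01}(b)\le 1$.
\item Similarly, writing $y=x+1$ with $\chi(y)=1$, there is at most one $y_0$ with $y_0^r=b/2$, existing exactly when $b/2$ is a square. Then $D_{10}(b)\le1$.
\item Since $\chi(-1)=-1$, exactly one of $-b/2$ and $b/2$ is a square, so at most one of the two candidates exists. This gives $D_{01}(b)+D_{10}(b)\le 1$.
\end{itemize}

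\emph{The exceptional value $b=2$.} Here the $S_{01}$ candidate needs $-1$ to be a square, which fails. The $S_{10}$ candidate is $y_0=1$, giving $x=0$, which has $\chi(0)=0$ and so is not in $S_{10}$. Hence $D_{01}(2)+D_{10}(2)=0$. I expect this bookkeeping of square classes, together with checking membership in the correct $S_{ij}$, to be the main delicate point.

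\emph{Combining.} Only $x=0$ and $x=-1$ lie outside $\bigcup S_{ij}$. At these points \eqref{DU_eqn} gives $b=F_r(1)-F_r(0)=2$ and $b=F_r(0)-F_r(-1)=0$ (using $\chi(-1)=-1$). Therefore
\[
\delta_{F_r}(1,0)=D_{11}(0)+1=\frac{p^n+1}{4}.
\]
For $b=2$ we get $\delta_{F_r}(1,2)\le 1+1=2$. For every other $b\neq 0$ we get $\delta_{F_r}(1,b)\le D_{00}(b)+1\le 2$.

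It remains to see that $\delta_{F_r}(1,0)$ is the overall maximum over all $(a,b)$. Lemma~\ref{Fruproperty2_lemma} reduces every $\delta_{F_r}(a,b)$ to a value $\delta_{F_r}(1,b')$ with $b'=0$ exactly when $b=0$. So it suffices that $\frac{p^n+1}{4}\ge 2$, which holds for $q\ge7$. The same reduction shows that \eqref{locallyapn_cordef_eqn} holds, and the bounds $\delta_{F_r}(1,b)\le 2$ for $b\neq 0$ then give that $F_r$ is locally-APN.
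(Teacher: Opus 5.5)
Your proof is correct and follows the natural case split over $S_{00},S_{01},S_{10},S_{11}$, which is exactly how the cited lemmas of \cite{KKKK25} are organized (the paper quotes them without proof and combines them as you do, as in its proof of Proposition~\ref{Locally-PN_prop}). The only loose end is $q=3$, where $\frac{p^n+1}{4}=1<2$. There $S_{00}=S_{10}=S_{11}=\emptyset$, and a direct check gives $\delta_{F_r}(1,b)=1$ for every $b$, so $\delta_{F_r}=\delta_{F_r}(1,0)$ still holds.
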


To compute the boomerang uniformity of $F_r$, we consider the number of common solutions $(x,y)$ of the following system.
\begin{equation}\label{BU_system}
	\begin{cases}
		x^r(1+\chi(x))-y^r(1+\chi(y))=b,\\
		(x+1)^r(1+\chi(x+1))-(y+1)^r(1+\chi(y+1))=b.
	\end{cases}
\end{equation}
Let $B_{ijkl}(b)$ denote the number of solutions of \eqref{BU_system} in $S_{ij}\times S_{kl}$, where $i,j,k,l\in\{0,1\}$. We summarize the results of \cite{KKKK25} on boomerang properties of $F_r$ in the following theorem. 

\begin{theorem}\label{KKKK25 BU_thm} \cite[Section 4.1]{KKKK25}
	Under the same assumptions as in Theorem \ref{KKKK25 DU_thm}, we have  
	\begin{enumerate}
		\item \cite[Lemma 4.5]{KKKK25} $B_{0001}(b) = B_{0010}(b) =0$ or $B_{0100}(b)=B_{1000}(b)=0$.
		\item \cite[Lemma 4.6]{KKKK25} $B_{0001}(b)$,  $B_{0010}(b)$, $B_{0100}(b)$, $B_{1000}(b)\le 1$.
		\item \cite[Theorem 4.1]{KKKK25} Let $\alpha\in S_{00}$ be the solution of $(x+1)^r-x^r =1$ if it exists. Then, $B_{0001}(b) = B_{0010}(b) = B_{1000}(b)=0$, and hence $\beta_{F_r}(1,\pm 2\alpha^r)=1+ B_{0100}(b) \le 2$.
		\item \cite[Theorem 4.1]{KKKK25} If $b \ne \pm 2\alpha^r$, then  $\beta_{F_r}(1,b) = B_{0001}(b) + B_{0010}(b) + B_{0100}(b) + B_{1000}(b) \le 2$.
	\end{enumerate}
\end{theorem}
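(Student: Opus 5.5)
The statement is proved by splitting the solution set of \eqref{BU_system} according to the quadratic character of $x$, $x+1$, $y$, $y+1$, and using that $F_r$ takes only two kinds of values: $F_r(z)=2z^r$ if $\chi(z)=1$, and $F_r(z)=0$ if $\chi(z)=-1$ or $z=0$. The boundary points fit into this scheme: $x=0$ behaves like a point of $S_{10}$ (value $0$, and $x+1=1$ is a square), and $x=-1$ behaves like a point of $S_{11}$ (both values $0$). Two standing facts drive everything:
\begin{itemize}
\item since $q\equiv 3\pmod 4$, we have $\chi(-1)=-1$;
\item since $\gcd(r,q-1)\mid 2$, the equation $z^r=c$ has at most one solution $z$ with $\chi(z)=1$ (the other root, if any, is $-z$, a nonsquare), and $\chi(z^r)=\chi(z)^r$.
\end{itemize}

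\textbf{Step 1: write out the four mixed cases and prove item 1.} For the four classes in the statement, \eqref{BU_system} becomes:
\begin{itemize}
\item $B_{0001}$: $x^r-y^r=b/2$ and $(x+1)^r=b/2$;
\item $B_{0010}$: $x^r=b/2$ and $(x+1)^r-(y+1)^r=b/2$;
\item $B_{0100}$: $x^r-y^r=b/2$ and $-(y+1)^r=b/2$;
\item $B_{1000}$: $-y^r=b/2$ and $(x+1)^r-(y+1)^r=b/2$.
\end{itemize}
In the first two the relevant base ($x+1$, resp. $x$) is a square, so $\chi(b/2)=1$. In the last two, $\chi(-1)=-1$ forces $\chi(b/2)=-1$. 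This gives item 1: according to the sign of $\chi(b/2)$, either the pair $B_{0001},B_{0010}$ or the pair $B_{0100},B_{1000}$ vanishes.

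\textbf{Step 2: prove item 2 (each count is at most $1$).} In each system, one equation determines a square base uniquely by the standing fact above. For example, in $B_{0001}$ the value $x+1$ is the unique square $r$-th root of $b/2$. The other equation then gives $y^r$ (or $(y+1)^r$) explicitly, with $y$ (or $y+1$) a square. So $y$ is also unique, and each count is at most $1$.

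\textbf{Step 3: show every remaining class contributes nothing except two boundary pairs.} This step gives items 3 and 4.
\begin{itemize}
\item If $x\in S_{00}$ and $y\in S_{00}$: subtracting gives $(x+1)^r-x^r=(y+1)^r-y^r=c$. If $c\neq 0$, the $S_{00}$ hypothesis forces $x=y$, hence $b=0$, which is excluded. If $c=0$, then $((x+1)/x)^r=1$ forces $(x+1)/x=\pm1$. The value $1$ is impossible, and $-1$ contradicts both $x$ and $x+1$ being squares.
\item If $F_r(x)=F_r(x+1)=0$ (that is, $x\in S_{11}$ or $x=-1$): then $y$ must satisfy $F_r(y)=F_r(y+1)=-b\neq0$, so $y\in S_{00}$ with $y^r=(y+1)^r$, excluded as above. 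The symmetric situation with the roles of $x$ and $y$ swapped is handled the same way.
\item Classes $S_{01}\times S_{10}$, $S_{01}\times S_{01}$, $S_{10}\times S_{10}$, $S_{10}\times S_{01}$, and those involving $S_{11}$ against $S_{01}$ or $S_{10}$: these give either $b=0$ or two expressions for $b/2$ of opposite quadratic character, a contradiction.
\item The point $x=0$ or $y=0$ paired with $S_{00}$: the pair $(x,0)$ with $x\in S_{00}$ gives $2x^r=b=2(x+1)^r-2$. So $(x+1)^r-x^r=1$, i.e.\ $x=\alpha$ and $b=2\alpha^r$. Symmetrically, the pair $(0,\alpha)$ gives $b=-2\alpha^r$.
\end{itemize}
Hence for $b\neq\pm2\alpha^r$ only the four mixed classes survive, which is item 4. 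For $b=\pm2\alpha^r$ there is exactly one extra pair.

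\textbf{Main obstacle: the remaining vanishing claims in item 3.} For $b=\pm 2\alpha^r$ it remains to show that $B_{0001}(b)$, $B_{0010}(b)$ and $B_{1000}(b)$ vanish. Only one of $\pm2\alpha^r$ has $\chi(b/2)=1$, so Step 1 already kills one pair of counts for each sign. What is left must be excluded by combining the Step 2 systems with $(\alpha+1)^r-\alpha^r=1$. For instance, for $B_{0001}$ with $b=2\alpha^r$ we get $(x+1)^r=\alpha^r$, so $x+1=\alpha$ by the square-root uniqueness. Substituting back should produce a second $S_{00}$-solution of some $(z+1)^r-z^r=c$, contradicting the hypothesis. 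The Lemma~\ref{Fruproperty2_lemma} reduction with $\chi(-1)=-1$ relates $b$ and $-b$ for the remaining count. I expect this bookkeeping to be the delicate part: correctly attributing the lone surviving count to $B_{0100}$ at both signs.
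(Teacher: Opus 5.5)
\paragraph{Verdict: items 1, 2 and 4 are proved; item 3 is not.} The paper imports this theorem from earlier work without proof, so there is no in-paper argument to compare against. Your Steps 1--3 are correct. They establish items 1, 2 and 4. They also show that at $b=\pm2\alpha^r$ exactly one boundary pair, $(\alpha,0)$ or $(0,\alpha)$, is added to the four mixed counts.

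\paragraph{The gap.} For item 3 you only state an expectation, and the route you sketch targets the wrong count. At $b=2\alpha^r$ you try to kill $B_{0001}(b)$ by producing a second $S_{00}$-solution. But substituting $x=\alpha-1$ only gives $(z+1)^r-z^r=-y^r$ at the single point $z=\alpha-1$. The one other $S_{00}$-point you have, $\alpha$, has value $1$, which already differs from $-y^r$ in quadratic character, so no contradiction appears.

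More to the point, this is not a count the statement claims vanishes. The swap $(x,y)\mapsto(y,x)$ sends solutions for $b$ in $S_{ij}\times S_{kl}$ to solutions for $-b$ in $S_{kl}\times S_{ij}$. Hence $B_{0001}(2\alpha^r)=B_{0100}(-2\alpha^r)$, which is exactly the term kept in $1+B_{0100}(b)\le 2$. If it vanished, the statement would simply read $\beta_{F_r}(1,\pm2\alpha^r)=1$.

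It is this swap that relates $b$ and $-b$. Lemma~\ref{Fruproperty2_lemma} does not: with $a=-1$ it only gives $\beta_{F_r}(-1,b)=\beta_{F_r}(1,b)$.

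\paragraph{The missing step.} Item 3 should be read at $b=-2\alpha^r$.
\begin{enumerate}
\item Since $\chi(b/2)=-1$, item 1 gives $B_{0001}(b)=B_{0010}(b)=0$.
\item For $B_{1000}(b)$, the equation $-y^r=b/2=-\alpha^r$ forces $y=\alpha$, by uniqueness of the square $r$-th root.
\item The second equation then gives $(x+1)^r=(\alpha+1)^r-\alpha^r=1$, which forces $x+1=1$, i.e.\ $x=0\notin S_{10}$. The would-be mixed solution collapses onto the boundary pair $(0,\alpha)$, so $B_{1000}(b)=0$.
\item Hence $\beta_{F_r}(1,-2\alpha^r)=1+B_{0100}(-2\alpha^r)\le 2$ by item 2.
\end{enumerate}
The swap gives the same value at $+2\alpha^r$. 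The mirror computation there shows $B_{0010}(2\alpha^r)=0$, and the surviving count is $B_{0001}(2\alpha^r)$.
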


For any function $F$ satisfying \eqref{locallyapn_cordef_eqn}, the differential spectrum of $F$ is defined to be the multiset $DS_F = \{\omega_i : 0\le i \le \delta_F\}$, where
\begin{equation*}
	\omega_i = \#\{b\in \Fpn : \delta_F(1,b)=i\}. 
\end{equation*} 
The following identity for the differential spectrum is well-known:
\begin{equation}\label{DS_identity}
	\sum_{i=0}^{\delta_F}\omega_i = \sum_{i=0}^{\delta_F}i\cdot \omega_i = p^n.
\end{equation}

It is known that $F_r$ is locally-PN with boomerang uniformity $0$ in several cases, such as $r=3^n-2$ for $p=3$ \cite{LWZ24}, and $r=\frac{p^n+1}{4}$ for $p^n\equiv 3 \pmod{8}$ \cite{KK26}. 
Motivated by these results, we establish a general condition under which $F_r$ is locally-PN, determine its differential spectrum, and show that its boomerang uniformity is $0$.

\begin{prop}\label{Locally-PN_prop}
	Under the assumptions of Theorem \ref{KKKK25 DU_thm}, suppose that
	\begin{itemize}
		\item $D_{00}(2)=0$,
		\item $D_{00}(b)=1$ and $D_{01}(b)+D_{10}(b)=1$ do not occur simultaneously for all $b\in \Fpnmul$.
	\end{itemize}
	Then, $F_r$ is locally-PN with $\delta_{F_r}=\frac{p^n+1}{4}$ and the differential spectrum of $F_r$ is given by
	\begin{equation}\label{Diff_spec Locally-PN}
		DS_{F_r} = \left\{\omega_0 = \frac{p^n-3}{4},\ \omega_1 = \frac{3p^n-1}{4},\ \omega_{\frac{p^n+1}{4}}=1\right\}.
	\end{equation}
\end{prop}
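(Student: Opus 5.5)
The plan is to decompose $\delta_{F_r}(1,b)$ into its four pieces and count. For every $b$ we have
$\delta_{F_r}(1,b)=D_{00}(b)+D_{01}(b)+D_{10}(b)+D_{11}(b)$, since the sets $S_{ij}$ cover all $x$ with $x,x+1\neq 0$. The two remaining points $x=0$ and $x=-1$ are handled by the convention that $\chi(0)=0$; for these we place them in the appropriate mixed class, and I would check that \cite{KKKK25} already absorbs them into $D_{01}+D_{10}$. In fact $x=0$ gives $b=2$ and $x=-1$ gives $b=0$ (if $r$ is even), which matches the exceptional values $\{0,2\}$ in Theorem~\ref{KKKK25 DU_thm}.

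The case $b=0$ is immediate from Theorem~\ref{KKKK25 DU_thm}: $D_{11}(0)=\frac{p^n-3}{4}$, $D_{00}(0)=0$ and $D_{01}(0)+D_{10}(0)=0$. Once the point $x=-1$ is accounted for, this gives $\delta_{F_r}(1,0)=\frac{p^n+1}{4}$, which is the value already stated in Theorem~\ref{KKKK25 DU_thm}.

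For $b\neq0$ we have $D_{11}(b)=0$, so $\delta_{F_r}(1,b)=D_{00}(b)+(D_{01}(b)+D_{10}(b))$, with each summand at most $1$. The second hypothesis forbids both being $1$ at once, so $\delta_{F_r}(1,b)\le1$ for all $b\neq 0,2$.

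At $b=2$, Theorem~\ref{KKKK25 DU_thm} gives $D_{01}(2)+D_{10}(2)=0$ and the hypothesis gives $D_{00}(2)=0$. Hence $\delta_{F_r}(1,2)$ counts only the boundary solution $x=0$, if it is not already included, so $\delta_{F_r}(1,2)\le1$. In particular $\delta_{F_r}(1,b)\le1$ for all $b\in\Fpn\setminus\Fp$. Since Lemma~\ref{Fruproperty2_lemma} supplies property \eqref{locallyapn_cordef_eqn}, $F_r$ is locally-PN, and $\delta_{F_r}=\frac{p^n+1}{4}$ (noting $\frac{p^n+1}{4}\ge1$).

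For the spectrum, every $b\ne0$ has $\delta_{F_r}(1,b)\in\{0,1\}$, and $b=0$ contributes $\omega_{\frac{p^n+1}{4}}=1$. The identity \eqref{DS_identity} then gives $\omega_0+\omega_1=p^n-1$ and $\omega_1+\frac{p^n+1}{4}=p^n$. Hence $\omega_1=\frac{3p^n-1}{4}$ and $\omega_0=\frac{p^n-3}{4}$.

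The main obstacle is the bookkeeping at $x\in\{0,-1\}$, i.e.\ making sure the value $b=2$ (and $b=0$) is counted exactly as in \cite{KKKK25}, so that $\delta_{F_r}(1,2)\le1$ rather than $2$. I would first verify directly that $F_r(1)-F_r(0)=2$ is the only contribution at $b=2$ under the stated hypotheses.
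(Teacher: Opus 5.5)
Your proposal is correct and follows essentially the same route as the paper: isolate $x=0$ and $x=-1$ (giving $b=2$ and $b=0$), use Theorem~\ref{KKKK25 DU_thm} with the two hypotheses to get $\delta_{F_r}(1,b)\le 1$ for all $b\ne 0$, and read off the spectrum from \eqref{DS_identity}. To tidy the bookkeeping: $x=0,-1$ lie in no $S_{ij}$ (since $\chi(0)=0$), so they are never absorbed into $D_{01}+D_{10}$ and $\delta_{F_r}(1,2)=1+D_{00}(2)=1$ exactly; also $x=-1$ yields $b=0$ for every $r$, not only even $r$, because $\chi(-1)=-1$ forces $F_r(-1)=0$.
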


\begin{proof}
	If $x=0,-1$ in \eqref{DU_eqn}, then $b=2,0$, respectively. 
	By Theorem \ref{KKKK25 DU_thm}, we have $\delta_{F_r}(1,0)=\frac{p^n+1}{4}$, $\delta_{F_r}(1,2)=1+D_{00}(2)$, and 
	$$
	\delta_{F_r}(1,b)\le D_{00}(b)+D_{01}(b)+D_{10}(b).
	$$
	for all $b\ne 0,2$.
	Hence, it suffices to verify the conditions in this proposition to ensure that $\delta_{F_r}(1,b)\le 1$ for all $b\in \Fpnmul$.
	
	Applying $\omega_{\frac{p^n+1}{4}}=1$ on \eqref{DS_identity}, we have $\omega_1=\frac{3p^n-1}{4}$ and $\omega_0 = \frac{p^n-3}{4}$.
\end{proof}

\begin{prop}\label{BU0_prop}
	Let $p^n \equiv 3\pmod{4}$ and $\gcd(r,p^n-1)\in \{1,2\}$. Then, $\beta_{F_r}=0$ if and only if $\delta_{F_r}(1,b)\le 1$ for all $b\in \Fpnmul$.
\end{prop}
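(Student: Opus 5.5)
The plan is to reduce everything to $a=1$ and to rewrite the boomerang system in terms of the derivative $DF(x):=F_r(x+1)-F_r(x)$. By Lemma~\ref{Fruproperty2_lemma}, $\beta_{F_r}(a,b)=\beta_{F_r}(1,b')$ for some $b'\in\Fpnmul$, and $b\mapsto b'$ is a bijection. Hence $\beta_{F_r}=0$ if and only if $\beta_{F_r}(1,b)=0$ for all $b\in\Fpnmul$. Subtracting the two equations of the system, a pair $(x,y)$ is counted in $\beta_{F_r}(1,b)$ exactly when
\begin{equation*}
DF(x)=DF(y)\quad\text{and}\quad F_r(x)-F_r(y)=b .
\end{equation*}
Since $b\neq 0$, such a pair always has $x\neq y$. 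Two facts will be used repeatedly: $F_r(x)=2x^r$ if $\chi(x)=1$ and $F_r(x)=0$ otherwise, and $\chi(-1)=-1$ because $p^n\equiv 3\pmod 4$.

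\emph{($\Leftarrow$)} Assume $\delta_{F_r}(1,c)\le 1$ for all $c\neq 0$. Then any pair $(x,y)$ with $x\ne y$ counted above must have $DF(x)=DF(y)=0$. I will determine the zero set of $DF$ directly, rather than through Theorem~\ref{KKKK25 DU_thm}, whose hypothesis on $S_{00}$ is not assumed here.
\begin{itemize}
\item $x=0$ gives $DF(0)=2\neq 0$.
\item For $x\in S_{10}$, $DF(x)=2(x+1)^r\ne 0$.
\item For $x\in S_{01}$, $DF(x)=-2x^r\ne0$.
\item For $x\in S_{00}$, $DF(x)=0$ means $((x+1)/x)^r=1$. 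Since $\gcd(r,p^n-1)\mid 2$, this forces $(x+1)/x=\pm1$, hence $x=-1/2$. But then $\chi(x)\chi(x+1)=\chi(-1/4)=-1$, a contradiction.
\end{itemize}
Therefore the zeros of $DF$ are exactly $S_{11}\cup\{-1\}$. All of these points are nonsquares, so $F_r(x)=F_r(y)=0$ for such $x,y$, and $F_r(x)-F_r(y)=0\ne b$. Hence $\beta_{F_r}(1,b)=0$.

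\emph{($\Rightarrow$)} Suppose $\delta_{F_r}(1,c)\ge 2$ for some $c\ne 0$, and pick $x\ne y$ with $DF(x)=DF(y)=c$. Set $b=F_r(x)-F_r(y)$. The pair $(x,y)$ then satisfies the boomerang system with this $b$, so it suffices to show $b\ne 0$. This is the step I expect to need the most care.

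Suppose instead $F_r(x)=F_r(y)$; then also $F_r(x+1)=F_r(y+1)$.
\begin{itemize}
\item If this common value of $F_r(x)$ is nonzero, then $x,y$ are squares with $x^r=y^r$. The gcd condition gives $y=\pm x$, so $y=-x$. This is impossible, since $\chi(-1)=-1$ means $x$ and $-x$ cannot both be squares.
\item If $F_r(x)=F_r(y)=0$, then $F_r(x+1)=F_r(y+1)=c\ne0$, so $x+1,y+1$ are squares with $(x+1)^r=(y+1)^r$. The same argument gives $y+1=-(x+1)$, again a contradiction.
\end{itemize}
Thus $b\ne0$ and $\beta_{F_r}(1,b)\ge 1$, so $\beta_{F_r}\neq 0$. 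This proves the converse.
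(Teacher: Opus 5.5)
Your proof is correct. The direction $(\Rightarrow)$ follows the paper: take $x\neq y$ with $DF(x)=DF(y)=c\neq 0$, set $b=F_r(x)-F_r(y)$, and rule out $b=0$ using $\gcd(r,p^n-1)\mid 2$ and $\chi(-1)=-1$. You split cases by whether the common value $F_r(x)=F_r(y)$ is zero, whereas the paper splits by $\chi(x_1),\chi(x_2)$. Your split is tidier, and it handles the configuration $x_1=0$, $\chi(x_2)=-1$ automatically; there the paper's first bullet (``one of $F_r(x_1)$ and $F_r(x_2)$ is zero and the other is nonzero'') does not literally apply.

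The real difference is in $(\Leftarrow)$. The paper only asserts that the proof of \cite[Theorem 16]{KK26} carries over. You instead give a self-contained argument: you rewrite the system as $DF(x)=DF(y)$, $F_r(x)-F_r(y)=b$, and show that the zero set of $DF$ is exactly $S_{11}\cup\{-1\}$, where $F_r$ vanishes. This also shows where the gcd hypothesis enters this direction. It excludes zeros of $DF$ in $S_{00}$ (your computation forcing $x=-1/2$), and this is genuinely needed on top of $\delta_{F_r}(1,b)\le 1$ for $b\neq 0$: if some $x_0\in S_{00}$ had $DF(x_0)=0$, the pair $(x_0,-1)$ would give $\beta_{F_r}(1,2x_0^r)\ge 1$. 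You obtain this fact directly, not from the extra $S_{00}$ hypothesis of Theorem~\ref{KKKK25 DU_thm}, which the proposition does not assume. The paper's version is shorter on the page but relies on an external proof; yours is complete as written.
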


\begin{proof}
	Since the argument in the proof of \cite[Theorem 16]{KK26} depends only on the property $\delta_{F_r}(1,b)\le 1$, it can be applied without modification to conclude that $\beta_{F_r}=0$.
	
	Conversely, suppose on the contrary that $\delta_{F_r}(1,b)\ge 2$ for some $b\in \Fpnmul$. Then, there exist $x_1 \ne x_2 \in \Fpn$ such that $F_r (x_1+1) - F_r (x_1)= F_r (x_2+1) - F_r (x_2)= b$. Let $c =F_r(x_1) - F_r (x_2) = F_r(x_1+1) - F_r (x_2+1)$.
	
	\begin{itemize}
		\item If $\chi(x_1)\ne \chi(x_2)$, then one of $F_r(x_1)$ and $F_r(x_2)$ is zero and the other is nonzero, and hence $c\ne 0$.
		\item If $\chi(x_1)= \chi(x_2)=-1$, then we have $F_r(x_1+1) = F_r(x_2+1) =b$. If $\chi(x_1+1)=-1$ or $\chi(x_2 +1)=-1$, then we obtain $b=0$, a contradiction. Hence, we have $\chi(x_1+1)=\chi(x_2+1)=1$. Then, $(x_1+1)^r = (x_2+1)^r$.
		\begin{itemize}
			\item If $\gcd(r,p^n-1)=1$, then we have $x_1=x_2$, a contradiction to $x_1 \ne x_2$.
			\item If $\gcd(r,p^n-1)=2$, then $x_1+1 = -(x_2+1)$, since $x_1 \ne x_2$. But, we can see that $\chi(x_1+1) = \chi(-(x_2+1)) = -\chi(x_2+1)$, which is a contradiction to $\chi(x_1+1)=\chi(x_2+1)=1$. 
		\end{itemize} 
		Hence, this case cannot happen.
		\item Assume that $\chi(x_1)= \chi(x_2)=1$. If $\chi(x_1+1)=-1$ or $\chi(x_2 +1)=-1$ then we can easily see that $c\ne 0$. In the case $\chi(x_1+1)=\chi(x_2+1)=1$, suppose on the contrary that $c=0$. Then, we have $x_1^r = x_2^r$. Since $\gcd(r,p^n-1)\in \{1,2\}$ and $x_1 \ne x_2$, we obtain $x_1 = -x_2$, which contradicts $\chi(x_1)= \chi(x_2)=1$. Hence, $c\ne 0$.
	\end{itemize}
	Thus, in every possible case, we have $c\ne 0$. Therefore, $\beta_{F_r} \ge \beta_{F_r}(1,c)\ge 1$.
\end{proof}

\section{On APN Power Functions from \cite{ZW10} and Their Instances}\label{sec_APN}

The following result is a consequence of the analysis in the proof of Theorem 3.1 together with Theorem 4.1 in \cite{ZW10}. 
We state it explicitly for later use.

\begin{theorem}\cite{ZW10}\label{ZW10_theorem}
	Assume that 
	\begin{equation}\label{ZW10 assumption}
		(3^m+1)r-2 = k(3^n-1), \text{ where }r\text{ is even, }k\text{ is odd, }\gcd(m,n)=1.
	\end{equation}
	\begin{itemize}
		\item Let $b\in \Fcn \setminus \F_3$.
		\begin{itemize}
			\item If $\chi\left( b^{3^m+1}-1\right)=-1$, then $(x+1)^r -x^r =b$ has exactly two solutions, one in $S_{00}\cup S_{11}$ and the other in $S_{01}\cup S_{10}$.
			\item If $\chi\left( b^{3^m+1}-1\right)=1$, then $(x+1)^r -x^r =b$ has no solution in $\Fcn \setminus \{0,-1\}$.
		\end{itemize}
		\item If $b = 0, 1, -1$, then $(x+1)^r -x^r =b$ has one solution $x=1, 0, -1$, respectively.
	\end{itemize}
	Hence, $x^r$ is APN.
\end{theorem}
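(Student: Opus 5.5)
We plan to linearize the power map by a substitution of exponent $3^m+1$. Throughout, $n$ is odd, which is implicit in the statement since $\chi$ is used in the setting $3^n\equiv 3\pmod 4$. Put $q=3^m$. Since $\gcd(m,n)=1$ and $n$ is odd, $\gcd(q+1,3^n-1)=2$ and $\gcd(q-1,3^n-1)=2$. Hence $y\mapsto y^{q+1}$ is $2$-to-$1$ from $\Fcnmul$ onto the nonzero squares, and $-1$ is a nonsquare.

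For $x\ne 0$ we write $x=\varepsilon y^{q+1}$ with $\varepsilon=\chi(x)$. Then \eqref{ZW10 assumption} and the evenness of $r$ give
\[
x^r=y^{(q+1)r}=y^{2+k(3^n-1)}=y^2 .
\]
Similarly, for $x\ne -1$ we write $x+1=\varepsilon' z^{q+1}$ with $\varepsilon'=\chi(x+1)$, so that $(x+1)^r=z^2$. Thus a solution $x\notin\{0,-1\}$ of $(x+1)^r-x^r=b$ corresponds to $(y,z,\varepsilon,\varepsilon')$ satisfying
\[
z^2-y^2=b,\qquad \varepsilon' z^{q+1}-\varepsilon y^{q+1}=1 ,
\]
where each $x$ arises from exactly four pairs $(\pm y,\pm z)$.

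For $b\ne 0$ we parametrize $z-y=u$, $z+y=b/u$ with $u\in\Fcnmul$. Since $2^{-1}=-1$ in characteristic $3$, this gives $z=-(u+bu^{-1})$ and $y=u-bu^{-1}$. Substituting and multiplying by $u^{q+1}$, the second equation becomes
\[
\varepsilon'(u^2+b)^{q+1}-\varepsilon(u^2-b)^{q+1}=u^{q+1}.
\]
Write $A=u^2$ and $B=b$.
\begin{itemize}
\item In the case $\varepsilon=\varepsilon'$ (that is, $x\in S_{00}\cup S_{11}$), the left side is $2\varepsilon(A^qB+AB^q)$. After dividing by $u^2$ and setting $v=u^{q-1}$, this yields $\varepsilon b v^2 - v+\varepsilon b^q=0$, up to normalization. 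Its discriminant is $1-4b^{q+1}=1-b^{q+1}$.
\item In the case $\varepsilon\ne\varepsilon'$ (that is, $x\in S_{01}\cup S_{10}$), the left side is $2\varepsilon'(A^{q+1}+B^{q+1})$. With $t=u^{q+1}$ this gives $t^2+\varepsilon' t+b^{q+1}=0$, whose discriminant is again $1-b^{q+1}$.
\end{itemize}
So in both cases solvability is governed by $\chi(1-b^{q+1})=-\chi(b^{q+1}-1)$. If $\chi(b^{q+1}-1)=1$, neither quadratic has a root, so there is no solution outside $\{0,-1\}$. If $b^{q+1}=1$, each quadratic has a double root. For $b\notin\F_3$ we will check that this double root forces $y$ or $z$ to vanish, so it contributes nothing. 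We will also verify directly that this case is excluded, since $b^{q+1}=1$ with $\gcd(q+1,3^n-1)=2$ forces $b=\pm1$.

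The main obstacle is the exact count when $\chi(b^{q+1}-1)=-1$. Each quadratic then has two roots, and we must determine how many yield admissible $u$ and consistent signs. In the first case a root $v$ must be a square, because it lies in the image of $u\mapsto u^{q-1}$, and then it gives two values $\pm u$. In the second case $t$ must be a nonzero square, giving two values $\pm u$. After that we must check that the resulting $y,z$ satisfy $\chi(x)=\varepsilon$ and $\chi(x+1)=\varepsilon'$ automatically; they do by construction, since $x=\varepsilon y^{q+1}$. Finally we quotient by the fourfold symmetry $(y,z)\mapsto(\pm y,\pm z)$.

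For this step we intend to use the two root relations. In the first case the product of the roots is $b^{q-1}$, a square, so the two roots are either both squares or both nonsquares. In the second case the product of the roots is $b^{q+1}$, a square, and the sum is $\mp1$. Combined with the fact that $-1$ is a nonsquare, and with the sign bookkeeping over the choices of $\varepsilon$, we expect this to show that exactly one choice of $\varepsilon$ in each case survives and produces exactly one $x$. We anticipate that this sign bookkeeping is where the hypotheses that $k$ is odd and $r$ is even are really used.

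The values $b\in\{0,\pm1\}$ will be handled directly. For $b=0$, $(x+1)^r=x^r$ leads to $x+1=\pm x$, so $x=1$. For $b=\pm1$ we rerun the analysis above with $b^{q+1}=1$ to rule out solutions other than $x=0$ and $x=-1$ respectively; these two values are verified by direct substitution using $(-1)^r=1$.

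Combining all cases gives $\delta(1,b)\le 2$ for every $b$. By \eqref{dupower_eqn} this shows that $x^r$ is APN.
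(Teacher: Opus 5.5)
The paper does not prove this theorem; it imports it from \cite{ZW10} (the proof of Theorem~3.1 there together with Theorem~4.1). Your proposal therefore supplies a self-contained argument the paper omits, and it is correct. It also uses the same mechanism the paper itself applies later, in the proof of the locally-PN theorem of Section~\ref{subsec_r=ZW10}. There $(x_0+1)^r-x_0^r$ is split as a difference of squares with a parameter $t$, raised to the power $3^m+1$, and reduced to $1=b(t^{3^m-1}+b^{3^m-1}t^{1-3^m})$, a quadratic in $t^{3^m-1}$. Your Case~1 with $v=u^{q-1}$ is exactly that computation, and your Case~2 with $t=u^{q+1}$ is its mixed-sign analogue. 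What your version adds is the exact count and location of the solutions for every $b$, rather than only the $S_{00}$ information needed there.

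The step you leave as an expectation closes in one line.
\begin{itemize}
\item Changing $\varepsilon$ to $-\varepsilon$ in $\varepsilon b v^2+v+\varepsilon b^q=0$ replaces its roots by their negatives. With your conventions the linear term is $+v$, which is immaterial. The same holds when changing $\varepsilon'$ to $-\varepsilon'$ in $t^2+\varepsilon' t+b^{q+1}=0$.
\item In each case the two roots are distinct and nonzero, and their product $b^{q\mp1}$ is a square, so they are both squares or both nonsquares.
\item Since $-1$ is a nonsquare, exactly one sign choice yields square roots. That choice gives exactly four admissible $u$, hence exactly one $x$ by your $4$-to-$1$ correspondence.
\end{itemize}

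A few details need to be filled in or corrected.
\begin{itemize}
\item You should state that $y,z\neq0$ is automatic when $b^{q+1}\neq1$. Indeed, $u^2=\pm b$ in $\varepsilon'(u^2+b)^{q+1}-\varepsilon(u^2-b)^{q+1}=u^{q+1}$ forces $b^{q+1}=1$.
\item Your remark about the double root ``for $b\notin\F_3$'' should refer to $b=\pm1$. In Case~1 the double root is $v=\varepsilon b$, which forces $u^{q-1}=1$ and hence $u=\pm1$; this uses $\gcd(m,n)=1$. In Case~2 it is $t=\varepsilon'$, which forces $u^2=1$. Either way $u^2=1=\pm b$, so $y=0$ or $z=0$.
\item Your attribution of the hypotheses is off. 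The condition that $k$ is odd is never used. It is forced by $r$ even via \eqref{ZW10 assumption} modulo $4$, and so is $n$ odd. Evenness of $r$ enters only through $x^r=y^2$. The sign count rests on $\chi(-1)=-1$ and $\gcd(3^m\pm1,3^n-1)=2$.
\item For $b=0$, the step $x+1=\pm x$ needs $\gcd(r,3^n-1)=2$, which also follows from \eqref{ZW10 assumption}.
\item APN requires the value $2$ to be attained. This holds for $n\ge3$, because only the three $b\in\F_3$ have a single solution while the counts sum to $3^n$.
\end{itemize}
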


Note that the differential spectrum of the APN power functions arising from \cite{ZW10} can be derived from Theorem \ref{ZW10_theorem}. 
However, to the best of our knowledge, this spectrum has not been explicitly stated in the literature. 
We therefore summarize the corresponding differential spectrum in the following theorem. We remark that the special case $r = \frac{3^{n+1} -1}{4}$ of this APN class (see Remark~\ref{New APN_corollary remark} (ii)) was recently studied in \cite{XBCH26}, where its differential spectrum was explicitly determined.

\begin{theorem}\label{DS_theorem x^(3^m-1)} 
	If \eqref{ZW10 assumption} holds, then the differential spectrum of $F(x)=x^r$ is given by
	\begin{equation*}
		DS_F = \left\{ 	\omega_0 = \omega_2 = \frac{3^n-3}{2}, \ \ \omega_1= 3 \right\}.
	\end{equation*}
\end{theorem}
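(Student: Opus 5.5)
By \eqref{dupower_eqn}, $\delta_F=\Max_b\delta_F(1,b)$, so it suffices to compute $\delta_F(1,b)$ for every $b\in\Fcn$ and count how many $b$ give each value. I will read these counts off Theorem~\ref{ZW10_theorem}, handling $x\in\{0,-1\}$ separately from the rest of $\Fcn$, and then fix the remaining count with the identities \eqref{DS_identity}.

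The first step deals with the exceptional inputs and the base field. Since $r>1$, we have $x=0\mapsto (0+1)^r-0^r=1$. Since $r$ is even, $x=-1\mapsto 0-(-1)^r=-1$. Thus the values $b=\pm1$ each already have one solution among $\{0,-1\}$. By the last bullet of Theorem~\ref{ZW10_theorem}, $\delta_F(1,0)=\delta_F(1,1)=\delta_F(1,-1)=1$. This contributes exactly $3$ to $\omega_1$. I should confirm that the theorem's statement for $b\in\F_3$ means the total count over all of $\Fcn$, including $x=0,-1$; this is consistent with the computation above.

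The second step treats $b\in\Fcn\setminus\F_3$. Neither $x=0$ nor $x=-1$ can be a solution, since they give $b=\pm1$. Theorem~\ref{ZW10_theorem} then gives $\delta_F(1,b)=2$ when $\chi(b^{3^m+1}-1)=-1$, and $\delta_F(1,b)=0$ when $\chi(b^{3^m+1}-1)=1$. I also need $b^{3^m+1}\neq1$ for such $b$. Because $\gcd(m,n)=1$ and $n$ is odd (the assumption \eqref{ZW10 assumption} with $k$ odd forces $3^n-1\equiv 2\pmod 4$ here), we get $\gcd(3^m+1,3^n-1)=2$. Hence $b^{3^m+1}=1$ implies $b^2=1$, i.e.\ $b=\pm1$, which is excluded. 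So every $b\notin\F_3$ has $\delta_F(1,b)\in\{0,2\}$, and no further value contributes to $\omega_1$.

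The final step is counting. Write $N_0$ and $N_2$ for the numbers of $b\notin\F_3$ with $\delta_F(1,b)=0$ and $2$. Then $N_0+N_2=3^n-3$. The second identity in \eqref{DS_identity} gives $3+2N_2=3^n$, so $N_2=\frac{3^n-3}{2}$ and $N_0=\frac{3^n-3}{2}$. This gives the claimed spectrum $\omega_0=\omega_2=\frac{3^n-3}{2}$, $\omega_1=3$, with no character sum needed. The only delicate point I foresee is the gcd argument that $b^{3^m+1}\ne 1$ off $\F_3$, which needs $n$ odd. If that cannot be extracted from \eqref{ZW10 assumption}, the fallback is to observe that Theorem~\ref{ZW10_theorem} is stated as exhaustive for $b\notin\F_3$, so its two cases already cover all such $b$.
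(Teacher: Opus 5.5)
Your proposal is correct and follows the paper's proof. Both read off $\delta_F(1,b)\in\{0,1,2\}$ from Theorem~\ref{ZW10_theorem} to get $\omega_1=3$, then solve the two identities in \eqref{DS_identity} for $\omega_0$ and $\omega_2$; your additional check that $\gcd(3^m+1,3^n-1)=2$ (so $\chi(b^{3^m+1}-1)\neq 0$ for $b\notin\F_3$, making the theorem's two cases exhaustive) is a valid detail the paper leaves implicit.
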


\begin{proof}
	From Theorem \ref{ZW10_theorem}, we have
	\begin{equation*}
		\delta_{F}(1,b) =
		\begin{cases}
			2 &\text{ if }\chi\left( b^{3^m+1}-1 \right) =-1, b\in \Fcn \setminus \F_3,\\
			1 &\text{ if }b \in \F_3,\\
			0 &\text{ if }\chi\left( b^{3^m+1}-1 \right) =1, b\in \Fcn \setminus \F_3.
		\end{cases}
	\end{equation*}
	In particular, $\omega_1 =3$. Using $\delta_F=2$ and \eqref{DS_identity}, we obtain the desired differential spectrum.
\end{proof}

We note that condition \eqref{ZW10 assumption} cannot hold when $n$ is even, as can be seen by considering \eqref{ZW10 assumption} modulo $4$.
The following proposition provides a systematic construction of APN power functions arising from the framework of Theorem \ref{ZW10_theorem} for odd $n$. 
In particular, for any integer $m$ with $\gcd(m,n)=1$, it yields an explicit exponent $r$ such that $x^r$ is APN.

\begin{prop}\label{ZW10_general_prop}
	Let $n$ be odd and $\gcd(m,n)=1$. Choose an integer $u$ with $1\le u \le n-1$ such that $um\equiv 1 \pmod{n}$.
	\begin{enumerate}[(i)]
		\item If $u$ is even, then $x^r$, where
		\begin{equation*}
			r= \Frac{3^{um}-1}{3^m+1},
		\end{equation*}
		is APN.
		\item If $u$ is odd, then $x^r$, where
		\begin{equation*}
			r = \Frac{1-3^{(n-u)m}}{1+3^m},
		\end{equation*}
		is APN.
	\end{enumerate}
	Here and throughout the paper, exponents are considered modulo $3^n-1$.
\end{prop}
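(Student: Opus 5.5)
The plan is to check directly that, for a suitable representative of the exponent (or of a Frobenius-equivalent exponent), condition \eqref{ZW10 assumption} of Theorem~\ref{ZW10_theorem} holds, and then invoke that theorem. First note that the parity conditions are well defined modulo $3^n-1$. Since $3^n-1$ is even, replacing $r$ by $r+c(3^n-1)$ does not change the parity of $r$. It changes $k$ by $c(3^m+1)$, which is even, so the parity of $k$ is unchanged as well. Hence I may work with any integer representative, including a negative one. I will also use that composing with the Frobenius map $x\mapsto x^3$ preserves APN-ness, so $x^r$ is APN if and only if $x^{3r}$ is.

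\emph{Case (i), $u$ even.} Since $u$ is even, $3^{2m}-1=(3^m-1)(3^m+1)$ divides $3^{um}-1$, so $r$ is an integer. Moreover
\begin{equation*}
r=(3^m-1)\bigl(1+3^{2m}+\cdots+3^{(u-2)m}\bigr)
\end{equation*}
is even, because $3^m-1$ is even. Write $um=1+tn$. Then
\begin{equation*}
(3^m+1)r-2=3^{um}-3=3\bigl(3^{tn}-1\bigr)=k(3^n-1), \qquad k=3\sum_{i=0}^{t-1}3^{in}.
\end{equation*}
This $k$ is a sum of $t$ odd terms times $3$, so it is odd exactly when $t$ is odd. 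Since $um$ is even and $n$ is odd, the equation $1+tn=um$ forces $tn$ to be odd, so $t$ is odd. Also $\gcd(m,n)=1$ by hypothesis, so Theorem~\ref{ZW10_theorem} applies.

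\emph{Case (ii), $u$ odd.} Now $n-u$ is even, so as before $r=-(3^{(n-u)m}-1)/(3^m+1)$ is an even integer. Here I expect the main subtlety: $r$ itself need not satisfy \eqref{ZW10 assumption}, because $(3^m+1)r\equiv 1-3^{-1}$ modulo $3^n-1$. The fix is to pass to $3r$. Writing $um=1+tn$ again, we get
\begin{equation*}
(3^m+1)(3r)-2 = 3-3^{(n-u)m+1}-2 = 1-3^{n(m-t)}.
\end{equation*}
This equals $k(3^n-1)$ with $k=-\sum_{i=0}^{s-1}3^{in}$ and $s=m-t\ge 1$. The bound $s\ge1$ holds because $(n-u)m\ge1$. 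Then $k$ is odd exactly when $s$ is odd. Reducing $um=1+tn$ modulo $2$ with $u,n$ odd gives $m\equiv 1+t$, so $s=m-t$ is odd. Also $3r$ is even. So $3r$ satisfies \eqref{ZW10 assumption}, and Theorem~\ref{ZW10_theorem} shows that $x^{3r}$ is APN. Since $x^r$ is obtained from $x^{3r}$ by composing with the inverse of Frobenius, $x^r$ is APN too.
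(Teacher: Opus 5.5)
Your proof is correct and takes essentially the same route as the paper. In case (i) you verify \eqref{ZW10 assumption} directly for $r$ with $k=3(3^{tn}-1)/(3^n-1)$ and $t$ odd. In case (ii) you verify it for $3r$ with $s=m-t$ odd and then undo the Frobenius twist, exactly as the paper does with $d=3r$. Your opening remark, that the parities of $r$ and $k$ are unchanged under $r\mapsto r+c(3^n-1)$, explicitly justifies using negative representatives and negative $k$, which the paper does only implicitly.
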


\begin{proof} Let $t = \frac{um-1}{n}$.\\
	(i) Since $u$ is even, 
	\begin{equation*}
		r = \frac{3^{um}-1}{3^m+1} = 3^{m(u-1)} - 3^{m(u-2)} + \cdots +3^m -1 \equiv u \pmod{2}
	\end{equation*}
	is also even. Moreover, it follows that $nt = um-1$ is odd. Thus, $t$ is also odd, and hence 
	\begin{equation*}
		k = \frac{3(3^{tn}-1)}{3^n-1} = 3 (3^{(t-1)n}+ 3^{(t-2)n}+ \cdots +1) \equiv t \pmod{2}
	\end{equation*}
	is also odd. Moreover, we have
	\begin{equation*}
		(3^m+1)r = 3^{um}-1 = 3^{tn+1} -1 = 3(3^{tn}-1)+2 = k(3^n-1)+2.
	\end{equation*}
	Therefore, \eqref{ZW10 assumption} holds, and hence $x^r$ is APN, by Theorem \ref{ZW10_theorem}.\\
	(ii) Since $u$ is odd, 
	\begin{equation*}
		r = \frac{1-3^{(n-u)m}}{1+3^m} = 1 - 3^m + \cdots + 3^{m(n-u-2)}- 3^{m(n-u-1)}  \equiv n-u \pmod{2}
	\end{equation*}
	is even. Moreover, it follows that $n(m-t)=m(n-u)+1$ is odd. Thus, $m-t=\frac{m(n-u)+1}{n}$ is also odd. Hence, we can see that 
	\begin{equation*}
		k = \frac{1-3^{n(m-t)}}{1-3^n} = 1+ 3^n + \cdots + 3^{(m-t-1)n} \equiv m-t \pmod{2}
	\end{equation*} 
	is odd. Let $d=3r$. Then, we obtain
	\begin{equation*}
		(3^m+1)d = 3(1-3^{(n-u)m}) = (1-3^{n(m-t)}) +2 = -k(3^n-1)+2.
	\end{equation*}
	Therefore, $d$ satisfies \eqref{ZW10 assumption}, and hence $x^d$ is APN, by Theorem \ref{ZW10_theorem}. Therefore, $x^r$ is also APN.
\end{proof}

Proposition~\ref{ZW10_general_prop} provides an explicit construction of an even exponent $r$ satisfying \eqref{ZW10 assumption}. The following result shows that such an $r$ is unique up to linear equivalence (or equivalently, up to cyclotomic cosets).

\begin{prop}\label{ZW10_unique_prop}
	Let $n$ be odd. For each $m$ coprime to $n$, there is a unique even residue $r$ modulo $3^n-1$ satisfying \eqref{ZW10 assumption}.
\end{prop}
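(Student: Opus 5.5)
Throughout, I read \eqref{ZW10 assumption} as a congruence condition on a residue $r$ modulo $3^n-1$: there should exist an odd integer $k$ with $(3^m+1)r-2=k(3^n-1)$. The plan is to prove existence and uniqueness separately.

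\emph{Existence.} Proposition~\ref{ZW10_general_prop} already does this work. When $u$ is even, its proof shows that $r=\frac{3^{um}-1}{3^m+1}$ is even and satisfies \eqref{ZW10 assumption}. When $u$ is odd, the same proof shows that $d=3r$ satisfies \eqref{ZW10 assumption}, and $d$ is even. So in either case an even residue exists.

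\emph{Uniqueness.} Suppose $r_1,r_2$ are even residues satisfying \eqref{ZW10 assumption} with odd $k_1,k_2$. Subtracting the two equations gives
$$(3^m+1)(r_1-r_2)=(k_1-k_2)(3^n-1).$$
The first step is to compute $g=\gcd(3^m+1,3^n-1)$. Using $\gcd(3^{2m}-1,3^n-1)=3^{\gcd(2m,n)}-1=2$ (since $n$ is odd and $\gcd(m,n)=1$), and the fact that $3^m+1$ divides $3^{2m}-1$, we get $g=2$. Writing $3^m+1=2A$ and $3^n-1=2B$ with $\gcd(A,B)=1$, the displayed equation becomes $A(r_1-r_2)=(k_1-k_2)B$. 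Coprimality then forces $r_1\equiv r_2 \pmod{B}$, where $B=\frac{3^n-1}{2}$.

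So $r_1-r_2\in\{0,B\}$ modulo $3^n-1$. Since $n$ is odd, $3^n\equiv 3\pmod 4$, so $B$ is odd. Then $r_2=r_1+B$ would be odd, contradicting the evenness of $r_2$. Hence $r_1\equiv r_2\pmod{3^n-1}$.

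It remains to check that evenness is well defined on residues: because $3^n-1$ is even, the parity of a residue modulo $3^n-1$ does not depend on the chosen representative. The same observation shows that the parity condition on $k$ is compatible with changing representatives, since replacing $r$ by $r+(3^n-1)$ replaces $k$ by $k+3^m+1$, which has the same parity.

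The step needing the most care is this last bookkeeping: making sure that "even" and "$k$ odd" are consistent notions modulo $3^n-1$. The gcd computation itself is routine.
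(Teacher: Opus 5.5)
Your proof is correct and follows essentially the same route as the paper: existence via Proposition~\ref{ZW10_general_prop}, then $\gcd(A,B)=1$ forces $B \mid r_1-r_2$, and since $B$ is odd and $r_1-r_2$ is even, this gives $3^n-1 \mid r_1-r_2$. The only differences are minor. You obtain $\gcd(3^m+1,3^n-1)=2$ from the identity $\gcd(3^a-1,3^b-1)=3^{\gcd(a,b)}-1$, whereas the paper uses the multiplicative order of $3$ modulo a putative odd common prime divisor. You also explicitly check that the parity conditions on $r$ and $k$ are well defined modulo $3^n-1$, which the paper leaves implicit.
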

\begin{proof}
	The existence follows from Proposition~\ref{ZW10_general_prop}. Let
	\[
	A=\frac{3^m+1}{2},
	\qquad
	B=\frac{3^n-1}{2}.
	\]
	We first claim that $\gcd(A,B)=1$.  Indeed, if an odd prime $\ell$ divided both $A$ and $B$, then
	\[
	3^m \equiv -1 \pmod \ell,
	\qquad
	3^n \equiv 1 \pmod \ell.
	\]
	Therefore the order of $3$ modulo $\ell$ divides both $2m$ and $n$.  Since $n$ is odd and
	$\gcd(m,n)=1$, we have $\gcd(2m,n)=1$, so this order must be $1$.  But then $3 \equiv 1 \pmod \ell$,
	which contradicts $3^m \equiv -1 \pmod \ell$.  Hence $\gcd(A,B)=1$.
	
	Now let $(r_1,k_1)$ and $(r_2,k_2)$ be two solutions of \eqref{ZW10 assumption} with $r_1,r_2$ even and $k_1,k_2$ odd.
	Subtracting the two equations gives
	\[
	A(r_1-r_2)=B(k_1-k_2).
	\]
	Because $\gcd(A,B)=1$, it follows that $B \mid (r_1-r_2)$.  Since $B$ is odd and $r_1-r_2$ is even,
	we actually have
	\[
	2B=3^n-1 \mid (r_1-r_2).
	\]
	Thus $r_1 \equiv r_2 \pmod{3^n-1}$.  In particular, all even solutions give the same residue class,
	hence the same cyclotomic coset.
\end{proof}

\begin{remark}\label{ZW10_general_remark}
	Our exhaustive computational experiments via SageMath indicate that Proposition~\ref{ZW10_general_prop} provides a broad coverage of APN power functions in characteristic $3$. 
	More precisely, for $n \le 13$, among the APN power functions found in our experiments, all those not listed in Table~1 of \cite{BP25} appear to be accounted for by Proposition~\ref{ZW10_general_prop}.
	In particular, Table~\ref{table_ZW10 exponents} in Section \ref{sec_table} illustrates how the APN exponents arising from this construction explain several such cases, while also serving as a source of functions $F_r$ with boomerang uniformity $0$.
\end{remark}

\begin{remark}\label{ZW10_symmetry_remark}
	The following observation shows that the cases $m$ and $n-m$ in Proposition~\ref{ZW10_general_prop} give rise to linearly equivalent APN power functions, so that these two choices of $m$ should be regarded as equivalent.
	
	Let $m_2 = n-m_1$ with $\gcd(m_1,n)=1$, and let $r_i$ be the exponent obtained from Proposition~\ref{ZW10_general_prop} for $m=m_i$, where $i=1,2$.
	Let $u_i$ be the integer such that $1 \le u_i \le n-1$ and $m_i u_i \equiv 1 \pmod{n}$ for $i=1,2$. Since
	$
	(n-u_1)m_2 = (n-u_1)(n-m_1) \equiv u_1m_1 \equiv 1 \pmod{n},
	$
	we have $u_2 = n-u_1$. For convenience, we assume that $u_1$ is even; the case where $u_1$ is odd follows similarly. Then
	\begin{align*}
		3^{m_1(u_1-1)}r_2
		&= 3^{m_1(u_1-1)} \cdot \frac{1-3^{(n-u_2)m_2}}{1+3^{m_2}} = 3^{m_1(u_1-1)} \cdot \frac{1-3^{u_1(n-m_1)}}{1+3^{n-m_1}} \\
		&= 3^{m_1(u_1-1)}
		\left(1-3^{n-m_1}+3^{2(n-m_1)}-\cdots+3^{(u_1-2)(n-m_1)}-3^{(u_1-1)(n-m_1)}\right) \\
		&= 3^{m_1(u_1-1)} - 3^{n+m_1(u_1-2)} + 3^{2n+m_1(u_1-3)} - \cdots + 3^{(u_1-2)n+m_1} - 3^{(u_1-1)n} \\
		&\equiv 3^{m_1(u_1-1)} - 3^{m_1(u_1-2)} + 3^{m_1(u_1-3)} - \cdots + 3^{m_1} - 1 \pmod{3^n-1} \\
		&= \frac{3^{u_1m_1}-1}{3^{m_1}+1}
		= r_1.
	\end{align*}
	Therefore, $x^{r_1}$ and $x^{r_2}$ are linearly equivalent. Hence, it suffices to consider $m \le \frac{n-1}{2}$ when enumerating APN power functions arising from Proposition~\ref{ZW10_general_prop}.
\end{remark}

The following proposition reformulates Proposition~\ref{ZW10_general_prop} in a more explicit way, implying two families of APN exponents depending on divisibility conditions on $n\pm 1$.

\begin{prop}\label{Our_general_prop}
	Let $n$ be odd.
	\begin{enumerate}[(i)]
		\item If $m\mid n+1$ with $\frac{n+1}{m}$ is even and $m\ne 1$, then $x^r$, where
		\begin{equation*}
			r= \frac{3^{n+1}-1}{3^m+1},
		\end{equation*}
		is APN.
		\item If $m\mid n-1$ with $\frac{n-1}{m}$ is even, then $x^r$, where
		\begin{equation*}
			r = \Frac{1-3^{n-1}}{1+3^m},
		\end{equation*}
		is APN.
	\end{enumerate}
\end{prop}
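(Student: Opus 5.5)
The plan is to obtain both families as direct specializations of Proposition~\ref{ZW10_general_prop}. In each case I choose the integer $u$ so that the product $um$, or $(n-u)m$, equals $n+1$ or $n-1$ exactly, and not merely modulo $n$. The exponent given by Proposition~\ref{ZW10_general_prop} then collapses to the displayed closed form. Throughout I take $n\ge 3$, since for $n=1$ the statement is vacuous or degenerate.

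For (i), suppose $m\mid n+1$, and set $u=\frac{n+1}{m}$, which is even by hypothesis. First, $\gcd(m,n)=1$, because any common divisor of $m$ and $n$ divides both $n+1$ and $n$. Second, $um=n+1\equiv 1\pmod n$. Third, $u$ lies in the required range $1\le u\le n-1$. This is where the hypothesis $m\neq 1$ is used: $m=1$ would give $u=n+1$. For $m\ge 2$ we get $u\le \frac{n+1}{2}\le n-1$. Since the inverse of $m$ modulo $n$ in $[1,n-1]$ is unique, this $u$ is the one in Proposition~\ref{ZW10_general_prop}. Case (i) of that proposition then gives the exponent $\frac{3^{um}-1}{3^m+1}=\frac{3^{n+1}-1}{3^m+1}$, as claimed.

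For (ii), suppose $m\mid n-1$, and set $s=\frac{n-1}{m}$, which is even by hypothesis, and $u=n-s$. Again $\gcd(m,n)=1$, since a common divisor of $m$ and $n$ divides $n-1$ and $n$. Next, $um=nm-(n-1)\equiv 1\pmod n$. Since $s$ is even and $n$ is odd, $u$ is odd. The range condition also holds: $s\ge 2$ gives $u\le n-2$, and $s\le n-1$ gives $u\ge 1$. Case (ii) of Proposition~\ref{ZW10_general_prop} then applies, with $(n-u)m=sm=n-1$. The resulting exponent is $\frac{1-3^{n-1}}{1+3^m}$, as claimed.

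No step is a real obstacle; the argument is bookkeeping. The only points needing care are the range condition $1\le u\le n-1$, which is exactly why $m\ne 1$ is excluded in (i), and the parity of $u$, which determines which case of Proposition~\ref{ZW10_general_prop} applies.
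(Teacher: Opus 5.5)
Your proposal is correct and follows the paper's argument: in (i) you take $u=\frac{n+1}{m}$, in (ii) you take $u=n-\frac{n-1}{m}$, and you specialize Proposition~\ref{ZW10_general_prop}(i) and (ii) respectively. You also check two things the paper leaves implicit: that $\gcd(m,n)=1$, and that $1\le u\le n-1$, which is exactly where $m\ne 1$ is needed in (i).
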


\begin{proof}
	(i)	Observe that $u = \frac{n+1}{m}$ is even. Since $um=n+1 \equiv 1 \pmod{n}$, Proposition~\ref{ZW10_general_prop}(i) implies that if
	$$
	r=\frac{3^{um}-1}{3^m+1}=\frac{3^{n+1}-1}{3^m+1},
	$$
	then $x^r$ is APN.\\
	(ii) Let $u'=\frac{n-1}{m}$. Observe that $u = n-u' = n-\frac{n-1}{m}$ is odd. Since $um=n(m-1)+1 \equiv 1 \pmod{n}$, Proposition~\ref{ZW10_general_prop}(ii) implies that if
	$$
	r=\frac{1-3^{(n-u)m}}{1+3^m}=\frac{1-3^{u'm}}{1+3^m}=\frac{1-3^{n-1}}{1+3^m},
	$$
	then $x^r$ is APN.
\end{proof}

The following APN class obtained in \cite{Led12} can be recovered as a special case of Proposition~\ref{Our_general_prop} (i).

\begin{corollary}\cite{Led12}\label{Led12_corollary}
	Let $n \equiv -1 \pmod{2^\ell}$ and $m=\frac{n+1}{2^\ell}$ where $\ell$ is a positive integer. Then, the function $x^r$, where
	$$
	r = \frac{3^{n+1}-1}{3^m+1} = \frac{3^{n+1}-1}{3^{\frac{n+1}{2^\ell}}+1},
	$$
	is APN.
\end{corollary}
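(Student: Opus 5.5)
This corollary follows directly from Proposition~\ref{Our_general_prop}~(i), applied with $m=\frac{n+1}{2^\ell}$. Two things need checking.

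\textbf{Hypotheses.} The parity of $n$ must be verified, together with the divisibility and parity conditions on $m$:
\begin{itemize}
\item Since $n\equiv -1 \pmod{2^\ell}$ with $\ell\ge 1$, we have $2\mid n+1$, so $n$ is odd, as Proposition~\ref{Our_general_prop} requires.
\item By definition $m=\frac{n+1}{2^\ell}$ is a positive integer and $m\mid n+1$.
\item The quotient is $\frac{n+1}{m}=2^\ell$, which is even because $\ell\ge 1$.
\end{itemize}

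\textbf{The side condition $m\neq 1$.} Here I would first argue that the case $m=1$ (that is, $n+1=2^\ell$) is harmless even though Proposition~\ref{Our_general_prop}~(i) formally excludes it. Taking $m=1$ gives $\gcd(m,n)=1$ and $u\equiv 1\pmod n$, so the construction of Proposition~\ref{ZW10_general_prop} does not directly give $r=\frac{3^{n+1}-1}{4}$.

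\emph{If $n\ge 3$:} write $r=\frac{3^{n+1}-1}{4}$. Then
\begin{equation*}
4r-2=3^{n+1}-3=3(3^n-1),
\end{equation*}
so \eqref{ZW10 assumption} holds with $m=1$ and $k=3$ odd. It remains to see that $r$ is even. Since $n+1$ is even, $3^{n+1}=9^{\frac{n+1}{2}}\equiv 1\pmod 8$, so $r=\frac{3^{n+1}-1}{4}$ is even. Theorem~\ref{ZW10_theorem} then gives that $x^r$ is APN.

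\emph{If $n=1$:} APN holds trivially, since every power map that permutes $\F_3$ is APN; the case is degenerate anyway.

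The more likely reading is that the corollary tacitly assumes $m\ne 1$ and only the main case is intended. I would state the main case through Proposition~\ref{Our_general_prop}~(i) and handle $m=1$ by the direct check above.

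\textbf{Conclusion.} Substituting $m=\frac{n+1}{2^\ell}$ into $r=\frac{3^{n+1}-1}{3^m+1}$ gives exactly the stated exponent, so $x^r$ is APN.

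The only real obstacle is the excluded case $m=1$. Everything else is a direct verification of hypotheses.
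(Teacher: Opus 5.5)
Your argument for $m\neq 1$ is the paper's: you check that $n$ is odd, that $m\mid n+1$ and that $\frac{n+1}{m}=2^\ell$ is even, and then invoke Proposition~\ref{Our_general_prop}(i).

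For $m=1$ your route differs slightly. The paper cites $\frac{3^{n+1}-1}{4}$ as a known APN exponent from \cite{HRS99} (cf.\ Remark~\ref{New APN_corollary remark}(ii)). You instead verify \eqref{ZW10 assumption} directly with $k=3$, and get evenness of $r$ from $3^{n+1}\equiv 1 \pmod 8$. This is correct and keeps the corollary self-contained within Theorem~\ref{ZW10_theorem} at no extra cost.

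Your aside on $n=1$ needs correcting. There $r=2$, and $x^2$ does not permute $\F_3$. Moreover, the only permutation power map of $\F_3$ is the identity, which has $\delta=3$, so the reason you give is false. In fact $(x+a)^2-x^2=2ax+a^2$ is bijective in $x$, so $x^2$ is PN on $\F_3$. Under the paper's definition (APN means $\delta_F=2$), the case $n=1$ is degenerate and should simply be excluded, as the paper does tacitly.
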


\begin{proof}
	Since $n=2^\ell m-1$, we have $m\mid n+1$ and $\frac{n+1}{m}=2^\ell$ is even. 
	If $m\ne 1$, then the result follows from Proposition~\ref{Our_general_prop}(i). If $m=1$, then $r=\frac{3^{n+1}-1}{4}$, which is the known APN exponent from \cite{HRS99}; see also Remark \ref{New APN_corollary remark} (ii).
\end{proof}

By symmetry, one may also consider the case where $\frac{n-1}{m}$ is a power of $2$. 
This leads to the following APN class.

\begin{corollary}\label{New APN_corollary}
	Let $n \equiv 1 \pmod{2^\ell}$ and $m=\frac{n-1}{2^\ell}$ where $\ell$ is a positive integer. Then the function $x^r$, where
	$$
	r = \frac{1-3^{n-1}}{1+3^m} = \frac{1-3^{n-1}}{1+3^{\frac{n-1}{2^\ell}}},
	$$
	is APN.
\end{corollary}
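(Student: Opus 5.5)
The plan is to obtain Corollary~\ref{New APN_corollary} as a direct specialization of Proposition~\ref{Our_general_prop}(ii), mirroring the way Corollary~\ref{Led12_corollary} was derived from part (i). The only hypotheses of Proposition~\ref{Our_general_prop}(ii) are that $n$ is odd, that $m \mid n-1$, and that $\frac{n-1}{m}$ is even. The coprimality $\gcd(m,n)=1$ needed in Proposition~\ref{ZW10_general_prop} is automatic there, since $m \mid n-1$.

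First, $n \equiv 1 \pmod{2^\ell}$ with $\ell \ge 1$ forces $n$ to be odd. Next, setting $m = \frac{n-1}{2^\ell}$ gives $n-1 = 2^\ell m$. Hence $m \mid n-1$ and $\frac{n-1}{m} = 2^\ell$, which is even because $\ell \ge 1$.

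The one point to watch is $m = 0$, which happens when $n = 1$. In that case $m$ is not a positive divisor and $\gcd(m,n)=\gcd(0,1)=1$ is degenerate. I would either assume $n > 1$ implicitly, or note that over $\F_3$ every permutation is trivially APN. Indeed, for $n=1$ we get $r = \frac{1-1}{2} = 0$, so this case should be excluded or regarded as vacuous. For $n \ge 3$ we have $m \ge 1$, and the only difference from part (i) is that no separate case $m=1$ arises: $m=1$ is allowed in part (ii), and it gives the exponent $r = \frac{1-3^{n-1}}{4}$.

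With these hypotheses checked, Proposition~\ref{Our_general_prop}(ii) immediately gives that $x^r$ with $r = \frac{1-3^{n-1}}{1+3^m}$ is APN, where exponents are taken modulo $3^n-1$. No step here is a real obstacle. The verification is routine, and the only care needed is handling the degenerate case $n=1$.
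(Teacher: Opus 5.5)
Your proposal is correct and is exactly the paper's argument. Writing $n-1=2^\ell m$ (with $n$ odd) gives $m\mid n-1$ and $\frac{n-1}{m}=2^\ell$ even, so Proposition~\ref{Our_general_prop}(ii) applies directly; excluding $n=1$ is the right call, though your aside about permutations of $\F_3$ is beside the point, since there $r=0$ gives the constant map $x^0$, which is neither a permutation nor APN.
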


\begin{proof}
	Since $n=2^\ell m+1$, we have $m\mid n-1$ and $\frac{n-1}{m}=2^\ell$ is even. 
	Thus the result follows from Proposition~\ref{Our_general_prop}(ii).
\end{proof}

\begin{remark}\label{New APN_corollary remark}
	\begin{enumerate}[(i)]
		\item It is easy to see that if $\ell =1$ in Corollary \ref{Led12_corollary}, we obtain $r = 3^{\frac{n+1}{2}} -1$ which is a known APN exponent from \cite{HRS99}. Moreover, taking $\ell =1$ in Corollary~\ref{New APN_corollary} yields
		$$
		 r = 1 - 3^{\frac{n-1}{2}} \equiv 3^n - 3^{\frac{n-1}{2}} = 3^{\frac{n-1}{2}}(3^{\frac{n+1}{2}} - 1) \pmod{3^n-1},
		$$
		which is equivalent to the same class.
		\item Taking $m=1$ in Proposition~\ref{Our_general_prop} (ii) gives
		$$
		3r \equiv 3r+(3^n-1) =  \frac{3^{n+1}-1}{4} \pmod{3^n-1},
		$$
		which is also a known APN exponent from \cite{HRS99}.
		\item Outside of these cases, the construction in Corollary \ref{New APN_corollary} produces new instances starting from $n=9$.
	\end{enumerate}
\end{remark}

\section{Locally-PN Binomials with Boomerang Uniformity $0$}\label{sec_locally-PN}

In this section we introduce two classes of locally-PN binomials $F_r$ with boomerang uniformity $0$,
using Proposition \ref{Locally-PN_prop} and Proposition \ref{BU0_prop}.

\subsection{The APN-exponent class arising from \eqref{ZW10 assumption}}\label{subsec_r=ZW10}

In this subsection, we show that the corresponding function $F_r$ is locally-PN with boomerang uniformity $0$, where $r$ is an APN exponent arising from \cite{ZW10} satisfying \eqref{ZW10 assumption}. 
Such exponents were discussed in Section~\ref{sec_APN}. We remark that, for the associated function $F_r$, the case $m=1$ (i.e., $r=\frac{3^{n+1}-1}{4}$) was already studied in \cite{KK26}. 

By Theorem~\ref{ZW10_theorem}, it remains to verify the second condition of Proposition~\ref{Locally-PN_prop}. We consider the number of solutions of the following equation for every $b\in \Fcnmul$. 
\begin{equation}\label{DU_equation r=ZW10}
	(x+1)^{r}(1+\chi(x+1))-x^{r}(1+\chi(x))=b
\end{equation}

\begin{lemma}\label{DU_lemma r=ZW10}
	Assume that \eqref{ZW10 assumption} holds. If $b\in \Fcnmul$, then
	\begin{align*}
		D_{01}(b)=1\ &\Leftrightarrow\ \chi(b)=1,\ \chi\left( b^{\frac{3^m+1}{2}}+1\right) =-1,\\
		D_{10}(b)=1\ &\Leftrightarrow\ \chi(b)=-1,\ \chi\left( b^{\frac{3^m+1}{2}}+(-1)^m\right) =(-1)^m.
	\end{align*}
\end{lemma}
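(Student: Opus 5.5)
The plan is to restrict \eqref{DU_equation r=ZW10} to each of $S_{01}$ and $S_{10}$, where it collapses to a monomial equation, and then to invert the monomial explicitly using \eqref{ZW10 assumption}.

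\textbf{The case $S_{01}$.} On $S_{01}$ we have $\chi(x)=1$ and $\chi(x+1)=-1$. Equation \eqref{DU_equation r=ZW10} becomes $-2x^r=b$, which is $x^r=b$ since $-2=1$ in characteristic $3$.

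Two facts come from \eqref{ZW10 assumption}. Because $\gcd(r,3^n-1)$ divides $(3^m+1)r-k(3^n-1)=2$ and $r$ is even, we get $\gcd(r,3^n-1)=2$. Hence $x^r=b$ is solvable only if $\chi(b)=1$, and then its solutions are exactly a pair $\pm x_0$. Since $n$ is odd, $\chi(-1)=-1$, so exactly one of $\pm x_0$ is a square. Therefore $D_{01}(b)\le 1$, and $D_{01}(b)=1$ holds if and only if $\chi(b)=1$ and the square solution $x$ satisfies $\chi(x+1)=-1$.

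To identify that square solution, raise $x^r=b$ to the power $\frac{3^m+1}{2}$ and use \eqref{ZW10 assumption}:
\[
b^{\frac{3^m+1}{2}}=x^{\frac{(3^m+1)r}{2}}=x^{1+k\frac{3^n-1}{2}}=x\,\chi(x)^k=x .
\]
Thus $x=b^{\frac{3^m+1}{2}}$, which is indeed a square when $\chi(b)=1$. The condition $\chi(x+1)=-1$ then reads $\chi\bigl(b^{\frac{3^m+1}{2}}+1\bigr)=-1$. This gives the first equivalence; conversely, these conditions exhibit the solution.

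\textbf{The case $S_{10}$.} On $S_{10}$ we have $\chi(x)=-1$ and $\chi(x+1)=1$. Put $y=x+1$, so $\chi(y)=1$, $\chi(y-1)=-1$, and the equation becomes $2y^r=b$, that is, $y^r=-b$.

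By the same argument, a solution requires $\chi(-b)=1$, i.e.\ $\chi(b)=-1$. The unique square solution is
\[
y=(-b)^{\frac{3^m+1}{2}}=(-1)^{\frac{3^m+1}{2}}B,\qquad B:=b^{\frac{3^m+1}{2}}.
\]
Next I determine the sign. Modulo $4$ we have $3^m+1\equiv 0$ for $m$ odd and $3^m+1\equiv 2$ for $m$ even, so $(-1)^{\frac{3^m+1}{2}}=(-1)^{m+1}$. Then
\[
y-1=(-1)^{m+1}\bigl(B+(-1)^m\bigr),
\]
and hence $\chi(y-1)=(-1)^{m+1}\chi\bigl(B+(-1)^m\bigr)$. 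Requiring $\chi(y-1)=-1$ is therefore equivalent to $\chi\bigl(B+(-1)^m\bigr)=(-1)^m$, which is the second equivalence.

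\textbf{Expected obstacle.} The only delicate step is the bookkeeping of signs. This means tracking $\chi(-1)=-1$, the parity of $\frac{3^m+1}{2}$, and the identity $x^{k\frac{3^n-1}{2}}=\chi(x)^k$. One should also check that $x\ne 0,-1$, which holds automatically because $b\neq 0$.
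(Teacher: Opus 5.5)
Your proof is correct and follows essentially the same route as the paper. On $S_{01}$ and $S_{10}$ the equation collapses to $x^r=b$ and $(x+1)^r=-b$, and raising to the $\frac{3^m+1}{2}$-th power together with \eqref{ZW10 assumption} pins the solution down as $b^{\frac{3^m+1}{2}}$, respectively $(-b)^{\frac{3^m+1}{2}}$. Your $\gcd(r,3^n-1)=2$ observation and the parity computation $(-1)^{\frac{3^m+1}{2}}=(-1)^{m+1}$ make explicit the converse (existence) direction and the final sign conversion, which the paper leaves implicit.
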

\begin{proof}
	If $x\in S_{01}$, then \eqref{DU_equation r=ZW10} leads to
	\begin{equation*}
		x^{r} = b.
	\end{equation*}
	Since $\chi(x)=1$, we have $\chi(b) = \chi\left( x^{r}\right)  = 1$. 
	Raising both sides of the above equation to the $\frac{3^{m}+1}{2}$-th power implies 
	\begin{equation*}
		b^{\frac{3^{m}+1}{2}} =x^{r(3^m+1)}= x^{\frac{k(3^{n}-1)}{2}+1} = \chi(x^k)\cdot x =x,
	\end{equation*}
	since $\chi(x)=1$. Hence, $\chi(x+1)=-1$ if and only if $\chi\left( b^{\frac{3^m+1}{2}}+1\right)=-1$. Hence, $D_{01}(b)=1$ if and only if $\chi(b)=1$ and $\chi\left( b^{\frac{3^m+1}{2}}+1\right)=-1$.
	
	If $x\in S_{10}$, then \eqref{DU_equation r=ZW10} leads to
	\begin{equation*}
		(x+1)^{r} = -b.
	\end{equation*}
	Since $\chi(x+1)=1$, we have $\chi(b) = \chi\left( -(x+1)^{r}\right)  = -1$. 
	Raising both sides of the above equation to the $\frac{3^m+1}{2}$-th power implies 
	\begin{equation*}
		(-b)^{\frac{3^m+1}{2}} = (x+1)^{\frac{k(3^{n}-1)}{2}+1} = x+1,
	\end{equation*}
	since $\chi(x+1)=1$. Hence, $\chi(x)=-1$ if and only if $\chi\left( (-b)^{\frac{3^m+1}{2}}-1\right)=-1$, or equivalently $\chi\left( b^{\frac{3^m+1}{2}} +(-1)^m\right) = (-1)^m$. 
\end{proof}

\begin{theorem}\label{DU_theorem r=3^m-1}
	Assume that \eqref{ZW10 assumption} holds. Then, $F_{r}$ is locally-PN with the differential uniformity
	$$\delta_{F_{r}}=\delta_{F_{r}}(1,0) = \frac{3^n+1}{4},$$
	and the differential spectrum is given in \eqref{Diff_spec Locally-PN}. Moreover, $F_{r}$ has boomerang uniformity $0$.
\end{theorem}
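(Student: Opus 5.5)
The plan is to reduce everything to Proposition~\ref{Locally-PN_prop} and Proposition~\ref{BU0_prop}. It suffices to check the hypotheses of Theorem~\ref{KKKK25 DU_thm} together with the two extra conditions of Proposition~\ref{Locally-PN_prop}.

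\textbf{Standing hypotheses.} Since \eqref{ZW10 assumption} forces $n$ odd (remark after Theorem~\ref{DS_theorem x^(3^m-1)}), we have $3^n\equiv 3\pmod 4$.
\begin{itemize}
\item The relation $(3^m+1)r-k(3^n-1)=2$ shows $\gcd(r,3^n-1)\mid 2$.
\item For $b\in\Fcn\setminus\F_3$, Theorem~\ref{ZW10_theorem} gives at most one solution of $(x+1)^r-x^r=b$ in $S_{00}\cup S_{11}$, hence at most one in $S_{00}$.
\item For $b=\pm1$ the unique solutions are $x=0$ and $x=-1$, which do not lie in $S_{00}$.
\end{itemize}
So Theorem~\ref{KKKK25 DU_thm} applies. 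The same remark on $b=2=-1$ (unique solution $x=-1\notin S_{00}$) gives $D_{00}(2)=0$.

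\textbf{Reformulating the second condition.} The remaining condition of Proposition~\ref{Locally-PN_prop} is that $D_{00}(b)=1$ and $D_{01}(b)+D_{10}(b)=1$ never occur together for $b\in\Fcnmul$. Write $A=\frac{3^m+1}{2}$. For $D_{00}(b)=1$ we need $b\notin\F_3$, $\chi(b^{2A}-1)=-1$, and the solution $z\in S_{00}\cup S_{11}$ must actually lie in $S_{00}$.

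To make the last requirement usable, set $u=(z+1)^r$ and $v=z^r$. Both are squares because $r$ is even, and $u-v=b$. Exactly as in the proof of Lemma~\ref{DU_lemma r=ZW10},
\[
u^A=\chi(z+1)\,(z+1),\qquad v^A=\chi(z)\,z .
\]
Hence $z\in S_{00}$ gives $u^A-v^A=1$ and $u^A+v^A=2z+1$, while $z\in S_{11}$ gives $u^A-v^A=-1$. Multiplying, and using $u^{2A}=u^{3^m+1}$, $u=v+b$, the $S_{00}$ case yields
\[
2z+1=b\,v^{3^m}+b^{3^m}v+b^{3^m+1},
\]
which is an $\mathbb F_3$-affine relation linking $z$, $v$ and $b$. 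Lemma~\ref{DU_lemma r=ZW10} supplies the other side: $D_{01}(b)=1$ means $\chi(b)=1$ and $\chi(b^A+1)=-1$, and then $\chi(b^A-1)=1$ follows from $\chi(b^{2A}-1)=-1$. The $D_{10}$ case is symmetric, with $-b$ in place of $b$ and the sign $(-1)^m$.

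\textbf{Main obstacle.} The crux is to show that, in these character situations, the solution in $S_{00}\cup S_{11}$ must fall in $S_{11}$. I would try the following.
\begin{itemize}
\item If $\chi(b)=1$, then $b=x_0^r$ with $x_0=b^A$ a square (the $D_{01}$ solution).
\item Compare $(u,v)$ with the pair $(b,0)$ under the map $w\mapsto w^A$, which acts on squares like a ``semi-linear'' map because $2A=3^m+1$.
\item Deduce that $\chi(b^A+1)=-1$ forces $u^A-v^A=-1$, i.e.\ $z\in S_{11}$.
\end{itemize}
Concretely, I would rewrite $u^A=z+1$ and $v^A=z$ as $u^{3^m+1}=(z+1)^2$ and $v^{3^m+1}=z^2$. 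Substituting $u=v+b$ then gives a quadratic relation in $z$. Tracking $\chi$ of its discriminant should show that the sign of $u^A-v^A$ equals $-\chi(b^A+1)$ when $\chi(b)=1$, and the analogous expression when $\chi(b)=-1$. This is where I expect most of the work.

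\textbf{Conclusion.} Once the second condition is verified, Proposition~\ref{Locally-PN_prop} gives that $F_r$ is locally-PN with $\delta_{F_r}=\delta_{F_r}(1,0)=\frac{3^n+1}{4}$ and spectrum \eqref{Diff_spec Locally-PN}. Since $\delta_{F_r}(1,b)\le1$ for all $b\ne0$ and $\gcd(r,3^n-1)\mid 2$, Proposition~\ref{BU0_prop} yields $\beta_{F_r}=0$.
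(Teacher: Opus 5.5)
Your reduction is the same as the paper's. The hypotheses of Theorem~\ref{KKKK25 DU_thm}, $\gcd(r,3^n-1)\mid 2$ and $D_{00}(2)=0$ all come from Theorem~\ref{ZW10_theorem}, and the conclusion follows from Propositions~\ref{Locally-PN_prop} and~\ref{BU0_prop}. The gap is the only nontrivial step: showing that $D_{00}(b)=1$ excludes $D_{01}(b)=1$ and $D_{10}(b)=1$. You leave this as a plan, and the plan has no working mechanism. Your relations $v^{3^m+1}=z^2$ and $2z+1=bv^{3^m}+b^{3^m}v+b^{3^m+1}$ are linear in $z$ and of degree $3^m+1$ in $v$, so there is no quadratic whose discriminant encodes $\chi(b^A\pm1)$.

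The missing idea is to use that $r$ is even and pass to square roots $U=(z+1)^{r/2}$, $V=z^{r/2}$. For $z\in S_{00}$ one has $U^{3^m+1}=z+1$, $V^{3^m+1}=z$ and $U^2-V^2=-b$. Setting $t=U-V\neq 0$ gives $U=\frac bt-t$ and $V=\frac bt+t$. Since $w\mapsto w^{3^m}$ is additive,
\[
1=U^{3^m+1}-V^{3^m+1}=b\Bigl(t^{3^m-1}+\frac{b^{3^m-1}}{t^{3^m-1}}\Bigr).
\]
Completing the square in $T=t^{(3^m-1)/2}$ gives
$1+b^A=b\bigl(T-b^{(3^m-1)/2}/T\bigr)^2$ and $1-b^A=b\bigl(T+b^{(3^m-1)/2}/T\bigr)^2$.
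Hence $\chi(1+b^A)$ and $\chi(1-b^A)$ both lie in $\{0,\chi(b)\}$. This contradicts both characterizations in Lemma~\ref{DU_lemma r=ZW10} at once. For $D_{10}$, with $\chi(b)=-1$, the condition reads $\chi(b^A+1)=1$ if $m$ is even and $\chi(1-b^A)=1$ if $m$ is odd. No analysis of whether the $S_{00}\cup S_{11}$ solution lands in $S_{11}$ is needed; you simply assume $z\in S_{00}$ and reach a contradiction.

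Your setup also has a sign error that matters exactly at the crux. In characteristic $3$, a solution $z\in S_{00}$ of \eqref{DU_eqn} satisfies $2\bigl((z+1)^r-z^r\bigr)=b$, i.e.\ $u-v=-b$, not $u-v=b$. With your convention $u-v=b$, the computation above gives $u^A-v^A=-\chi(b^A+1)$ when $\chi(b)=1$, which is precisely the formula you hope to prove. But that forces $z\in S_{00}$ when $\chi(b^A+1)=-1$, the opposite of what you need, because you would really be computing $D_{00}(-b)$. So your stated target and your stated goal are inconsistent with each other. With the correct sign one gets $u^A-v^A=\chi(b^A+1)$ when $\chi(b)=1$, and that is what makes the contradiction work. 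The same slip appears in your $D_{00}(2)$ check, where the relevant equation is $(x+1)^r-x^r=1$ with unique solution $x=0$, but there it is harmless.
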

\begin{proof}
	As mentioned in the beginning of this subsection, it suffices to prove that if $D_{00}(b)=1$ then $D_{01}(b) = 0$ and $D_{10}(b) = 0$ for every $b\in \Fcnmul$.  
	
	Suppose that \eqref{DU_equation r=ZW10} has a solution $x=x_0$, where $x_0 \in S_{00}$. Then,
	\begin{equation*}
		-b = (x_0+1)^{r} - x_0^{r} = \left( (x_0+1)^{\frac{r}{2}} - x_0^{\frac{r}{2}}\right) \left( (x_0+1)^{\frac{r}{2}} + x_0^{\frac{r}{2}}\right). 
	\end{equation*} 
	Let $t= (x_0+1)^{\frac{r}{2}} - x_0^{\frac{r}{2}}$. Then, $-\frac{b}{t} = (x_0+1)^{\frac{r}{2}} + x_0^{\frac{r}{2}}$, and hence
	\begin{equation*}
		(x_0 +1)^{\frac{r}{2}}  = \frac{b}{t}-t, \ \ x_0^{\frac{r}{2}} = \frac{b}{t}+t.
	\end{equation*}
	Raising $(3^m+1)$-th power on the second equation, we have
	\begin{equation*}
		\left( \frac{b}{t}+t \right)^{3^m+1}  = x_0^{\frac{(3^m+1)r}{2}}= x_0^{\frac{k(3^{n}-1)}{2}+1} = \chi(x_0^k)\cdot x_0 = x_0,
	\end{equation*}
	because $x_0 \in S_{00}$. Similarly, $\left( \frac{b}{t}-t \right)^{3^m+1}  = x_0+1$. Thus,
	\begin{align}
		1&= \left( \frac{b}{t}-t \right)^{3^m+1} - \left( \frac{b}{t}+t \right)^{3^m+1} \notag \\
		&= \left( \frac{b^{3^m+1}}{t^{3^m+1}} - \frac{b^{3^m}}{t^{3^m-1}} - bt^{3^m-1} +t^{3^m+1}\right) 
		- \left( \frac{b^{3^m+1}}{t^{3^m+1}} + \frac{b^{3^m}}{t^{3^m-1}} + bt^{3^m-1} +t^{3^m+1}\right) \notag \\
		&=b\left( t^{3^m-1}+\frac{b^{3^m-1}}{t^{3^m-1}}\right)  = b\left( t^{3^m-1}+\frac{b^{3^m-1}}{t^{3^m-1}} + b^{\frac{3^m-1}{2}} - b^{\frac{3^m-1}{2}}\right) \notag \\
		&=b\left( t^{\frac{3^m-1}{2}} - \frac{b^{\frac{3^m-1}{2}}}{t^{\frac{3^m-1}{2}}} \right) ^2 - b^{\frac{3^m+1}{2}} \label{DU_eqn 01 r=(3^2m-1)(3^m+1)}\\
		&=b\left( t^{\frac{3^m-1}{2}} + \frac{b^{\frac{3^m-1}{2}}}{t^{\frac{3^m-1}{2}}} \right) ^2 + b^{\frac{3^m+1}{2}}. \label{DU_eqn 10 r=(3^2m-1)(3^m+1)}
	\end{align}
	\eqref{DU_eqn 01 r=(3^2m-1)(3^m+1)} and \eqref{DU_eqn 10 r=(3^2m-1)(3^m+1)} imply $\chi(b)=\chi\left( b^{\frac{3^m+1}{2}}+1\right)$ and $\chi(b)\ne\chi\left( b^{\frac{3^m+1}{2}}-1\right)$, respectively. By Lemma \ref{DU_lemma r=ZW10}, we obtain $D_{01}(b)=0$ and $D_{10}(b)=0$. This completes the proof by Propositions~\ref{Locally-PN_prop} and \ref{BU0_prop}.
\end{proof}

\subsection{The case $r=2\cdot 3^{\frac{n-1}{2}} +1$}\label{subsec_r=2*3^l+1}

In this section, we study differential and boomerang properties of $F_{2\cdot 3^\ell + 1}$ on $\Fcn$, when $n$ is odd and $\ell=\frac{n-1}{2}$. First, we extract several auxiliary results from \cite{DHKM01}, which arise in the derivation of the differential spectrum of the power function $x^{2\cdot 3^\ell + 1}$, and present them as a lemma for later use.

\begin{lemma}\label{DHKM01_lemma}\cite{DHKM01}
	Let $n$ be odd and 
	$$
	g(x) = (x-1)^{2\cdot 3^\ell + 1} - (x+1)^{2\cdot 3^\ell + 1}
	$$
	where $\ell=\frac{n-1}{2}$. Let 
	\begin{align*}
		S = \{x \in \Fcnmul : \chi(x^2-1)=-1\},& \ \  S' = \{x \in \Fcnmul : \chi(x^2-1)=1\},\\
		P = \{x \in \Fcn : \chi(x^{3-2\cdot 3^{\ell+1}}-1)=1\},& \ \ P' = \{x \in \Fcn : \chi(x^{3-2\cdot 3^{\ell+1}}-1)=-1\}.
	\end{align*}
	Then,
	\begin{enumerate}
		\item If $x\in S$, then $g(x)\in P$. If $x\in S'$, then $g(x) \in P'$. If $x\in \F_3$, then $g(x)=1$.
		\item The restriction of $g$ to $S$ maps 4-to-1, and the restriction of $g$ to $S'$ maps 2-to-1.
	\end{enumerate}
\end{lemma}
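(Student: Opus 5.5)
We will not redo the argument of \cite{DHKM01}. Instead we will reprove the lemma by a direct computation built on a closed form for $g$. Write $X=x^{3^\ell}$. In characteristic $3$ we have $(x\mp 1)^{3^\ell}=X\mp 1$, $(X-1)^2=X^2+X+1$ and $(X+1)^2=X^2-X+1$. Hence
\begin{equation*}
g(x)=(X^2+X+1)(x-1)-(X^2-X+1)(x+1)=X^2-xX+1 .
\end{equation*}
Since $\gcd(3^\ell,3^n-1)=1$, the map $x\mapsto X$ is a bijection, and its inverse is $X\mapsto X^{3^{\ell+1}}$ because $3^{\ell}\cdot 3^{\ell+1}=3^n$. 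For $x\in\F_3$ we get $X=x$, so $g(x)=x^2-x^2+1=1$, which settles the last claim of part 1.

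For the mapping claims, fix $c\in\Fcn$ and study the fibre $g(x)=c$. This equation reads $X^2-xX+(1-c)=0$, i.e. $x=X+(1-c)X^{-1}$ for $X\ne0$ (note $X=0$ forces $c=1$). Substituting $x=X^{3^{\ell+1}}$ turns the fibre into the solution set of the single equation
\begin{equation*}
X^{3^{\ell+1}+1}-X^2=1-c
\end{equation*}
in the variable $X\in\Fcnmul$. We will also use the original relation $x^2-4(1-c)=x^2-1+c$. This is the discriminant of the quadratic in $X$, and it links $\chi(x^2-1)$ to the question of whether $X$ and its conjugate root $(1-c)/X$ both lie in $\Fcn$. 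The conjugate root $X'=(1-c)/X$ gives another candidate preimage $x'=X'^{3^{\ell+1}}$. Together with the symmetry $g(-x)=g(x)$ (visible from the closed form, since $X\mapsto -X$ when $x\mapsto -x$), this should explain the fibre sizes: the orbit $\{\pm X,\pm X'\}$ has size $4$ exactly when $X'\ne\pm X$ and all four lie over a common $c$, and it collapses to size $2$ otherwise. We expect the quadratic-character condition $\chi(x^2-1)=\mp1$ to be exactly what decides between these two cases.

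The membership statements $g(S)\subseteq P$ and $g(S')\subseteq P'$ require evaluating $\chi\bigl(c^{3-2\cdot3^{\ell+1}}-1\bigr)$ with $c=X^2-xX+1$. We plan to rewrite $c^{3-2\cdot 3^{\ell+1}}=c^3/(c^{3^{\ell+1}})^2$, compute $c^{3^{\ell+1}}=x^2-x^{3^{\ell+1}}x+1$ using $X^{3^{\ell+1}}=x$, and then factor the difference $c^3-(c^{3^{\ell+1}})^2$, after multiplying by a square, as a square times $(x^2-1)$ up to a sign. Then $\chi\bigl(c^{3-2\cdot3^{\ell+1}}-1\bigr)=-\chi(x^2-1)$, which is exactly the claimed dichotomy between $P$ and $P'$.

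We expect the main obstacle to be this factorization, together with the exact fibre counts $4$ and $2$. If the direct factorization resists, we will first try the substitution $x=(z+z^{-1})$, or equivalently $x=\frac{s+1}{s-1}$, which makes $x^2-1$ a square or a nonsquare according as $s$ lies in $\Fcn$ or in the norm-one subgroup of $\F_{3^{2n}}$. Under this substitution $g$ should become a monomial-type expression in $s$, and the $4$-to-$1$ and $2$-to-$1$ counts would then follow by counting the solutions of $s^{e}=\text{const}$ in the two subgroups, using $\gcd$ computations of the type in Proposition~\ref{ZW10_unique_prop}.
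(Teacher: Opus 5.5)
The paper gives no proof of this lemma; it is imported from the cited work. Your argument therefore has to stand on its own.

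\textbf{Part 1 is fine.} Your closed form $g(x)=X^2-xX+1$ is correct. The factorization you anticipate also exists: using $x^{3^{\ell+1}}=X^3$,
\[
c^3-\bigl(c^{3^{\ell+1}}\bigr)^2=-(x^2-1)\,(X^3+x)^2 .
\]
Here $c\neq 0$ because $y\mapsto y^{2\cdot 3^\ell+1}$ is injective. Also $X^3+x\neq 0$ for $x\neq 0$, since $X^{3^{\ell+1}-3}=-1$ would make the nonsquare $-1$ a square.

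\textbf{The genuine gap is part 2,} which is the real content of the lemma.
\begin{itemize}
\item Your main mechanism fails. $X'=(1-c)/X=x-X$ is the second root of $T^2-xT+(1-c)$ for the \emph{same} $x$. So $x'=X'^{3^{\ell+1}}$ is a preimage of $c$ only if $x'=x$, i.e.\ $X'=X$; in general it is not another preimage.
\item The discriminant $x^2-1+c$ equals $(x+X)^2$. It is always a square and says nothing about $\chi(x^2-1)$.
\item The fallback does not work as you expect. The substitution $x=\frac{s+1}{s-1}$ gives $g=\frac{s^d-1}{(s-1)^d}$ with $d=2\cdot 3^\ell+1$. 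There $s\in\Fcn$ always and $\chi(x^2-1)=\chi(s)$, so it is not equivalent to $x=z+z^{-1}$. The substitution $x=z+z^{-1}$ gives $g=(X-z)(X-z^{-1})$ with $X=z^{3^\ell}+z^{-3^\ell}$. Neither expression is monomial, so there is no equation $s^e=\text{const}$ to count.
\end{itemize}
All that survives is $g(-x)=g(x)$. This shows fibres have even size, but gives neither an upper bound nor a second pair of preimages on $S$.

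\textbf{What is missing is an actual elimination.} Put $\sigma(y)=y^{3^{\ell+1}}$, so $\sigma^2(y)=y^3$ on $\Fcn$. For $c\neq 1$, the fibre in your variable $X$ is $\{X\neq 0:\sigma(X)=X+a/X\}$ with $a=1-c$. Note $\chi(X^2-1)=\chi(x^2-1)$, so $S$ and $S'$ can be read off from $X$.

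Applying $\sigma$ once more and substituting gives a cubic in $W=X^2$:
\[
W^3+(a-1)W^2+(a-\sigma(a))W-a^2=0 .
\]
Now take $u$ in the fibre and set $v=\sigma(u)$.
\begin{itemize}
\item Then $a=uv-u^2$, and the cubic factors as $(W-u^2)\bigl(W^2+(uv-1)W+(v-u)^2\bigr)$.
\item The quadratic factor has discriminant $(u^2-1)(v^2-1)$. This is a nonzero square, since $v^2-1=\sigma(u^2-1)$.
\item Writing $u=w+w^{-1}$, its roots are $-(y-1)^2/y$ with $y=w^{3^{\ell+1}\pm 1}$.
\end{itemize}

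\emph{Case $u\in S'$.} Then $w\in\Fcnmul$ and $y$ is a square, so both roots are nonsquares. Hence the fibre is exactly $\{\pm u\}$.

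\emph{Case $u\in S$.} Then $w$ lies in the norm-one subgroup of $\mathbb{F}_{3^{2n}}$, and both roots are squares in $\Fcn$. You still have to check that exactly one of the two resulting pairs satisfies $\sigma(X)=X+a/X$, and that this pair differs from $\pm u$. One way is to write the candidate square roots as $\omega+\omega^{-1}$ with $\omega=i\,w^{(1\pm 3^{\ell+1})/2}$, where $i^2=-1$. This check is what yields the 4-to-1 count.
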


To study differential properties of $F_r$, we count the number of solutions of the following equation for every $b\in \Fcnmul$. 
\begin{equation}\label{DU_equation r=2*3^l+1}
	(x+1)^{2\cdot 3^{\frac{n-1}{2}}+1} (1+\chi(x+1))-x^{2\cdot 3^{\frac{n-1}{2}}+1}(1+\chi(x))=b
\end{equation}

\begin{lemma}\label{DU_lemma S00 r=2*3^l+1}
	Let $n$ be odd and $\ell=\frac{n-1}{2}$. Then, $D_{00}(b) \le 1$. Moreover, $D_{00}(b)=1$ if and only if $\chi\left(b\left(  b^{2\cdot 3^{\ell+1} -3}+1\right) \right)=1$.
\end{lemma}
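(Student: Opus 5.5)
The plan is to reduce the $S_{00}$-part of \eqref{DU_equation r=2*3^l+1} to the map $g$ of Lemma~\ref{DHKM01_lemma} and read everything off from that lemma. Throughout, write $r=2\cdot 3^\ell+1$.

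\textbf{Reduction to $g$.} For $x\in S_{00}$ the equation becomes $2(x+1)^r-2x^r=b$. Since $2=-1$ in characteristic $3$, this is $(x+1)^r-x^r=-b$. I substitute $y=x-1$, so that $x=y+1$ and $x+1=y+2=y-1$. The equation then reads $g(y)=(y-1)^r-(y+1)^r=-b$, and the condition $x\in S_{00}$ becomes $\chi(y+1)=\chi(y-1)=1$.

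\textbf{Which $y$ can occur.} Any such $y$ has $\chi(y^2-1)=1$. It also satisfies $y\notin\F_3$: the points $x=0,-1$ lie in $S_{01}\cup S_{10}$ or give $b\in\{0,2\}$, and $x=1$ has $\chi(x+1)=\chi(2)=-1$. Hence every solution $y$ lies in $S'$.

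\textbf{Counting solutions via evenness.} Because $r$ is odd, $g(-y)=(-y-1)^r-(-y+1)^r=(y-1)^r-(y+1)^r=g(y)$, so $g$ is even. By Lemma~\ref{DHKM01_lemma}(2), $g|_{S'}$ is $2$-to-$1$, so each fibre of $g$ in $S'$ is exactly a pair $\{y,-y\}$ with $y\ne -y$. For $y\in S'$ we have $\chi(y-1)=\chi(y+1)$. Since $n$ is odd, $\chi(-1)=-1$, so $\chi(-y\pm 1)=-\chi(y\mp1)$. Therefore exactly one of $y,-y$ has $\chi(y\pm1)=1$. It follows that $D_{00}(b)\le 1$, with $D_{00}(b)=1$ if and only if $-b\in g(S')$.

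\textbf{Translating the condition into $P'$.} Put $z=-b\neq 0$ and $f=2\cdot 3^{\ell+1}-3$, so that $3-2\cdot 3^{\ell+1}=-f$. Since $f$ is odd, $z^{-f}=(-b)^f=-b^f$. Then
\[
z^{-f}-1=z^{-f}\,(1-z^{f})=z^{-f}\,(1+b^f),
\]
and $\chi(z^{-f})=\chi(z)=\chi(-b)=-\chi(b)$. Hence $\chi(z^{-f}-1)=-1$ if and only if $\chi\bigl(b(b^f+1)\bigr)=1$. So the claimed criterion is exactly $-b\in P'\setminus\{0\}$. It therefore remains to show that $g(S')=P'\setminus\{0\}$.

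\textbf{Main obstacle: $g(S')=P'\setminus\{0\}$.} One inclusion is Lemma~\ref{DHKM01_lemma}(1). For the reverse inclusion I will argue by cardinality. By Lemma~\ref{chi_lemma quad}, $\sum_y\chi(y^2-1)=-1$; removing the term $y=0$, which contributes $\chi(-1)=-1$, gives $\#S=\#S'=\frac{3^n-3}{2}$. Hence $\#g(S')=\frac{3^n-3}{4}$ and $\#g(S)=\frac{3^n-3}{8}$. I then need $\#(P'\setminus\{0\})=\frac{3^n-3}{4}$. I would try to obtain this by computing $d=\gcd(3-2\cdot 3^{\ell+1},3^n-1)$ and counting, via the image of $z\mapsto z^{3-2\cdot 3^{\ell+1}}$ and Lemma~\ref{chi_lemma quad}, how many $w$ in that image satisfy $\chi(w-1)=-1$. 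If this count is awkward, the fallback is to extract from the spectrum computation in \cite{DHKM01} that $g(S)$ and $g(S')$ exhaust $P$ and $P'$ away from $\F_3$ and $0$. Failing that, I would construct a preimage directly: given $z\in P'$, solve $g(y)=z$ by an explicit square-root/parametrisation in the style of the proof of Theorem~\ref{DU_theorem r=3^m-1}. This surjectivity step is where I expect the real work.
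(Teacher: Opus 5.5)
Everything before your last step is correct. The reduction to $g(y)=-b$ with $\chi(y\pm1)=1$ is right. So is the use of evenness of $g$ together with the 2-to-1 property; the paper uses the equivalent symmetry $x\mapsto -x-1$. The translation of the criterion into $-b\in P'\setminus\{0\}$ is also right, apart from a harmless slip: $z^{-f}$ equals $(-b)^{-f}=-b^{-f}$, not $(-b)^{f}$, and this is never used. Together with Lemma~\ref{DHKM01_lemma}(1), this proves $D_{00}(b)\le 1$ and the implication $D_{00}(b)=1\Rightarrow\chi\bigl(b(b^{f}+1)\bigr)=1$. The gap is the step you flagged, and it cannot be closed, because $g(S')=P'\setminus\{0\}$ is false. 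Your own cardinality plan exposes this. Since $(2\cdot 3^\ell-1)(2\cdot 3^\ell+1)=(3^n-1)+3^{n-1}$, we get $\gcd(3-2\cdot 3^{\ell+1},3^n-1)=1$. So $z\mapsto z^{3-2\cdot 3^{\ell+1}}$ permutes $\Fcnmul$, and $\#(P'\setminus\{0\})=\#\{w\neq 0:\chi(w-1)=-1\}=\frac{3^n-3}{2}=2\,\#g(S')$.

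Consequently the ``if'' half of the statement is itself false for every odd $n\ge 3$. A direct check: since $f$ is odd and prime to $3^n-1$, the map $b\mapsto b^{f}$ is a $\chi$-preserving permutation of $\Fcnmul$. Hence $\chi\bigl(b(b^f+1)\bigr)=1$ holds for exactly $\#S_{00}+\#S_{11}=\frac{3^n-3}{2}$ values of $b$, whereas $\sum_{b}D_{00}(b)=\#S_{00}=\frac{3^n-3}{4}$. For example, take $n=3$ and $\alpha^3=\alpha+1$, so that $\alpha$ has order $13$ and generates the nonzero squares. Then $b=\alpha^7$ meets the criterion because $b^{15}+1=\alpha+1=\alpha^3$ is a square. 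Yet on $S_{00}=\{\alpha,\alpha^3,\alpha^4,\alpha^9,\alpha^{10},\alpha^{12}\}$ the map $x\mapsto (x+1)^7-x^7$ takes only the values $\alpha,\alpha^3,\alpha^9,-\alpha^2,-\alpha^5,-\alpha^6$, never $-b=-\alpha^7$. The surplus values of $b$ are exactly the $\omega_0=\frac{3^n-3}{4}$ elements with $\delta_{F_r}(1,b)=0$ in \eqref{Diff_spec Locally-PN}, so the biconditional would even contradict that spectrum.

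The paper's proof of the converse makes precisely the leap you were contemplating. It reads surjectivity onto $P'$ into Lemma~\ref{DHKM01_lemma}, which only asserts the inclusion and the 2-to-1 property. What you did prove, namely $D_{00}(b)\le 1$ and the one-way implication, is all that Theorem~\ref{DU_theorem r=2*3^l+1} uses. So the correct form of the lemma has ``only if'' in place of ``if and only if''.
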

\begin{proof}
	If \eqref{DU_equation r=2*3^l+1} has a solution in $S_{00}$, then it is also a solution of 
	\begin{equation}\label{DU_eqn1 00 r=2*3^l+1}
		-b = (x+1)^{2\cdot 3^\ell + 1} - x^{2\cdot 3^\ell + 1} = g(x-1)
	\end{equation}
	where $g$ is given in Lemma \ref{DHKM01_lemma}. Observe that $x-1 \in S'$ if and only if $\chi(x)\chi(x+1)=1$. Hence, by Lemma \ref{DHKM01_lemma}, $-b\in P'$, and hence 
	\begin{align*}
		-1&= \chi\left( \left( -b\right)^{3-2\cdot 3^{\ell+1}} -1 \right)  = -\chi\left(  b^{3-2\cdot 3^{\ell+1}} +1 \right)  = - \chi\left( b^{3-2\cdot 3^{\ell+1}}\right) \chi\left( b^{2\cdot 3^{\ell+1} -3}+1 \right)  \\
		& = - \chi(b)\chi\left( b^{2\cdot 3^{\ell+1} -3}+1 \right),
	\end{align*}
	as desired. 
	
	Conversely, if $\chi\left(b\left(  b^{2\cdot 3^{\ell+1} -3}+1\right) \right)=1$, then $-b \in P'$. By Lemma \ref{DHKM01_lemma}, \eqref{DU_eqn1 00 r=2*3^l+1} has two solutions $x_1, x_2$ such that $\chi(x_1(x_1+1)) = \chi(x_2(x_2+1))=1$. Moreover, since $r=2\cdot 3^\ell + 1$ is odd, if $x=x_1$ is a solution of \eqref{DU_eqn1 00 r=2*3^l+1}, then $x=-x_1-1$ is also a solution of \eqref{DU_eqn1 00 r=2*3^l+1}. Thus, we can see that $x_2 = -x_1 -1$. Observe that if $x_1 \in S_{00}$ then $\chi(x_2) = \chi(-(x_1+1)) =-1$ and $\chi(x_2+1) = \chi(-x_1) = -1$, so $x_2 \in S_{11}$. Therefore, we conclude that \eqref{DU_eqn1 00 r=2*3^l+1} has exactly one solution in $S_{00}$ and hence $D_{00}(b)=1$.
\end{proof}

\begin{lemma}\label{DU_lemma r=2*3^l+1}
	Let $n$ be odd and $\ell=\frac{n-1}{2}$. If $b\in \Fcnmul$, then
	\begin{align*}
		D_{01}(b)=1\ &\Leftrightarrow\ \chi(b)=1,\ \chi\left( b^{2\cdot 3^{\ell+1} -3}+1\right) =-1,\\
		D_{10}(b)=1\ &\Leftrightarrow\ \chi(b)=-1,\ \chi\left( b^{2\cdot 3^{\ell+1} -3}+1\right) =1.
	\end{align*}
	Moreover, $D_{01}(b) + D_{10}(b) = 1$ if and only if $\chi\left(b\left(  b^{2\cdot 3^{\ell+1} -3}+1\right) \right)=-1$.
\end{lemma}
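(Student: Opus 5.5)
We follow the pattern of Lemma \ref{DU_lemma r=ZW10}. Write $r=2\cdot 3^\ell+1$ and note that, since $n=2\ell+1$, we have
$$
r(3^{\ell+1}-1)=2\cdot 3^{2\ell+1}+3^{\ell+1}-2\cdot 3^{\ell}-1\equiv 3^{\ell}+1 \pmod{3^n-1}.
$$
Instead of this, the natural route is to invert $x\mapsto x^r$ on squares. If $x\in S_{01}$, then \eqref{DU_equation r=2*3^l+1} reads $-2x^r=b$, i.e.\ $x^r=b$ (as $-2=1$ in characteristic $3$). If $x\in S_{10}$, it reads $2(x+1)^r=b$, i.e.\ $(x+1)^r=-b$.

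The first step is to record that $\gcd(r,3^n-1)\in\{1,2\}$. Since $r$ is odd, $x^r$ preserves $\chi$, so in particular $\chi(b)=1$ in the first case and $\chi(-b)=1$, i.e.\ $\chi(b)=-1$, in the second. Next, we find an exponent $s$ with $rs\equiv 1$ modulo $\frac{3^n-1}{2}$, so that $x=b^{s}$ for square $x$. Then $D_{01}(b)=1$ iff $\chi(b)=1$ and $\chi(b^{s}+1)=-1$, and $D_{10}(b)=1$ iff $\chi(-b)=1$ and $\chi((-b)^s-1)=-1$.

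The main work is to relate $b^{s}$ to $b^{2\cdot 3^{\ell+1}-3}$. Observe that
$$
r\cdot(2\cdot3^{\ell+1}-3)=4\cdot 3^{2\ell+1}+6\cdot 3^{\ell+1}-6\cdot3^\ell-3\equiv 4+2\cdot 3^{\ell+1}-3 .
$$
This is not obviously $1$, so instead I would use Lemma \ref{DHKM01_lemma}-style reasoning: compare with $g$. Concretely, for $x\in S_{01}$ put $y=x-1$, so $g(y)=(x+1)^r\cdot 0\ldots$. This does not directly apply, so the cleaner approach is the direct inverse. I would compute the inverse of $r$ on the subgroup of squares explicitly. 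Working modulo $B=\frac{3^n-1}{2}$ (odd), try $s$ with $s\equiv$ an expression in powers of $3$. The quantity $3^{2\ell+1}\equiv 1$ gives $(3^{\ell+1})^2\equiv 3$, so $3^{\ell+1}$ behaves like $\sqrt3$. With $t=3^{\ell}$, we have $3t^2\equiv1$ and $r=2t+1$. We seek $s$ with $(2t+1)s\equiv1$; note $(2t+1)(2t-1)=4t^2-1\equiv \tfrac43-1=\tfrac13\equiv t^2$, so $s\equiv (2t-1)t^{-2}=(2t-1)\cdot 3\equiv 6t-3=2\cdot 3^{\ell+1}-3$. The step most likely to be delicate is that this holds modulo $B$ (division by $3$ is legitimate since $\gcd(3,B)=1$), and squares have order dividing $B$. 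Hence for square $x$, $x=b^{2\cdot3^{\ell+1}-3}$ exactly, and likewise $x+1=(-b)^{2\cdot 3^{\ell+1}-3}=-b^{2\cdot3^{\ell+1}-3}$, because the exponent is odd.

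Substituting gives the two characterizations. For $S_{01}$: $\chi(b)=1$ and $\chi(b^{e}+1)=-1$ with $e=2\cdot3^{\ell+1}-3$. For $S_{10}$: $x=-b^{e}-1$, so $\chi(x)=-1$ iff $\chi(b^{e}+1)=1$. The final claim follows, since the two conditions are mutually exclusive (they require opposite $\chi(b)$) and their union is exactly $\chi(b)\chi(b^{e}+1)=-1$. Here we also use that $D_{01}(b),D_{10}(b)\le1$ by uniqueness of $x$.
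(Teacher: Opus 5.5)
Your proposal is correct and follows the paper's route. Both arguments invert $x\mapsto x^{r}$ explicitly to get $x=b^{2\cdot 3^{\ell+1}-3}$ (the paper raises $x^r=b$ to the power $2\cdot 3^\ell-1$ to obtain $x^{3^{n-1}}$ and then cubes), and then read off $\chi(x+1)$, respectively $\chi(x)$. The direct check you abandoned actually works: $(2\cdot 3^\ell+1)(2\cdot 3^{\ell+1}-3)=4\cdot 3^{n}-3\equiv 1 \pmod{3^n-1}$, because the middle term is $2\cdot 3^{\ell+1}$, not $6\cdot 3^{\ell+1}$. So the inverse holds on all of $\Fcnmul$, and the false starts in your write-up can be deleted.
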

\begin{proof}
	If $x\in S_{01}$, then \eqref{DU_equation r=2*3^l+1} leads to
	\begin{equation*}
		x^{2\cdot 3^\ell + 1} = b
	\end{equation*}
	Since $\chi(x)=1$, we have $\chi(b) = \chi\left( x^{2\cdot 3^\ell + 1}\right)  = 1$. 
	Raising both sides of the above equation to the $(2\cdot 3^\ell - 1)$-th power implies 
	\begin{equation*}
		b^{2\cdot 3^\ell - 1}  = x^{4\cdot 3^{n-1}-1} = x^{(3^n-1)+3^{n-1}} =x^{3^{n-1}},
	\end{equation*}
	and hence $x= \left(b^{2\cdot 3^\ell - 1} \right)^3 = b^{2\cdot 3^{\ell+1} -3}$.  Hence, $\chi(x+1)=-1$ if and only if $\chi\left( b^{2\cdot 3^{\ell+1} -3}+1\right)=-1$. Hence, $D_{01}(b)=1$ if and only if $\chi(b)=1$ and $\chi\left( b^{2\cdot 3^{\ell+1} -3}+1\right)=-1$.
	
	The proof of the characterization of $D_{10}(b)$ is analogous, and we omit it here. The final statement follows immediately from these two equivalences.
\end{proof}

\begin{theorem}\label{DU_theorem r=2*3^l+1}
	Let $n$ be odd and $\ell=\frac{n-1}{2}$. Then, $F_{2\cdot 3^\ell+1}$ is locally-PN with the differential uniformity
	$$\delta_{F_{2\cdot 3^\ell+1}}=\delta_{F_{2\cdot 3^\ell+1}}(1,0) = \frac{3^n+1}{4},$$
	and the differential spectrum is given in \eqref{Diff_spec Locally-PN}. Moreover, $F_{2\cdot 3^\ell+1}$ has boomerang uniformity $0$.
\end{theorem}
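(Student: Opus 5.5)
We plan to reduce the theorem to Propositions~\ref{Locally-PN_prop} and \ref{BU0_prop}, exactly as was done for Theorem~\ref{DU_theorem r=3^m-1}. This means checking three things. First, the hypotheses of Theorem~\ref{KKKK25 DU_thm}. Second, that $D_{00}(2)=0$. Third, that $D_{00}(b)=1$ and $D_{01}(b)+D_{10}(b)=1$ never occur simultaneously for $b\in\Fcnmul$.

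For the hypotheses of Theorem~\ref{KKKK25 DU_thm}: since $n$ is odd, $3^n\equiv 3\pmod 4$. The exponent $r=2\cdot 3^\ell+1$ is odd. Moreover $\gcd(r,3^n-1)=1$, because $x^{r}$ is a permutation; this is either known from \cite{DHKM01} or follows from the inversion computed in Lemma~\ref{DU_lemma r=2*3^l+1}, where $(2\cdot 3^\ell+1)(2\cdot 3^\ell-1)\equiv 3^{n-1}\pmod{3^n-1}$. Lemma~\ref{DU_lemma S00 r=2*3^l+1} already gives that $(x+1)^r-x^r=-b$ has at most one solution in $S_{00}$. We should note that a solution of $(x+1)^r-x^r=b$ in $S_{00}$ corresponds to one of \eqref{DU_equation r=2*3^l+1} with right-hand side $-2b$. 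The counting in that lemma, via the pairing $x\leftrightarrow -x-1$ between $S_{00}$ and $S_{11}$ together with the $2$-to-$1$ property of $g$ on $S'$, works for every nonzero right-hand side. So the ``at most one solution in $S_{00}$'' hypothesis holds.

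For $D_{00}(2)=0$: by Lemma~\ref{DU_lemma S00 r=2*3^l+1}, $D_{00}(2)=1$ iff $\chi\bigl(2(2^{2\cdot 3^{\ell+1}-3}+1)\bigr)=1$. Since $2=-1\in\F_3$ and the exponent $2\cdot 3^{\ell+1}-3$ is odd, we get $(-1)^{\rm odd}+1=0$, so the character value is $\chi(0)=0\neq 1$. Hence $D_{00}(2)=0$. The degenerate $b$ values $0$ and $2$ correspond to $x=-1,0$, and these lie outside the $S_{ij}$ counts for nonzero $b$, as handled in Proposition~\ref{Locally-PN_prop}.

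The non-coexistence condition is then immediate from the two lemmas. Lemma~\ref{DU_lemma S00 r=2*3^l+1} says $D_{00}(b)=1$ iff $\chi\bigl(b(b^{2\cdot 3^{\ell+1}-3}+1)\bigr)=1$. Lemma~\ref{DU_lemma r=2*3^l+1} says $D_{01}(b)+D_{10}(b)=1$ iff the same character equals $-1$. These conditions are mutually exclusive, so Proposition~\ref{Locally-PN_prop} yields local-PN-ness, $\delta=\frac{3^n+1}{4}$, and the spectrum \eqref{Diff_spec Locally-PN}. Proposition~\ref{BU0_prop} then gives $\beta_{F_r}=0$, using $\gcd(r,3^n-1)=1$.

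The only genuinely delicate step is confirming the $S_{00}$ uniqueness hypothesis of Theorem~\ref{KKKK25 DU_thm} for the equation $(x+1)^r-x^r=b$ with arbitrary nonzero $b$, not just for $b\in P'$. This needs Lemma~\ref{DHKM01_lemma}(2) to guarantee at most two solutions in $S_{00}\cup S_{11}$, with the involution $x\mapsto -x-1$ swapping $S_{00}$ and $S_{11}$. I would write out that argument explicitly.
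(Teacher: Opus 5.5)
Your proposal is correct and follows the paper's proof. You get $\gcd(r,3^n-1)=1$ from $(2\cdot 3^\ell+1)(2\cdot 3^\ell-1)\equiv 3^{n-1}\pmod{3^n-1}$, take $D_{00}(b)\le 1$ and the mutual exclusivity from Lemmas \ref{DU_lemma S00 r=2*3^l+1} and \ref{DU_lemma r=2*3^l+1}, and finish with Propositions \ref{Locally-PN_prop} and \ref{BU0_prop}. Your check that $D_{00}(2)=0$, via the character criterion collapsing to $\chi(0)$, repackages the paper's observation that $g(x-1)=1$ forces $x\in\F_3$; both amount to $1\notin P'$.

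One harmless slip: a solution of $(x+1)^r-x^r=b$ in $S_{00}$ gives right-hand side $2b=-b$ in \eqref{DU_equation r=2*3^l+1}, not $-2b$. This does not matter, since the ``at most one solution'' bound holds for every nonzero right-hand side.
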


\begin{proof}
	We apply Proposition \ref{Locally-PN_prop}. 
	Since $3^n-1=(2\cdot 3^\ell+1)(2\cdot 3^\ell-1)-3^{n-1}$,
	we obtain $\gcd(2\cdot 3^\ell+1,3^n-1)=1$. 
	Next, by Lemmas \ref{DU_lemma S00 r=2*3^l+1} and \ref{DU_lemma r=2*3^l+1}, $D_{00}(b)\le 1$ for all $b\in \Fcnmul$, and the conditions $D_{00}(b)=1$ and $D_{01}(b)+D_{10}(b)=1$ do not occur simultaneously for any $b\in \Fcn\setminus\{0,2\}$. 
	
	Now, it remains to show that $D_{00}(2)=D_{00}(-1)=0$. 
	Substituting $b=-1$ into \eqref{DU_eqn1 00 r=2*3^l+1}, we obtain $g(x-1)=1$. 
	By Lemma \ref{DHKM01_lemma}, all solutions of $g(x-1)=1$ lie in $\F_3$. 
	Since $S_{00}\cap \F_3=\emptyset$, it follows that $D_{00}(-1)=0$.
\end{proof}

\section{Differential and Boomerang Properties of $F_r$ with $r=3^n-3$}\label{sec_r=-2}

In this section, we study differential and boomerang properties of $F_{3^n-3}$ on $\Fcn$, when $n$ is odd. In the following theorem, we show that $F_{3^n-3}$ is locally-APN, using Theorem \ref{KKKK25 DU_thm}.

\begin{theorem}\label{DU_theorem r=-2}
	Let $n\ge 3$ be odd. Then, $\delta_{F_{3^n-3}}(1,b)\le 2$ for all $b\in \Fcnmul$ and $\delta_{F_{3^n-3}}(1,0)= \frac{3^n+1}{4}$. Therefore, $F_{3^n-3}$ is locally-APN.
\end{theorem}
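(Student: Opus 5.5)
Since the statement is stated to follow from Theorem~\ref{KKKK25 DU_thm}, I will verify its two hypotheses for $r=3^n-3$ with $q=3^n\equiv 3\pmod 4$ ($n$ odd).

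The first hypothesis is the gcd condition. We have $\gcd(3^n-3,3^n-1)=\gcd(2,3^n-1)=2$, which divides $2$.

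The second hypothesis says that $(x+1)^r-x^r=b$ has at most one solution in $S_{00}$ for every $b\in\Fcnmul$. On $\Fcnmul$ the power $x^{3^n-3}$ equals $x^{-2}$, and on $S_{00}$ both $x$ and $x+1$ are nonzero. So the equation becomes
$$\frac{1}{(x+1)^2}-\frac{1}{x^2}=b .$$
Clearing denominators gives $x^2-(x+1)^2=b\,x^2(x+1)^2$, that is,
$$-(2x+1)=b\,(x^2+x)^2 .$$
In characteristic $3$ we have $-(2x+1)=x-1$. Substitute $y=x^2+x$. Then $x^2+x-y=0$, and in characteristic $3$ this reads $(x-1)^2=y+1$. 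The equation is now $x-1=by^2$. Squaring gives $(x-1)^2=b^2y^4$, so $y$ must satisfy
$$b^2y^4=y+1 .$$
Conversely, each root $y$ of this quartic determines $x=1+by^2$ uniquely.

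Now use the condition $x\in S_{00}$. It gives $\chi(y)=\chi(x(x+1))=1$, so $y$ is a nonzero square. It also gives $\chi(x-1)=\chi(b)$, because $x-1=by^2$. So solutions in $S_{00}$ correspond injectively to roots $y$ of $b^2y^4-y-1$ with $\chi(y)=1$, subject to the extra requirements $\chi(1+by^2)=\chi(2+by^2)=1$.

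The main obstacle is showing that at most one such root survives. A quartic may have up to four roots, so the quartic alone is not enough. I would first try the sign trick of Lemma~\ref{chi_lemma cubic}. Set $y=z^2$ (one can take $\chi(z)=1$). The root condition becomes $b^2z^8=z^2+1$, and $x=1+bz^4$. This form is not obviously helpful, so instead I would compare two hypothetical solutions directly. Suppose $x_1\neq x_2$ are both in $S_{00}$ and satisfy $x_i^{-2}-(x_i+1)^{-2}=-b$. Set $u_i=x_i^{-1}$ and $v_i=(x_i+1)^{-1}$. These satisfy $u_i-v_i=u_iv_i$ and $u_i^2-v_i^2=-b$. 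Since $u_i-v_i=u_iv_i$, dividing gives $u_i+v_i=-b/(u_iv_i)$. Put $w_i=u_iv_i=y_i^{-1}$. Then $u_i,v_i$ are the roots of $T^2+(b/w_i)T+w_i$ with difference $w_i$. The discriminant identity yields $w_i^2=b^2/w_i^2-w_i$, i.e.\ $w_i^4+w_i^3=b^2$, which is the quartic again. I then need $w_i$ to be a square, each $u_i$ and $v_i$ to be a square, and $b=-(u_i+v_i)w_i$.

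The key idea I would pursue is to fix a square root $s_i$ of $w_i$ with $\chi(s_i)=1$, and to note that the quantity $b/s_i^3$ is determined by the root. Then $\chi(b)=\chi(u_i+v_i)\chi(-1)$. For two distinct roots this would force a sign contradiction via the relation $\prod w_i=-b^2$ among the four roots, so at most two roots have $\chi(w)=1$, and the two candidates should have opposite $\chi(u+v)$. Writing this out carefully is the hard step. A second check rules out the degenerate values $b=\pm$ small, and $n\ge3$ is used there.

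Once both hypotheses hold, Theorem~\ref{KKKK25 DU_thm} gives $\delta_{F_{3^n-3}}(1,b)\le2$ for $b\ne0$ and $\delta_{F_{3^n-3}}(1,0)=\frac{3^n+1}{4}$. Together with Lemma~\ref{Fruproperty2_lemma}, this shows that $F_{3^n-3}$ is locally-APN.
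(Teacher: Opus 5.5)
Your gcd check and your reduction are correct. Solutions $x\notin\{0,-1\}$ of $(x+1)^{-2}-x^{-2}=b$ correspond bijectively to roots $y\in\Fcn$ of $b^2y^4=y+1$, via $y=x^2+x$ and $x=1+by^2$, and $x\in S_{00}$ forces $\chi(y)=1$. However, the decisive claim, that at most one such root survives, is never proved; you call it "the hard step" yourself, and the sketch you give would not work.

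The relation $\prod w_i=-b^2$ involves all four roots in a splitting field, so it says nothing when only one or two roots lie in $\Fcn$. When all four do lie in $\Fcn$, it only forces an odd number of nonsquare roots, which still allows three square roots. So "at most two roots have $\chi(w)=1$" does not follow. The claim that the two candidates have opposite $\chi(u+v)$ is unsupported, and the "degenerate values" check is never specified. As written, the proposal does not establish the hypothesis of Theorem~\ref{KKKK25 DU_thm}.

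The missing idea is a Vieta relation for a pair of roots in $\Fcn$, rather than for all four. Your quartic $y^4-b^{-2}y-b^{-2}$ has vanishing $y^3$ and $y^2$ coefficients. Suppose $y_1\ne y_2$ are roots in $\Fcn$, and write
\[
y^4-b^{-2}y-b^{-2}=(y-y_1)(y-y_2)(y^2+\alpha y+\gamma).
\]
Comparing the $y^3$ and $y^2$ coefficients gives $\alpha=y_1+y_2$ and $\gamma=(y_1+y_2)^2-y_1y_2$, which equals $(y_1-y_2)^2$ in characteristic $3$. The constant term then gives $y_1y_2(y_1-y_2)^2=-b^{-2}$, so $\chi(y_1y_2)=\chi(-1)=-1$. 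Hence at most one root in $\Fcn$ is a square, and since distinct $x$ give distinct $y$, there is at most one solution in $S_{00}$.

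This closes the gap, and it is shorter than the paper's route. The paper works with $Y=x-1$ (so your $y=Y^2-1$) and factors $Y^4+Y^2-uY+1=(Y^2+cY+d)(Y^2-cY+d^{-1})$. It then solves for $d$ explicitly using $(c^2-1)^{\frac{3^n+1}{4}}$ and shows that exactly one quadratic factor splits. Finally it verifies $\chi\bigl(x_1x_2(x_1+1)(x_2+1)\bigr)=-1$ for the two roots of that factor. Your variable $y=x^2+x$ removes the quadratic term as well, which is what makes the Vieta argument immediate.
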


\begin{proof}
	By Theorem \ref{KKKK25 DU_thm}, it suffices to show that  
	\begin{equation}\label{DU_equation r=-2}
		(x+1)^{3^n-3} -x^{3^n-3} = -b.
	\end{equation}
	has at most one solution in $S_{00}$. Suppose  that $x\not \in \{0,-1\}$. Then, we have $x^2 - (x+1)^2 = -bx^2(x+1)^2$, or equivalently
	\begin{equation*}
		x^4 -x^3 + x^2 +\frac{1}{b}x-\frac{1}{b}=0.
	\end{equation*}
	Applying $x=y+1$ in the above equation, we have
	\begin{equation}\label{DU_eqn1 r=-2}
		y^4 + y^2 -uy +1=0,
	\end{equation}
	where $u = -\frac{1}{b}$. If \eqref{DU_eqn1 r=-2} has two or more solutions, 
	\begin{equation*}
		g(y) = y^4 + y^2 -uy +1 = (y^2 +cy +d)(y^2 -cy +d^{-1}) 
	\end{equation*}
	for some $c, d\in \Fcnmul$. Then,
	\begin{equation}\label{DU_system r=-2}
		\begin{cases}
			d+d^{-1} = c^2 +1,\\
			c\left( d-d^{-1}\right) =u
		\end{cases}
	\end{equation}
	The first equation in \eqref{DU_system r=-2} leads to
	\begin{align*}
		d^2+1 &= d(c^2+1)\\
		d^2+2d(c^2+1)+1&=0\\
		d^2+2d(c^2+1)+c^4-c^2+1&=c^4-c^2\\
		\left(d+(c^2+1)\right)^2 &= c^2(c^2-1).
	\end{align*}
	If $c^2=0$ or $c^2=1$, then the above equation implies $d=-(c^2+1)$ which leads to $d^{-1} = (c^2+1)-d = d$ and hence $u=0$, a contradiction. Hence, $c^2(c^2-1)\ne 0$ and $\chi(c^2-1)=1$. Furthermore, we have 
	\begin{equation}
		d = -(c^2+1) + \epsilon c(c^2 -1)^{\frac{3^n+1}{4}} = \left( c-\epsilon(c^2 -1)^{\frac{3^n+1}{4}}\right) ^2
	\end{equation}
	where $\epsilon \in \{\pm1\}$ will be determined later. Then,
	\begin{align*}
		u &= c\left( d-d^{-1}\right) = c \left( -(c^2+1) + \epsilon c(c^2 -1)^{\frac{3^n+1}{4}} -\Frac{1}{-(c^2+1) + \epsilon c(c^2 -1)^{\frac{3^n+1}{4}}}\right)\\
		&= c \left( -(c^2+1) + \epsilon c(c^2 -1)^{\frac{3^n+1}{4}} -\Frac{-(c^2+1) - \epsilon c(c^2 -1)^{\frac{3^n+1}{4}}}{(c^2+1)^2 -  c^2(c^2 -1)}\right)\\
		&= - \epsilon c^2(c^2 -1)^{\frac{3^n+1}{4}}.
	\end{align*}
	Furthermore, we have
	\begin{align*}
		c^2 - d &= c^2 - \left( -(c^2+1) + \epsilon c(c^2 -1)^{\frac{3^n+1}{4}}\right)  = -(c^2-1) - \epsilon c(c^2 -1)^{\frac{3^n+1}{4}}\\
		&=(c^2 -1)^{\frac{3^n+1}{4}}\left( 2 c\epsilon -(c^2 -1)^{\frac{3^n+1}{4}}\right).
	\end{align*}
	Similarly, we obtain
	\begin{equation*}
		c^2 - d^{-1} =-(c^2 -1)^{\frac{3^n+1}{4}}\left( 2 c \epsilon+(c^2 -1)^{\frac{3^n+1}{4}}\right). 
	\end{equation*}
		 
	
	If $\epsilon =\chi(c+1) = \chi(c-1)$, then 
	\begin{equation*}
		c^2 -d = (c^2 -1)^{\frac{3^n+1}{4}} \left( (c +1)^{\frac{3^n+1}{4}} +(c -1)^{\frac{3^n+1}{4}}\right)^2
	\end{equation*}
	is a square, and 
	\begin{equation*}
		c^2 - d^{-1} = - (c^2 -1)^{\frac{3^n+1}{4}} \left( (c +1)^{\frac{3^n+1}{4}} -(c -1)^{\frac{3^n+1}{4}}\right)^2
	\end{equation*}
	is not a square. Hence, $g(y)=0$ has two solutions satisfying $y^2+cy+d=0$. Let $y_1$ and $y_2$ be two solutions of $y^2+cy+d=0$. Then, $x_1 = y_1+1$ and $x_2 = y_2 +1$ are two solutions of \eqref{DU_equation r=-2}, and
	\begin{align*}
		x_1 x_2 (x_1+1)(x_2+1) &= (y_1+1)(y_2+1)(y_1-1)(y_2-1) = (y_1^2-1)(y_2^2-1) = y_1^2y_2^2 - (y_1^2+y_2^2)+1 \\
		&=(y_1y_2)^2 -(y_1+y_2)^2 - y_1y_2 +1  = d^2 - c^2-d+1 = (d+1)^2 -c^2 \\
		&= \left( -c^2 +\epsilon c(c^2-1)^{\frac{3^n+1}{4}} \right) ^2 -c^2 = c^4+c^2(c^2-1) +\epsilon c^3(c^2-1)^{\frac{3^n+1}{4}}-c^2 \\
		&=-c^2 \left( c^2-1 - \epsilon c(c^2-1)^{\frac{3^n+1}{4}}\right)  = -c^2(c^2-1)^{\frac{3^n+1}{4}}\left(2\epsilon c +(c^2-1)^{\frac{3^n+1}{4}} \right) \\
		&=-c^2(c^2-1)^{\frac{3^n+1}{4}}\left( (c+1)^{\frac{3^n+1}{4}} +(c-1)^{\frac{3^n+1}{4}}\right)^2.
	\end{align*}
	Hence, we have $\chi\left(x_1  (x_1+1)\right) \chi\left( x_2 (x_2+1)\right) = \chi\left( x_1 x_2 (x_1+1)(x_2+1)\right) =-1$, and hence satisfying both $\chi\left(x_1  (x_1+1)\right)=1$ and $\chi\left(x_2  (x_2+1)\right)=1$ is impossible. Therefore, \eqref{DU_equation r=-2} has at most one solution in $S_{00}$.
	
	The proof for the case $-\epsilon =\chi(c+1) = \chi(c-1)$ is very similar to the above case, and we omit it here.
\end{proof}

Next, we study the boomerang properties of $F_{3^n-3}$. In the following lemma, we give necessary and sufficient conditions for each $B_{ijkl}(b)$ to be equal to $1$.

\begin{lemma}\label{BS_lemma r=-2}
	Let $p=3$, $n\ge 3$ be odd, $r=3^n-3$ and $b\in \Fcnmul$. Then,
	\begin{itemize}
		\item $B_{0001}(b)=1$ if and only if 
		\begin{equation}\label{BU_eqn r=-2 0001}
			\chi(b)=-1, \chi\left(b^{\frac{3^n+1}{4}} + 1 \right) =\chi\left(b^{\frac{3^n+1}{4}} + b \right) =1, \chi\left( b^{\frac{3^n+1}{4}}\left( b^{\frac{3^n+1}{4}}+b\right)^{\frac{3^n+1}{4}} - (b^{\frac{3^n+1}{4}}+1) \right)=1.
		\end{equation}
		\item $B_{0010}(b)=1$ if and only if 
		\begin{equation}\label{BU_eqn r=-2 0010}
			\chi(b)
			= \chi\!\left(b^{\frac{3^n+1}{4}} - 1\right)=
			\chi\!\left(b^{\frac{3^n+1}{4}} - b\right)=-1, \chi\left( b^{\frac{3^n+1}{4}}\left( b^{\frac{3^n+1}{4}}-b\right)^{\frac{3^n+1}{4}} + (b^{\frac{3^n+1}{4}}-1) \right)=1.
		\end{equation}
		\item $B_{0100}(b)=1$ if and only if 
		\begin{equation}\label{BU_eqn r=-2 0100}
			\chi(b)= 1,\ 
			\chi\!\left(b^{\frac{3^n+1}{4}} + b\right)=
			\chi\!\left(b^{\frac{3^n+1}{4}} - 1\right) = -1, \chi\left( b^{\frac{3^n+1}{4}}\left( b^{\frac{3^n+1}{4}}+b\right)^{\frac{3^n+1}{4}} + (b^{\frac{3^n+1}{4}}-1) \right)=1.
		\end{equation}
		\item $B_{1000}(b)=1$ if and only if 
		\begin{equation}\label{BU_eqn r=-2 1000}
			\chi(b)
			=\chi\!\left(b^{\frac{3^n+1}{4}} + 1\right)
			=\chi\!\left(b^{\frac{3^n+1}{4}} - b\right) 
			=1, \chi\left( b^{\frac{3^n+1}{4}}\left( b^{\frac{3^n+1}{4}}-b\right)^{\frac{3^n+1}{4}} - (b^{\frac{3^n+1}{4}}+1) \right)=1.
		\end{equation}
	\end{itemize} 
\end{lemma}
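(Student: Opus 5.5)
The plan is a direct computation on each of the four cell pairs $S_{ij}\times S_{kl}$. The key simplification is that for $r=3^n-3$ and $x\ne 0$ we have $x^r=x^{-2}$, so $F_r(x)=2x^{-2}=-x^{-2}$ if $\chi(x)=1$ and $F_r(x)=0$ otherwise (including $x=0$). Throughout put $q=3^n$ and $w=b^{\frac{q+1}{4}}$. Since $n$ is odd, $q\equiv 3\pmod 8$, so $\frac{q+1}{4}$ is odd. Two consequences will be used repeatedly:
\begin{itemize}
\item $w^2=\chi(b)\,b$ and $\chi(w)=\chi(b)$;
\item if $s$ is a nonzero square, the unique square root of $s$ which is itself a square is $s^{\frac{q+1}{4}}$, since $-1$ is a nonsquare.
\end{itemize}
This uniqueness of the "square square root" is what will give $B_{ijkl}(b)\le 1$ and pin down the solution explicitly.

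I will do $B_{0001}$ in detail; the other three are the same computation with signs changed. For $(x,y)\in S_{00}\times S_{01}$, system \eqref{BU_system} becomes $-x^{-2}+y^{-2}=b$ and $-(x+1)^{-2}=b$.
\begin{enumerate}
\item The second equation forces $\chi(b)=\chi(-1)=-1$, so $b=-w^2$ and $(x+1)^{-2}=w^2$.
\item Since $x+1$ must be a square and $\chi(w)=-1$, this gives $x+1=-w^{-1}$. Hence $x=-(1+w)/w$, and $\chi(x)=\chi(1+w)$, because $\chi(-1/w)=1$. So the condition $x\in S_{00}$ becomes $\chi(w+1)=1$, and $x$ is uniquely determined.
\item Next compute $y^{-2}=b+x^{-2}=-w^2+\frac{w^2}{(1+w)^2}$. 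Using characteristic $3$ this simplifies to $\frac{w^3(1-w)}{(1+w)^2}$.
\item Its character is $\chi(w)\chi(1-w)=\chi(w-w^2)=\chi(w+b)$. So $y$ exists with $\chi(y)=1$ iff $\chi(w+b)=1$, and then $y$ is unique, namely $y=\bigl(\frac{(1+w)^2}{w^3(1-w)}\bigr)^{\frac{q+1}{4}}$.
\item Finally impose $\chi(y+1)=-1$ to obtain the last condition.
\end{enumerate}
The degenerate points cause no trouble: $x,y\notin\{0,-1\}$ automatically on these cells, and $w\ne\pm1$ when the character conditions hold.

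The main obstacle is rewriting $\chi(y+1)=-1$ in the stated form $\chi\bigl(w(w+b)^{\frac{q+1}{4}}-(w+1)\bigr)=1$. I will first pull out square factors using $(\alpha^2)^{\frac{q+1}{4}}=\chi(\alpha)\alpha$ for the square $(1+w)^2$. I will also use $w^3(1-w)=w^2\,(w+b)$ with $w^2=-b$. This should give $y=\pm\frac{1+w}{w\,(w+b)^{\frac{q+1}{4}}}$, with the sign fixed by $\chi(1+w)=1$ and the oddness of $\frac{q+1}{4}$. Then
\[
y+1=\pm\frac{(1+w)\pm w(w+b)^{\frac{q+1}{4}}}{w\,(w+b)^{\frac{q+1}{4}}}.
\]
Multiplying by the square $w^2(w+b)^{\frac{q+1}{2}}$, or tracking $\chi$ of the denominator ($\chi(w)=-1$, $\chi\bigl((w+b)^{\frac{q+1}{4}}\bigr)=1$), converts $\chi(y+1)=-1$ into the displayed character condition. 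I expect the delicate part to be the sign bookkeeping, and I will verify it carefully.

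The converse is immediate: if all four conditions hold, the $x$ and $y$ constructed above form a solution in $S_{00}\times S_{01}$, and by the uniqueness of the square square roots it is the only one. This proves \eqref{BU_eqn r=-2 0001}.

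For $B_{0010}$ (equations $-x^{-2}=b$, $-(x+1)^{-2}+(y+1)^{-2}=b$), $B_{0100}$, and $B_{1000}$ I will repeat the argument with the roles of $x$ and $x+1$ or $y$ and $y+1$ swapped. In each case:
\begin{itemize}
\item the value of $\chi(b)$ comes from the equation in which only one term survives;
\item the element of $S_{00}$ is then solved explicitly as $\pm w^{\mp1}$ shifted by $1$;
\item the other variable is the square square root of an explicit expression.
\end{itemize}
Each time, the sign changes $w\mapsto -w$ and $b\mapsto -b$ account for the $\pm$ patterns in \eqref{BU_eqn r=-2 0010}--\eqref{BU_eqn r=-2 1000}.
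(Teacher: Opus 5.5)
Your proposal is correct and follows essentially the same route as the paper. Both arguments reduce the $S_{00}\times S_{01}$ case to $-(x+1)^{-2}=b$ and $y^{-2}=b+x^{-2}$, solve $x+1=-b^{-\frac{3^n+1}{4}}$, and take $y$ as the unique square square root, which gives $y=-\frac{w+1}{w\,(w+b)^{\frac{3^n+1}{4}}}$, exactly as in the paper. The membership conditions are then translated into characters, and the other three cases go through the same way with the signs you describe. You are slightly more explicit than the paper about the converse direction and the uniqueness of the solution, both of which the paper leaves implicit.
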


\begin{proof}
	If there is a solution $(x,y)\in S_{00}\times S_{01}$ in \eqref{BU_system} with $r=3^n-3$, then \eqref{BU_system} reduces to 
	\begin{equation}\label{BU_system S0001 r=-2}
		\begin{cases}
			x^{3^n-3}-y^{3^n-3} = -b,\\
			(x+1)^{3^n-3} = -b.
		\end{cases}
	\end{equation} 
	Raising both sides of the second equation in \eqref{BU_system S0001 r=-2} to the $\frac{3^n+1}{4}$-th power yields $x + 1 = -\Frac{1}{b^{\frac{3^n+1}{4}}}$ and $\chi(b) = -1$, 
	since $\frac{3^n+1}{4}$ is odd when $n$ is odd and $\chi(x + 1) = 1$.
	Then, $x = -\Frac{1}{b^{\frac{3^n+1}{4}}} - 1 = -  \Frac{b^{\frac{3^n+1}{4}}+1} {b^{\frac{3^n+1}{4}}}$, hence $\chi(x)=1$ leads to $\chi(b^{\frac{3^n+1}{4}}+1)=1$. Substitute $x = -  \Frac{b^{\frac{3^n+1}{4}}+1} {b^{\frac{3^n+1}{4}}}$ into the first equation in \eqref{BU_system S0001 r=-2}, we obtain
	\begin{align*}
		y^{3^n-3} &= x^{3^n-3} + b = \left(-\Frac{b^{\frac{3^n+1}{4}}} {b^{\frac{3^n+1}{4}}+1}\right)^2 + b =\Frac{-b + b\left( b^{\frac{3^n+1}{4}}+1\right)^2}{\left( b^{\frac{3^n+1}{4}}+1\right)^2 } \\
		&=-\Frac{-b + b\left( b^{\frac{3^n+1}{2}} - b^{\frac{3^n+1}{4}} +1\right)}{\left( b^{\frac{3^n+1}{4}}+1\right)^2 }
		= -\Frac{ b\left( b^{\frac{3^n+1}{4}}+b\right)}{\left( b^{\frac{3^n+1}{4}}+1\right)^2 }. 
	\end{align*}
	Hence, the above equation implies
	$$
	y = -\Frac{b^{\frac{3^n+1}{4}}+1 }{ b^{\frac{3^n+1}{4}}\left( b^{\frac{3^n+1}{4}}+b\right)^{\frac{3^n+1}{4}}}.
	$$
	When $\chi(b) = -1$ and $\chi(b^{\frac{3^n+1}{4}}+1)=1$, $\chi(y)=1$ leads to $\chi\left( b^{\frac{3^n+1}{4}}+b\right)=1$. Furthermore, $\chi(y+1)=-1$ is equivalent to 
	$$
	\chi\left( b^{\frac{3^n+1}{4}}\left( b^{\frac{3^n+1}{4}}+b\right)^{\frac{3^n+1}{4}} - (b^{\frac{3^n+1}{4}}+1) \right)=1. 
	$$

	The proofs for the other cases are very similar, and we omit here.	
\end{proof}

The following two lemmas are useful for our results.

\begin{lemma}\label{chi cubic lemma}
	Let $n$ be odd. Then,
	\begin{equation*}
		\Sum_{x\in\Fcn}\chi(x^4+x^3-1)=-1.
	\end{equation*}
\end{lemma}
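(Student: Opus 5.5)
The plan is to combine an additive shift with the inversion $x\mapsto 1/x$ so that the sum reappears with a minus sign, and then solve the resulting linear equation over $\mathbb{Z}$. Write
\[
S=\sum_{x\in\Fcn}\chi(x^4+x^3-1).
\]
The facts used are that we are in characteristic $3$ and that $\chi(-1)=-1$, which holds because $3^n\equiv 3\pmod 4$ for $n$ odd.

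The first step is the shift $x=y-1$. Modulo $3$ the binomial coefficients of degree $4$ are $1,1,0,1,1$, so $(y+a)^4=y^4+ay^3+a^3y+a^4$, and in characteristic $3$ we have $(y+a)^3=y^3+a^3$. Taking $a=-1$ gives
\[
(y-1)^4+(y-1)^3-1 = y^4-y^3-y+1+y^3-1-1 = y^4-y-1 .
\]
Since $y\mapsto y-1$ is a bijection of $\Fcn$, this yields
\[
S=\sum_{y\in\Fcn}\chi(y^4-y-1).
\]

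The second step splits off the term $y=0$, which contributes $\chi(-1)=-1$. For $y\neq 0$ put $t=1/y$. Then
\[
y^4-y-1 = y^4\,(1-t^3-t^4),
\]
and $y^4$ is a nonzero square. Hence
\[
\chi(y^4-y-1)=\chi\bigl(-(t^4+t^3-1)\bigr)=-\chi(t^4+t^3-1).
\]
As $y$ runs over $\Fcnmul$, so does $t$. Therefore
\[
\sum_{y\neq 0}\chi(y^4-y-1) = -\sum_{t\neq 0}\chi(t^4+t^3-1) = -\bigl(S-\chi(-1)\bigr) = -(S+1).
\]

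Combining the two steps gives $S=-1-(S+1)=-2-S$. Since $S$ is an ordinary integer, this forces $S=-1$.

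The only point that needs care is choosing the shift that turns $x^4+x^3-1$ into a polynomial whose reversal is again, up to sign, the original quartic. The shift $a=-1$ does exactly this, because it kills the $y^3$ term and leaves the constant term $-1$. After that the argument is a routine check of the signs $\chi(-1)=-1$ and of the boundary terms at $0$.
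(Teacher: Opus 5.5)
Your proof is correct and is essentially the paper's argument. Both use the shift by $1$, which turns $x^4+x^3-1$ into $y^4-y-1$, together with the inversion $y\mapsto 1/y$ and $\chi(-1)=-1$ to obtain $S=-2-S$. The only difference is that the paper applies the inversion first and the shift second.
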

\begin{proof}
	Let $y= \frac{1}{x}$ where $x\in \Fcnmul$. Then, we obtain
	\begin{align*}
		\Sum_{x\in \Fcn}\chi\left(x^4+x^3-1 \right) 
		&= -1 + \Sum_{x\in \Fcnmul}\chi\left(x^4+x^3-1 \right) 
		= -1 + \Sum_{y\in \Fcnmul}\chi\left(\Frac 1 {y^4}+\Frac{1}{y^3}-1 \right)\\
		&= -1 + \Sum_{y\in \Fcnmul}\chi\left(\Frac {1+y-y^4} {y^4} \right) 
		= -2 - \Sum_{y\in \Fcn}\chi\left(y^4-y-1 \right).
	\end{align*}
	If we set $z=x+1$, then
	\begin{align*}
		\Sum_{x\in \Fcn}\chi\left(x^4+x^3-1 \right) = \Sum_{z\in \Fcn}\chi\left(z^4-z-1 \right) = -2-\Sum_{z\in \Fcn}\chi\left(z^4+z^3-1 \right),
	\end{align*}
	which completes the proof.
\end{proof}

\begin{lemma}\label{chi hyperelliptic lemma}
	Let $n$ be odd. Then,
	\begin{equation*}
		\Sum_{x\in\Fcn}\chi(x(x^2+1))\chi(x^4+x^3-1) + \Sum_{x\in\Fcn}\chi(x^4-1)\chi(x^4+x^3-1)=-1.
	\end{equation*}
\end{lemma}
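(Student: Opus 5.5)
The plan is to prove the identity by the same mechanism as Lemma~\ref{chi cubic lemma}: apply the birational substitutions $x\mapsto 1/x$, $x\mapsto x+1$ and $x\mapsto -x$ until the two sums in the statement are carried into each other (with a sign) up to finitely many boundary terms. Any leftover character sum should be either odd, so that it vanishes by Lemma~\ref{chi_lemma cubic} (recall $3^n\equiv 3\pmod 4$ because $n$ is odd), or equal to $\Sum_x\chi(x^4+x^3-1)=-1$ by Lemma~\ref{chi cubic lemma}.

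Write $f(x)=x^4+x^3-1$, $A=\Sum_x\chi(x(x^2+1))\chi(f(x))$ and $B=\Sum_x\chi(x^4-1)\chi(f(x))$. Two facts from the proof of Lemma~\ref{chi cubic lemma} will be used repeatedly:
\begin{itemize}
\item for $x\neq 0$, $\chi(f(1/x))=-\chi(x^4-x-1)$, since $\chi(-1)=-1$;
\item $x^4-x-1=f(x-1)$.
\end{itemize}

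\textbf{Transforming $A$.} Substituting $x\mapsto 1/x$ and then $x\mapsto x+1$, and using $(x+1)((x+1)^2+1)=x^3+x-1$ in characteristic $3$, gives
\[
A=-\Sum_x\chi(x^3+x-1)\chi(f(x))+(\text{boundary terms at }x=0,-1).
\]

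\textbf{Transforming $B$.} Apply the same substitutions. Because $\chi(x^{-4})=1$, we have $\chi(x^{-4}-1)=\chi(1-x^4)=-\chi(x^4-1)$, so $x\mapsto 1/x$ gives
\[
B=\Sum_{x\ne0}\chi(x^4-1)\chi(x^4-x-1).
\]
The shift $x\mapsto x+1$ then turns this into a sum against $\chi(f(x))$ with the factor $\chi((x+1)^4-1)$. Since $(x+1)^4-1=x(x^3+x^2+x+1)=x(x+1)(x^2+1)$ in characteristic $3$, this factor is $\chi(x(x+1)(x^2+1))$, which shares the factor $x(x^2+1)$ with the first sum.

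\textbf{Combining.} Next I would write the combined integrand as $\chi(f(x))\chi(x^2+1)\bigl(\chi(x)+\chi(x^2-1)\bigr)$. The aim is to pair terms via these transformations and via $x\mapsto -x$, the latter changing $f$ into $x^4-x^3-1$, so that all but an odd-function sum and a multiple of $\Sum\chi(f)$ cancel. Separately, the finitely many boundary values at $x\in\{0,\pm1\}$ will be tabulated.

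\textbf{Main obstacle.} The hard step is finding the exact chain of substitutions that matches $A$ against $B$. If simple Möbius maps do not close up, I would next try the quadratic change of variables $\Sum_x\chi(x)h(x)=\Sum_t h(t^2)-\Sum_x h(x)$ applied to the $\chi(x)$ factor, and likewise the parametrisation $x=(t^2+1)/(2t)$ for the $\chi(x^2-1)$ factor. These turn both pieces into sums over $t$ of $\chi$ of a single polynomial. I would then check whether the resulting polynomials are related by $t\mapsto -t$, which would make the difference an odd-function sum killed by Lemma~\ref{chi_lemma cubic}, or by $t\mapsto 1/t$. The constant $-1$ should emerge from Lemma~\ref{chi cubic lemma} together with the boundary corrections.
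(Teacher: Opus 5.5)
Your proposal stops where the proof begins. The step you call the main obstacle, finding a substitution that carries one sum onto the other, is the whole content of the lemma. The fallbacks you list (odd-function cancellation via Lemma~\ref{chi_lemma cubic}, an appeal to Lemma~\ref{chi cubic lemma}, quadratic reparametrisations) are not what makes it work.

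The missing idea is the M\"obius involution $x\mapsto \frac{x+1}{x-1}$ of $\mathbb{F}_{3^n}\setminus\{1\}$. The paper realises it as the chain $x=y+1$, $y=1/z$, $z=1-w$, applied to the combined sum $S=\sum_x\chi(x^2+1)\chi(f(x))\left(\chi(x)+\chi(x^2-1)\right)$. In characteristic $3$ one checks:
\begin{itemize}
\item $\left(\frac{x+1}{x-1}\right)^2-1=\frac{x}{(x-1)^2}$;
\item $\left(\frac{x+1}{x-1}\right)^2+1=-\frac{x^2+1}{(x-1)^2}$;
\item $\chi\left(\frac{x+1}{x-1}\right)=\chi(x^2-1)$;
\item $f\left(\frac{x+1}{x-1}\right)=\frac{f(x)}{(x-1)^4}$.
\end{itemize}
So the substitution swaps $\chi(x)$ and $\chi(x^2-1)$, negates $\chi(x^2+1)$ (because $\chi(-1)=-1$), and fixes $\chi(f)$. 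Hence the integrand $g$ of $S$ satisfies $g\left(\frac{x+1}{x-1}\right)=-g(x)$ for $x\neq 1$. Since $g(1)=\chi(2)\chi(f(1))=-1$, this gives $S=2g(1)-S=-2-S$. The constant $-1$ comes only from this boundary point; no odd-function argument and no evaluation of $\sum_x\chi(f(x))$ is needed.

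Your computations also contain two slips that hid this structure.
\begin{itemize}
\item In characteristic $3$, $(x+1)^4-1=x^4+x^3+x=x(x-1)(x^2-x-1)$, because the coefficient $6$ of $x^2$ vanishes. It is not $x(x+1)(x^2+1)$.
\item After $x\mapsto 1/x$ you dropped the $x=0$ term $\chi(-1)^2=1$ of $B$. In fact $B=\sum_{x}\chi(x^4-1)\chi(x^4-x-1)$ over all $x$.
\end{itemize}
Done correctly, your two steps give $A=-\sum_x\chi\left((x+1)(x^2-x-1)\right)\chi(f(x))$ and $B=\sum_x\chi\left(x(x-1)(x^2-x-1)\right)\chi(f(x))$. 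These share the factor $\chi(x^2-x-1)$, not $\chi(x(x^2+1))$.

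Your composite substitution $x\mapsto \frac{1}{x+1}$ squares to $x\mapsto\frac{x+1}{x-1}$. So applying it once more to the transformed $A$ gives $A=-1-B$ directly, where the $-1$ is its $x=0$ term. That explicit closing step is what your proposal still needs to supply.
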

\begin{proof}
	Let $y=x-1$ and 
	\begin{equation*}
		A = \Sum_{x\in\Fcn}\chi(x^2+1)\chi(x^4+x^3-1)\left( \chi(x) + \chi(x^2-1)\right). 
	\end{equation*}
	Then, 
	\begin{equation*}
		A = \Sum_{y\in\Fcn}\chi(y^2-y-1)\chi(y^4 -y^3 +y+1)\left( \chi(y+1) + \chi(y^2-y)\right). 
	\end{equation*}
	If $z=\frac{1}{y}$, then 
	\begin{align*}
		A &= \Sum_{y\in\Fcn}\chi(y^2-y-1)\chi(y^4 -y^3 +y+1)\left( \chi(y+1) + \chi(y^2-y)\right)\\
		&=-1+\Sum_{y\in\Fcnmul}\chi(y^2-y-1)\chi(y^4 -y^3 +y+1)\left( \chi(y+1) + \chi(y^2-y)\right)\\
		&=-1+\Sum_{z\in\Fcnmul}\chi\left( \frac{1}{z^2}-\frac{1}{z}-1\right) \chi\left( \frac{1}{z^4} -\frac{1}{z^3} +\frac{1}{z}+1\right) \left( \chi\left(\frac{1}{z}+1\right) + \chi\left( \frac{1}{z^2}-\frac{1}{z}\right) \right)\\
		&=-1-\Sum_{z\in\Fcnmul}\chi\left(z^2+z-1\right) \chi\left( z^4+z^3-z+1 \right) \left( \chi\left(z^2+z\right) + \chi\left( 1-z\right) \right)\\
		&=-2-\Sum_{z\in\Fcn}\chi\left(z^2+z-1\right) \chi\left( z^4+z^3-z+1 \right) \left( \chi\left(z^2+z\right) + \chi\left( 1-z\right) \right).
	\end{align*}
	If we set $w=-z+1$, then 
	\begin{equation*}
		A =-2-\Sum_{w\in\Fcn}\chi\left(w^2+1\right) \chi\left( w^4+w^3-1 \right) \left( \chi\left(w^2-1\right) + \chi\left( w\right) \right) = -2-A,
	\end{equation*}
	which completes the proof.
\end{proof}

Applying Lemma \ref{Fruproperty2_lemma}, the boomerang spectrum of $F$ is defined to be the multiset $BS_F=\{\nu_i : 0 \le i \le \beta_F\}$, where
\begin{equation*}
	\nu_i = \#\{ b\in \Fpnmul : \beta_F(1,b)=i\}.
\end{equation*}
The following identity for the boomerang spectrum is well-known:
\begin{equation}\label{BS_identity}
	\sum_{i=0}^{\beta_F}\nu_i = q-1.
\end{equation}

\begin{theorem}\label{BS_theorem}
	If $n \ge 3$ odd, the boomerang spectrum of $F_{3^n-3}$ is given by
	\begin{equation*}
		BS_{F_{3^n-3}} = \left\{ \nu_0 = \frac{1}{4}\left( 3^{n+1} -5-  2\Gamma_1 - \Gamma_2 \right),\ \  \nu_1 = \frac{1}{4}\left( 3^n +1+ 2\Gamma_1 + \Gamma_2 \right) \right\},
	\end{equation*}
	where
	\begin{equation*}
		\Gamma_1 = \Sum_{u\in \Fcn} \chi(u(u^2+1))\chi\left(u^4+u^3-1 \right),\ \
		\Gamma_2 = \Sum_{u\in \Fcn} \chi(u(1-u^2))\chi\left(u^4+u^3-1 \right). 
	\end{equation*}
	Moreover, $\nu_1 > 0$ and hence $\beta_{F_{3^n-3}} =1$, when $n\ge 5$.
\end{theorem}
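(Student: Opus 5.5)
The plan is to count $\nu_1=\#\{b\in\Fcnmul:\beta_{F_{3^n-3}}(1,b)=1\}$ through the indicator conditions of Lemma~\ref{BS_lemma r=-2}, and then get $\nu_0$ from \eqref{BS_identity}.

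\textbf{Step 1: reduction to four indicator counts.} Theorem~\ref{DU_theorem r=-2} shows that the hypotheses of Theorem~\ref{KKKK25 DU_thm} hold for $r=3^n-3$ (note $\gcd(3^n-3,3^n-1)=2$). So Theorem~\ref{KKKK25 BU_thm} applies, and $\beta(1,b)=B_{0001}+B_{0010}+B_{0100}+B_{1000}$, apart from the special values $b=\pm2\alpha^r$, which must be checked separately.

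The four sets in \eqref{BU_eqn r=-2 0001}--\eqref{BU_eqn r=-2 1000} are pairwise disjoint:
\begin{itemize}
\item the $\chi(b)$ conditions separate $\{0001,0010\}$ from $\{0100,1000\}$;
\item within each pair, the conditions on $\chi(b^{(3^n+1)/4}\pm1)$ together with the sign of $\chi(b)$ are incompatible, as Theorem~\ref{KKKK25 BU_thm}(1) guarantees.
\end{itemize}
Hence $\beta(1,b)\le1$, so only $\nu_0,\nu_1$ occur, and $\nu_1=\sum_{ijkl}\#\{b: B_{ijkl}(b)=1\}$. For the special values $b=\pm2\alpha^r$ I would verify directly that $\alpha$ exists (or not) and that these $b$ still fall in the pattern. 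This is a small correction, and it should explain part of the constant in $\frac{3^n+1}{4}$.

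\textbf{Step 2: parametrisation.} Since $n$ is odd, $b\mapsto b^{(3^n+1)/4}$ is a bijection on $\Fcn$ whose inverse is $v\mapsto v^{4}\cdot(\text{sign})$. Indeed $4\cdot\frac{3^n+1}{4}\equiv 2 \pmod{3^n-1}$, so $v=b^{(3^n+1)/4}$ gives $v^2=\pm b$ up to the character. I would substitute $b=v^2\chi(b)$, i.e. $b=-v^2$ in the $0001/0010$ cases and $b=v^2$ otherwise, with $v$ a square root determined by the sign of $v$. Then $b^{(3^n+1)/4}+b$ becomes $v+\epsilon v^2=v(1+\epsilon v)$. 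The last character condition becomes $\chi$ of a polynomial expression in $v$ and $(v(1\pm v))^{(3^n+1)/4}$. Reparametrising once more by $v(1+v)=w^2$ (or an equivalent rational substitution $u$) should turn everything into characters of polynomials in a single variable $u$.

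Each indicator is then written as a product $\frac{1}{2^k}\prod(1\pm\chi(\cdot))$ and expanded. The resulting sums are of three kinds:
\begin{itemize}
\item sums of $\chi$ of degree-$\le2$ polynomials, which are evaluated by Lemma~\ref{chi_lemma quad};
\item odd-function sums, which vanish by Lemma~\ref{chi_lemma cubic};
\item the quartic sum $\sum\chi(u^4+u^3-1)=-1$ from Lemma~\ref{chi cubic lemma}.
\end{itemize}
The genuinely genus-$\ge2$ products are exactly $\Gamma_1$, $\Gamma_2$, and the $\chi(u^4-1)\chi(u^4+u^3-1)$ sum. The last one is eliminated using Lemma~\ref{chi hyperelliptic lemma}, which expresses it as $-1-\Gamma_1$. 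Collecting terms should give $\nu_1=\frac14(3^n+1+2\Gamma_1+\Gamma_2)$, and \eqref{BS_identity} then yields $\nu_0$.

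\textbf{Step 3: positivity for $n\ge5$.} I would apply Lemma~\ref{chi_lemma inequal} to $\Gamma_1$ and $\Gamma_2$, whose arguments have degree $7$ with distinct roots (to be checked: no repeated factors in characteristic $3$). This gives $|\Gamma_i|\le6\cdot3^{n/2}$, hence $\nu_1\ge\frac14(3^n+1-18\cdot3^{n/2})$. This is positive for $n\ge7$. The case $n=5$ is checked by direct computation (or using the sharper degree count in the bound).

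\textbf{Main obstacle.} The hard part is Step 2: finding substitutions that untangle the nested $(\cdot)^{(3^n+1)/4}$ powers and reduce all four cases to the same two sums $\Gamma_1,\Gamma_2$, with exact bookkeeping of the excluded points where denominators vanish. I would first work out the $0001$ case completely, then derive the other three by the symmetries $b\mapsto -b$ and $v\mapsto -v$.
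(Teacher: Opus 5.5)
\textbf{Main gap: Step~1.} The within-pair disjointness is exactly what makes the spectrum two-valued, and your justification for it fails on both counts.

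First, Theorem~\ref{KKKK25 BU_thm}(1) only says that $B_{0001}(b)=B_{0010}(b)=0$ or $B_{0100}(b)=B_{1000}(b)=0$. That is the split \emph{between} the two pairs, which the sign of $\chi(b)$ already gives you. It says nothing about $B_{0001}(b)$ and $B_{0010}(b)$ both equal to $1$. Unless that is excluded, your sum $\sum_{ijkl}\#\{b:B_{ijkl}(b)=1\}$ counts $\nu_1+2\nu_2$, not $\nu_1$.

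Second, the conditions you single out are not incompatible. Write $e=\frac{3^n+1}{4}$. For $\chi(b)=-1$ we have $b^{2e}=-b$, so $\chi(b^e+1)\chi(b^e-1)=\chi(-b-1)$. Hence $\chi(b^e+1)=1$ and $\chi(b^e-1)=-1$ merely force $\chi(b+1)=1$. In fact there are $\frac{3^n-3}{8}>0$ values of $b$ with $\chi(b)=-1$, $\chi(b^e+1)=1$, $\chi(b^e-1)=-1$.

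The missing step is to use the conditions on $b^e\pm b$ as well. If \eqref{BU_eqn r=-2 0001} and \eqref{BU_eqn r=-2 0010} both held, we would have $\chi(b^e+b)\chi(b^e-b)=-1$. But this product equals
$\chi(b^{2e}-b^2)=\chi(-b(b+1))=-\chi(b)\chi(b+1)=1$, a contradiction. The pair \eqref{BU_eqn r=-2 0100}, \eqref{BU_eqn r=-2 1000} is then excluded by $b\mapsto -b$. This is precisely how the paper proves $\beta_{F_{3^n-3}}\le 1$.

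\textbf{Smaller points.}

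The exceptional values $b=\pm2\alpha^r$ never occur. For $r=3^n-3$, the equation $(\alpha+1)^{-2}-\alpha^{-2}=1$ becomes, in characteristic $3$, $\alpha^4-\alpha^3+\alpha^2-\alpha+1=0$. This means $\alpha^5=-1$ with $\alpha\ne-1$, which is impossible since $5\nmid 3^n-1$ for odd $n$. So these values explain nothing in the constant. The term $\frac{3^n+1}{4}$ is four times the main term $\frac{3^n+1}{16}$ of each of the four (equal) indicator counts, coming from Lemmas~\ref{chi_lemma quad} and~\ref{chi_lemma cubic}.

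Also, the inverse of $b\mapsto b^e$ is $v\mapsto\chi(v)v^2$, not a fourth power.

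Made concrete, your Step~2 is the paper's computation:
\begin{enumerate}[(a)]
\item set $t=v+1$;
\item set $t=\frac{u}{1+u^2}$, which hits each $t\notin\F_3$ with $\chi(1-t^2)=1$ exactly twice, because $(1+u^2)^2-u^2=(1-u^2)^2$ in characteristic $3$;
\item apply $u\mapsto 1/u$ to fold the branch $\chi(1-u^4)=-1$, where $u^4-u-1$ appears, onto the branch carrying $u^4+u^3-1$.
\end{enumerate}

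One caution. In the $0001$ case, $v=b^e$ ranges only over nonsquares. Passing from $\frac18\sum_{\chi(v)=-1}$ to $\frac1{16}\sum_{v\in\Fcn}$, which the paper does without comment, requires the $\chi(v)$-twisted sum to vanish. Done honestly, that twisted sum reduces to a combination of $\sum\chi(u)\chi(Q)$, $\sum\chi(u^2+1)\chi(Q)$, $\sum\chi(1-u^2)\chi(Q)$ and $\sum\chi(u(1-u^4))\chi(Q)$, with $Q=u^4+u^3-1$. This combination is $0$ by the involution $u\mapsto\frac{u+1}{u-1}$, which sends $Q(u)$ to $Q(u)/(u-1)^4$. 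That is an ingredient beyond the lemmas you list. Your inventory of "exactly $\Gamma_1$, $\Gamma_2$ and $\sum\chi(u^4-1)\chi(u^4+u^3-1)$" is therefore incomplete if the restriction to nonsquares is kept.
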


\begin{proof}
	We first show that $\beta_{F_{3^n-3}}\le 1$. 
	By Theorem \ref{KKKK25 BU_thm}, it suffices to prove that 
	\eqref{BU_eqn r=-2 0001} and \eqref{BU_eqn r=-2 0010} 
	do not hold simultaneously, and that 
	\eqref{BU_eqn r=-2 0100} and \eqref{BU_eqn r=-2 1000} 
	do not hold simultaneously.
	
	Suppose that \eqref{BU_eqn r=-2 0001} and \eqref{BU_eqn r=-2 0010} occur simultaneously. Then, $\chi(b)=-1$. Moreover, 
	\begin{equation*}
		-1= \chi\left( b^{\frac{3^n+1}{4}}+1\right)  \chi\left( b^{\frac{3^n+1}{4}}-1\right)  = \chi\left( b^{\frac{3^n+1}{2}} - 1\right)  = \chi (-b-1),
	\end{equation*}
	and hence $\chi(b+1)=1$. Then, 
	\begin{equation*}
		-1= \chi\left( b^{\frac{3^n+1}{4}}+b\right)  \chi\left( b^{\frac{3^n+1}{4}}-b\right)  = \chi\left( b^{\frac{3^n+1}{2}} - b^2\right)  = \chi(-b-b^2 )= -\chi(b)\chi(b+1)=1,
	\end{equation*}
	a contradiction.	
	The argument for 
	\eqref{BU_eqn r=-2 0100} and \eqref{BU_eqn r=-2 1000} 
	is analogous and hence omitted.
	
	Next, we compute $\nu_1$. \eqref{BU_eqn r=-2 0001} and \eqref{BU_eqn r=-2 0010} are identical after replacing $b$ in \eqref{BU_eqn r=-2 0100} and \eqref{BU_eqn r=-2 1000} by $-b$, respectively. Hence, $\nu_1$ equals twice the number of $b\in \Fcnmul$ satisfying \eqref{BU_eqn r=-2 0001} or \eqref{BU_eqn r=-2 0010}.
	
	Let $A_1$ be the number of $b\in \Fcnmul$ satisfying \eqref{BU_eqn r=-2 0001}. Then,
	\begin{align*}
		A_1&=\frac{1}{16}\Sum_{b\in \Fcn} (1-\chi(b)) \left(1+\chi\left(b^{\frac{3^n+1}{4}}+1\right)\right) \left(1+\chi\left(b^{\frac{3^n+1}{4}}+b\right)\right) \left(1+\chi\left( b^{\frac{3^n+1}{4}}\left( b^{\frac{3^n+1}{4}}+b\right)^{\frac{3^n+1}{4}} - (b^{\frac{3^n+1}{4}}+1) \right)\right)\\
		&=\frac{1}{8}\Sum_{\chi(b)=-1} \left(1+\chi\left(b^{\frac{3^n+1}{4}}+1\right)\right) \left(1+\chi\left(b^{\frac{3^n+1}{4}}+b\right)\right) \left(1+\chi\left( b^{\frac{3^n+1}{4}}\left( b^{\frac{3^n+1}{4}}+b\right)^{\frac{3^n+1}{4}} - (b^{\frac{3^n+1}{4}}+1) \right)\right).
	\end{align*}
	Let $y=b^{\frac{3^n+1}{4}}$. Then, $b=-y^2$ when $\chi(b)=-1$, and hence 
	\begin{align}
		A_1&=\frac{1}{16}\Sum_{y\in \Fcn}  (1+\chi(y+1)) \left(1+\chi\left(y-y^2\right)\right) \left(1 + \chi\left( y\left( y-y^2\right)^{\frac{3^n+1}{4}} - (y+1) \right)\right)\label{A1} \\
		&=\frac{1}{16}\Sum_{y\in \Fcn}  (1+\chi(y+1)) \left(1+\chi\left(y-y^2\right)\right) \notag \\
		&+\frac{1}{16}\Sum_{y\in \Fcn}  (1+\chi(y+1)) \left(1+\chi\left(y-y^2\right)\right)\chi\left( y\left( y-y^2\right)^{\frac{3^n+1}{4}} - (y+1) \right)\notag \\
		&=\frac{1}{16}(3^n+1) + \frac{1}{16}\Sum_{y\in \Fcn}  (1+\chi(y+1)) \left(1+\chi\left(y-y^2\right)\right)\chi\left( y\left( y-y^2\right)^{\frac{3^n+1}{4}} - (y+1) \right), \notag
	\end{align}
	by Lemmas \ref{chi_lemma quad} and \ref{chi_lemma cubic}. Let $t=y+1$. Then,
	\begin{align*}
		&\Sum_{y\in \Fcn}  (1+\chi(y+1)) \left(1+\chi\left(y-y^2\right)\right)\chi\left( y\left( y-y^2\right)^{\frac{3^n+1}{4}} - (y+1) \right) \\
		&= \Sum_{t\in \Fcn}  (1+\chi(t)) \left(1+\chi\left(1-t^2\right)\right)\chi\left( (t-1)\left( 1-t^2\right)^{\frac{3^n+1}{4}} - t \right) \\
		&= -4+\Sum_{t\in \Fcn \setminus \F_3}  (1+\chi(t)) \left(1+\chi\left(1-t^2\right)\right)\chi\left( (t-1)\left( 1-t^2\right)^{\frac{3^n+1}{4}} - t \right)
	\end{align*}
	Let $t = \frac{u}{1+u^2}$. Then,
	\begin{align*}
		&\Sum_{t\in \Fcn \setminus \F_3}  (1+\chi(t)) \left(1+\chi\left(1-t^2\right)\right)\chi\left( (t-1)\left( 1-t^2\right)^{\frac{3^n+1}{4}} - t \right) \\
		&=\frac{1}{2}\Sum_{u\in \Fcn \setminus \F_3}  \left(1+\chi\left(\Frac{u}{u^2+1}\right)\right) \left(1+\chi\left(\Frac{(1-u^2)^2}{(1+u^2)^2}\right)\right)\chi\left( -\Frac{(u+1)^2}{1+u^2}\left( \Frac{1-u^2}{1+u^2}\right)^{\frac{3^n+1}{2}} - \Frac{u}{1+u^2} \right)\\
		&=\frac{1}{2}\Sum_{u\in \Fcn \setminus \F_3}  \left(1+\chi\left(u(u^2+1)\right)\right) \left(1+\chi\left((1-u^4)^2\right)\right)\chi\left(- \Frac{(u+1)^2(1-u^2)\chi(1-u^4)+u(1+u^2)}{(1+u^2)^2}\right)\\
		&=-\Sum_{u\in \Fcn\setminus \F_3}  \left(1+\chi\left(u(u^2+1)\right)\right) \chi\left( (u+1)^3(1-u)\chi(1-u^4)+u(u^2+1)\right)\\
		&=\Sum_{\substack{u\in \Fcn\setminus \F_3\\ \chi(1-u^4)=1}}  \left(1+\chi\left(u(u^2+1)\right)\right) \chi\left(u^4+u^3-1 \right) - \Sum_{\substack{u\in \Fcn\setminus \F_3\\ \chi(1-u^4)=-1}}  \left(1+\chi\left(u(u^2+1)\right)\right) \chi\left(u^4-u-1 \right)\\
		&=\Sum_{u\in \Fcn\setminus \F_3}  (1+\chi(1-u^4))\left(1+\chi\left(u(u^2+1)\right)\right) \chi\left(u^4+u^3-1 \right) 
	\end{align*}
	where the last equality is from
	\begin{equation*}
		\Sum_{\substack{u\in \Fcn\setminus \F_3\\ \chi(1-u^4)=-1}}  \left(1+\chi\left(u(u^2+1)\right)\right) \chi\left(u^4-u-1 \right) = -\Sum_{\substack{s\in \Fcn\setminus \F_3\\ \chi(1-s^4)=1}}  \left(1+\chi\left(s(s^2+1)\right)\right) \chi\left(s^4+s^3-1 \right)
	\end{equation*}
	when $s=\frac{1}{u}$. Hence, by Lemma \ref{chi cubic lemma} and Lemma \ref{chi hyperelliptic lemma}
	\begin{align*}
		A_1 &= \frac{1}{16}\left( 3^n +1 + \Sum_{y\in \Fcn}  (1+\chi(y+1)) \left(1+\chi\left(y-y^2\right)\right)\chi\left( y\left( y-y^2\right)^{\frac{3^n+1}{4}} - (y+1) \right) \right)\\
		&= \frac{1}{16}\left( 3^n -3  +\Sum_{t\in \Fcn \setminus \F_3}  (1+\chi(t)) \left(1+\chi\left(1-t^2\right)\right)\chi\left( (t-1)\left( 1-t^2\right)^{\frac{3^n+1}{4}} - t \right) \right)\\
		&= \frac{1}{16}\left( 3^n -3 + \Sum_{u\in \Fcn\setminus \F_3}  (1+\chi(1-u^4))\left(1+\chi\left(u(u^2+1)\right)\right) \chi\left(u^4+u^3-1 \right) \right)\\
		&= \frac{1}{16}\left( 3^n +1 + \Sum_{u\in \Fcn}  (1+\chi(1-u^4))\left(1+\chi\left(u(u^2+1)\right)\right) \chi\left(u^4+u^3-1 \right) \right)\\
		&= \frac{1}{16} (3^n +1 +2\Gamma_1 + \Gamma_2 ) 
	\end{align*}

	Let $A_2$ be the number of $b\in \Fcnmul$ satisfying \eqref{BU_eqn r=-2 0010}. Then,
	\begin{align*}
		A_2&=\frac{1}{16}\Sum_{b\in \Fcn} (1-\chi(b)) \left(1-\chi\left(b^{\frac{3^n+1}{4}}-1\right)\right) \left(1-\chi\left(b^{\frac{3^n+1}{4}}-b\right)\right) \left(1+\chi\left( b^{\frac{3^n+1}{4}}\left( b^{\frac{3^n+1}{4}}-b\right)^{\frac{3^n+1}{4}} + (b^{\frac{3^n+1}{4}}-1) \right)\right)\\
		&=\frac{1}{8}\Sum_{\chi(b)=-1} \left(1-\chi\left(b^{\frac{3^n+1}{4}}-1\right)\right) \left(1-\chi\left(b^{\frac{3^n+1}{4}}-b\right)\right) \left(1+\chi\left( b^{\frac{3^n+1}{4}}\left( b^{\frac{3^n+1}{4}}-b\right)^{\frac{3^n+1}{4}} + (b^{\frac{3^n+1}{4}}-1) \right)\right).
	\end{align*}
	Let $y=b^{\frac{3^n+1}{4}}$. Then $b=-y^2$ when $\chi(b)=-1$ and hence 
	\begin{align*}
		A_2&=\frac{1}{16}\Sum_{y\in \Fcn}  (1-\chi(y-1)) \left(1-\chi\left(y+y^2\right)\right) \left(1 + \chi\left( y\left( y+y^2\right)^{\frac{3^n+1}{4}} + (y-1) \right)\right)\\
		&=\frac{1}{16}\Sum_{y\in \Fcn}  (1-\chi(-t-1)) \left(1-\chi\left(t^2-t\right)\right) \left(1 + \chi\left( -t\left( t^2-t\right)^{\frac{3^n+1}{4}} - (t+1) \right)\right) = A_1,
	\end{align*}
	where $t=-y$, and by \eqref{A1}.
	
	Therefore, we finally obtain that 
	\begin{equation*}
		\nu_1 = 4A_1 = \frac{1}{4}(3^n +1 + 2\Gamma_1 + \Gamma_2).
	\end{equation*}
	Applying \eqref{BS_identity}, we get the desired boomerang spectrum. 
	
	By Lemma \ref{chi_lemma inequal}, we have $|\Gamma_1| \le 6 \sqrt{3^n}$ and $|\Gamma_2| \le 6 \sqrt{3^n}$. Hence,
	\begin{equation*}
		\nu_1 \ge \frac{1}{4}\left( 3^n + 1 -2 |\Gamma_1| - |\Gamma_2|\right) \ge \frac{1}{4}\left( 3^n - 18 \cdot 3^{\frac{n}{2}}+1\right).
	\end{equation*}
	A direct computation using SageMath shows that $\frac{1}{4}\left( 3^n - 18 \cdot 3^{\frac{n}{2}}\right)\ge 1$ if $3^n\ge 330 \approx 3^{5.28}$. We confirm that $\beta_{F_{3^n-3}}=0$ when $n=3$, and $\beta_{F_{3^n-3}}=1$ when $n=5$. 
\end{proof}

Table \ref{table BS} shows the boomerang spectrum of $F_{3^n-3}$ given in Theorem \ref{BS_theorem} for $3\le n \le 15$ using SageMath. We also verified via SageMath that the results of Theorem \ref{BS_theorem} are correct, when $3\le n \le 9$.
\begin{table}[htbp]
	\centering
	\begin{tabular}{c c c c}
		\toprule
		$n$ & $\Gamma_1$ & $\Gamma_2$ & $BS_{F_{3^n-3}}$ \\
		\midrule
		3   & $-2$ & $-24$  & $\{v_0=26,\; v_1=0\}$ \\
		5   & $-22$ & $40$  & $\{v_0=182,\; v_1=60\}$ \\
		7   & $250$ & $112$ & $\{v_0=1486,\; v_1=700\}$ \\
		9   & $142$ & $48$  & $\{v_0=14678,\; v_1=5004\}$ \\
		11  & $-1586$  & $792$ & $\{v_0=133454,\; v_1=43692\}$ \\
		13  & $570$ & $-104$ & $\{v_0=1195482,\; v_1=398840\}$ \\
		15  & $-6262$ & $-1184$ & $\{v_0=10765106,\; v_1=3583800\}$ \\
		\bottomrule
	\end{tabular}
	\caption{Boomerang spectrum $BS_{F_{3^n-3}}$ when $n$ is odd, $3 \le n \le 15$.}\label{table BS}
\end{table}

\section{Numerical Results in Characteristic $3$}\label{sec_table}

In Section~\ref{subsec_KKKK25}, we presented the results of \cite{KKKK25}, which characterize power functions $F_r$ with boomerang uniformity at most $2$. 
However, as observed in \cite{LWZ24, KK26, KKKK25}, functions $F_r$ with smaller boomerang uniformity, namely $0$ or $1$, appear more frequently in the case $p=3$ than for other characteristics. 
Several of the experimentally observed cases are theoretically explained in the present paper.
Motivated by this phenomenon, we conducted a computational study of exponents $r$ for which $F_r$ has boomerang uniformity $0$ or $1$ when $p=3$. 
Analogously to the classification of APN power functions in \cite{HRS99}, we summarize the exponents $r$ for which $F_r$ has boomerang uniformity $0$ in Table \ref{table_BU0}, and $F_r$ has boomerang uniformity $1$ in Table \ref{table_BU1}. 
In contrast to the case of boomerang uniformity $0$, determining whether $\beta_{F_r}=1$ requires computing the full BCT, which is computationally much more demanding. Therefore, Table~\ref{table_BU1} is restricted to $n \le 7$.

It is easy to see that if $r_i = rp^i$ and $L_i (x) = x^{p^i}$ where $0< i < n$, then 
\begin{equation*}
	(F_{r} \circ L_i) (x)= (x^{p^i})^r (1+\chi(x^{p^i})) = x^{rp^i}(1+\chi(x)) = F_{r_i}(x),
\end{equation*} 
and hence $F_{r_i}$ is linearly equivalent to $F_r$. Moreover, it is obvious that $F_r  = F_{r+\frac{p^n-1}{2}}$. Therefore, in the tables of this section, we describe cyclotomic cosets modulo $\frac{p^n-1}{2}$ of each class.

\begin{table}[htbp]
	\centering
	\begin{tabular}{c c c c c c}
		\toprule
		\textbf{$n$} & \textbf{$r$} &  Cyclotomic Cosets & $\Max_{b\in \Fcnmul}\delta_{F_r}(1,b)$ & \makecell{Algebraic\\ Degree} & Ref. \\
		\midrule
		\multirow{2}{*}{$3$}
		& 7 & $(7,8,11)$ & 1 & 4 & \makecell{\cite{KK26},\\ Sec. \ref{subsec_r=ZW10}, \ref{subsec_r=2*3^l+1}} \\
		&  12 & $(12,10,4)$ & 1 &  5 & \cite{LWZ24} \\
		\midrule
		\multirow{4}{*}{$5$}
		& 19 & $(19, 57, 50, 29, 87)$ & 1 & 6 & Sec. \ref{subsec_r=2*3^l+1} \\
		& 26 & $(26, 78, 113, 97, 49)$ & 1 & 6 & Sec. \ref{subsec_r=ZW10} \\
		& 61 & $(61, 62, 65, 74, 101)$ & 1 & 6 & \cite{KK26}\\ 
		&  120 & $(120,118,112,94,40)$ & 1 & 9 & \cite{LWZ24} \\
		\midrule
		\multirow{5}{*}{$7$}
		& 55 & $(55, 165, 495, 392, 83, 249, 747)$ & 1 & 8 & Sec. \ref{subsec_r=2*3^l+1} \\
		& 80 & $(80, 240, 720, 1067, 1015, 859, 391)$ & 1 & 8 & Sec. \ref{subsec_r=ZW10} \\
		& 547 & $(547, 548, 551, 560, 587, 668, 911)$ & 1 & 8 & \cite{KK26}\\ 
		& 656 & $(656, 875, 439, 224, 672, 923, 583)$ & 1 & 8 & Sec. \ref{subsec_r=ZW10} \\
		& 1092 & $(1092, 1090, 1084, 1066, 1012, 850, 364)$ & 1 & 13 & \cite{LWZ24} \\
		\midrule
		\multirow{5}{*}{$9$}
		& 163 & $(163, 489, 1467, 4401, 3362, 245, 735, 2205, 6615)$ & 1 &  10 & Sec. \ref{subsec_r=2*3^l+1} \\
		& 242 & $(242, 726, 2178, 6534, 9761, 9601, 9121, 7681, 3361)$  & 1 & 10 & Sec. \ref{subsec_r=ZW10} \\
		& 4921 & $(4921, 4922, 4925, 4934, 4961, 5042, 5285, 6014, 8201)$  & 1 & 10 & \cite{KK26}\\ 
		& 9185 & $(9185, 7873, 3937, 1970, 5910, 7889, 3985, 2114, 6342)$  & 1 & 10 & Sec. \ref{subsec_r=ZW10}\\
		& 9840 & $(9840, 9838, 9832, 9814, 9760, 9598, 9112, 7654, 3280)$ & 1  & 17 & \cite{LWZ24} \\
		\bottomrule
	\end{tabular}
	\caption{$F_r$ with boomerang uniformity $0$ on $\Fcn$ when $n \le 9$.}
	\label{table_BU0}
\end{table}

\begin{table}[htbp]
	\centering
	\begin{tabular}{c c c c c c}
		\toprule
		$n$ & $r$ &  Cyclotomic Cosets & $\Max_{b\in \Fcnmul}\delta_{F_r}(1,b)$ & \makecell{Algebraic\\ Degree} & Ref. \\
		\midrule
		$3$	& 2 & $(2, 6, 5)$ & 2 & 3 & \cite{KKKK25}\\
		\midrule
		\multirow{9}{*}{$5$}
		& 2 & $(2, 6, 18, 54, 41)$ & 2 & 5 & \cite{KKKK25} \\
		& 8 & $(8, 24, 72, 95, 43)$ & 2 & 5 & \\
		& 10 & $(10, 30, 90, 28, 84)$ & 2 & 7 & \\
		& 13 & $(13, 39, 117, 109, 85)$ & 2 & 8 & \\ 
		& 16 & $(16, 48, 23, 69, 86)$ & 3 & 5 & \\
		& 20 & $(20, 60, 59, 56, 47)$ & 2 & 5 & \\
		& 31 & $(31, 93, 37, 111, 91)$ & 2 & 8 & \\ 
		& 67 & $(67, 80, 119, 115, 103)$ & 2 & 8 & Sec. \ref{sec_r=-2} \\
		& 76 & $(76, 107, 79, 116, 106)$ & 2 & 7 & \\
		\midrule
		\multirow{6}{*}{$7$}
		& 2 & $(2, 6, 18, 54, 162, 486, 365)$ & 2 & 7 & \cite{KKKK25} \\
		& 5 & $(5, 15, 45, 135, 405, 122, 366)$ & 2 & 6 & \\
		& 107 & $(107, 321, 963, 703, 1016, 862, 400)$ & 2 & 7 & \\
		& 169 & $(169, 507, 428, 191, 573, 626, 785)$ & 3 & 8 & \\ 
		& 182 & $(182, 546, 545, 542, 533, 506, 425)$ & 2 & 7 & \\ 
		& 1091 & $(1091, 1087, 1075, 1039, 931, 607, 728)$ & 2 & 12 & Sec. \ref{sec_r=-2} \\
		\bottomrule
	\end{tabular}
	\caption{$F_r$ with boomerang uniformity $1$ on $\Fcn$ when $n\le 7$.}
	\label{table_BU1}
\end{table}

For $n=11, 13$, due to computational limitations, we instead searched for exponents $r$ with $\Max_{b\in \Fcnmul}\delta_{F_r}(1,b)=1$, which is described in Table \ref{table_Locally-PN n=11,13}.
Our computational results, obtained while constructing Tables \ref{table_BU0} and \ref{table_BU1}, suggest that, for $n\le 9$, the condition $\delta_{F_r}(1,b)\le 1$ for all $b\in \Fcnmul$ is a necessary and sufficient condition for $F_r$ to have boomerang uniformity $0$, even without the assumption $\gcd(r,p^n-1)\in\{1,2\}$ in Proposition \ref{BU0_prop}. 
Therefore, we also expect that Table \ref{table_Locally-PN n=11,13} provides a complete list of exponents $r$ such that $F_r$ has boomerang uniformity $0$ when $n=11,13$. Finally, we remark that we have completed rigorous proofs for all functions identified with boomerang uniformity $0$ in our computational results.

\begin{table}[htbp]
	\centering
	\scriptsize
	\begin{tabular}{c c c c c}
		\toprule
		$n$ & $r$ &  Cyclotomic Cosets  & \makecell{Algebraic\\ Degree} & Ref. \\
		\midrule
		\multirow{7}{*}{$11$}		
		& 487 & \makecell{(487, 1461, 4383, 13149, 39447, 29768,\\ 731, 2193, 6579, 19737, 59211)} &  12 & Sec. \ref{subsec_r=2*3^l+1} \\
		& 728 & \makecell{(728, 2184, 6552, 19656, 58968, 88331,\\ 87847, 86395, 82039, 68971, 29767)} & 12 & Sec. \ref{subsec_r=ZW10} \\
		& 18980&\makecell{(18980, 56940, 82247, 69595, 31639, 6344,\\ 19032, 57096, 82715, 70999, 35851)} & 12 & Sec. \ref{subsec_r=ZW10}\\ 
		& 44287 & \makecell{(44287, 44288, 44291, 44300, 44327, 44408,\\ 44651, 45380, 47567, 54128, 73811)} & 12 & \cite{KK26}\\
		& 53144 & \makecell{(53144, 70859, 35431, 17720, 53160, 70907,\\ 35575, 18152, 54456, 74795, 47239)} & 12 & Sec. \ref{subsec_r=ZW10}\\
		& 74891 & \makecell{(74891, 47527, 54008, 73451, 43207, 41048,\\ 34571, 15140, 45420, 47687, 54488)} & 12 & Sec. \ref{subsec_r=ZW10}\\
		& 88572 & \makecell{(88572, 88570, 88564, 88546, 88492, 88330,\\ 87844, 86386, 82012, 68890, 29524)} & 21 & \cite{LWZ24} \\
		\midrule
		\multirow{8}{*}{$13$}
		& 1459 & \makecell{(1459, 4377, 13131, 39393, 118179, 354537,\\ 266450, 2189, 6567, 19701, 59103, 177309, 531927)} &  14 & Sec. \ref{subsec_r=2*3^l+1} \\
		& 2186 & \makecell{(2186, 6558, 19674, 59022, 177066, 531198, 796433,\\ 794977, 790609, 777505, 738193, 620257, 266449)} & 14 & Sec. \ref{subsec_r=ZW10} \\
		& 398581 & \makecell{(398581, 398582, 398585, 398594, 398621, 398702,\\ 398945, 399674, 401861, 408422, 428105, 487154, 664301)} & 14 & \cite{KK26}\\
		& 408302 &\makecell{(408302, 427745, 486074, 661061, 388861, 369422,\\ 311105, 136154, 408462, 428225, 487514, 665381, 401821)} & 14 & Sec. \ref{subsec_r=ZW10}\\ 
		& 490058 &\makecell{(490058, 673013, 424717, 476990, 633809, 307105,\\ 124154, 372462, 320225, 163514, 490542, 674465, 429073)} & 14 & Sec. \ref{subsec_r=ZW10}\\ 
		& 744017 & \makecell{(744017, 637729, 318865, 159434, 478302, 637745,\\ 318913, 159578, 478734, 639041, 322801, 171242, 513726)} & 14 & Sec. \ref{subsec_r=ZW10}\\
		& 778181 & \makecell{(778181, 740221, 626341, 284701, 56942, 170826,\\ 512478, 740273, 626497, 285169, 58346, 175038, 525114)} & 14 & Sec. \ref{subsec_r=ZW10}\\
		& 797160 & \makecell{(797160, 797158, 797152, 797134, 797080, 796918,\\ 796432, 794974, 790600, 777478, 738112, 620014, 265720)} & 25 & \cite{LWZ24} \\
		\bottomrule
	\end{tabular}
	\caption{$F_r$ with $\Max_{b\in \Fcnmul}\delta_{F_r}(1,b)=1$ when $n=11,13$.}
	\label{table_Locally-PN n=11,13}
\end{table}

Table~\ref{table_ZW10 exponents} provides a detailed correspondence between the exponents $r$ identified in our numerical search and the theoretical APN classes analyzed in Section~\ref{sec_APN}. This table serves not only to categorize the binomials $F_r$ with boomerang uniformity 0 as discussed in Section~\ref{subsec_r=ZW10}, but also to validate the effectiveness of our generalized parametrization for APN power functions in Section~\ref{sec_APN}. Notably, for larger fields such as $n=11$ and $n=13$, several exponents are not covered by the more specific constructions in Corollaries~\ref{Led12_corollary} and \ref{New APN_corollary}, yet they are successfully accounted for by the broader framework of Proposition~\ref{ZW10_general_prop}.
For example, when $n=11$, the exponent $r=74891$ arises from Proposition~\ref{ZW10_general_prop}(ii) with $(m,u,n)=(4,3,11)$, which yields
$$
r = \frac{1-3^{(n-u)m}}{1+3^m}
= \frac{1-3^{32}}{1+3^4}
= -22597807181120 \equiv 163464
= 74891 + \frac{3^n-1}{2} \pmod{3^n-1}.
$$

\renewcommand{\arraystretch}{1.1}
\begin{table}
	\centering
	\begin{tabular}{c c c}
		\toprule
		$n$ & $r$ &   Source \\
		\midrule
		5 &  26 & Remark \ref{New APN_corollary remark} (i)\\
		7 &  80 & Remark \ref{New APN_corollary remark} (i)\\
		7 &  656 & $\ell=2$ in Corollary \ref{Led12_corollary}\\
		9 &  242 & Remark \ref{New APN_corollary remark} (i)\\
		9 & 9185 & $\ell=2$ in Corollary \ref{New APN_corollary}\\
		11 & 728 & Remark \ref{New APN_corollary remark} (i)\\
		11 & 18980 & $\ell=2$ in Corollary \ref{Led12_corollary}\\
		11 & 53144 & $m=2$, $u=6$ in Proposition \ref{ZW10_general_prop} (i)\\
		11 & 74891 & $m=4$, $u=3$ in Proposition \ref{ZW10_general_prop} (ii) \\
		13 & 2186 & Remark \ref{New APN_corollary remark} (i)\\
		13 & 408302 & $m=4$, $u=10$ in Proposition \ref{ZW10_general_prop} (i)\\
		13 & 490058 & $m=5$, $u=8$ in Proposition \ref{ZW10_general_prop} (i)\\
		13 & 744017 & $m=2$, $u=7$ in Proposition \ref{ZW10_general_prop} (ii)\\
		13 & 778181 & $\ell=2$ in Corollary \ref{New APN_corollary}\\
		\bottomrule
	\end{tabular}
	\caption{Exponents $r$ yielding binomials $F_r$ with boomerang uniformity $0$ in Section~\ref{subsec_r=ZW10}, together with their realization within APN exponent constructions described in Section~\ref{sec_APN}.}
	\label{table_ZW10 exponents}
\end{table}

\section{Conclusion}\label{sec_con}

In this paper, we investigated differential and boomerang properties of binomial functions of the form $F_r(x)=x^r(1+\chi(x))$
over finite fields of characteristic $3$. Our main focus was on identifying exponents $r$ for which $F_r$ exhibits very low boomerang uniformity.
We showed that $F_r$ attains boomerang uniformity $0$ for two classes of exponents, namely those arising from APN exponents in~\cite{ZW10} and the class $r=2\cdot 3^{\frac{n-1}{2}}+1$. We also proved that $F_r$ has boomerang uniformity $1$ when $r=3^n-3$ for $n\ge 5$ odd, and determined its boomerang spectrum. These results demonstrate that, in characteristic $3$, the boomerang uniformity of $F_r$ can be strictly smaller than the general bound obtained in~\cite{KKKK25}.
In addition, we provided a detailed analysis of APN exponents from~\cite{ZW10}. We established an explicit parametrization of such exponents (Proposition \ref{ZW10_general_prop}) and showed that, in characteristic $3$, this construction accounts for all APN exponents arising from~\cite{ZW10}. Our computational results further indicate that, for $n\le 13$, all APN power functions not listed in Table~1 of~\cite{BP25} can be explained by this parametrization.
Moreover, our numerical results suggest that, in characteristic $3$, the condition for a binomial $F_r$ to be locally-PN is both necessary and sufficient for it to have boomerang uniformity $0$.
Finally, we conducted an exhaustive search for small values of $n$, which supports our theoretical findings and illustrates the distribution of exponents yielding low boomerang uniformity.

The results of this paper suggest several directions for future research. 
First, our numerical results indicate that the locally-PN property appears to be closely related to boomerang uniformity $0$ for binomials $F_r$. Establishing a precise equivalence between these two properties remains an interesting open problem.
Second, while our exhaustive search identifies all exponents $r$ yielding boomerang uniformity $0$ or $1$ for small values of $n$, several cases with boomerang uniformity $1$ appearing in Table~\ref{table_BU1} are not yet explained by infinite classes. Extending these sporadic examples to infinite families would be a natural continuation of this work.

\bigskip
\noindent\textbf{Acknowledgments}: This work was supported by the National Research Foundation
of Korea (NRF) grant funded by the Korea government (MSIT) (No. RS-2021-NR061794). Soonhak Kwon was supported by Basic Science Research Program through the National Research Foundation of Korea (NRF) funded by the Ministry of Education (No. RS-2019-NR040081).

\end{document}